\newtheorem{theorem}{Theorem}
\newtheorem{lemma}[theorem]{Lemma}
\newtheorem{claim}[theorem]{Claim}
\newtheorem{proposition}[theorem]{Proposition}
\theoremstyle{definition}
\newtheorem{remark}[theorem]{Remark}
\newtheorem{example}[theorem]{Example}
\DeclareMathOperator{\vcdim}{VC-dim}
\DeclareMathOperator{\C}{\mathcal{C}}
\DeclareMathOperator{\B}{\mathcal{B}}
\DeclareMathOperator{\diam}{diam}
\newcommand{\covectors}{\ensuremath{\mathcal{L}}}
\DeclareMathOperator{\Ima}{Im}
\DeclareMathOperator{\cB}{cB}
\newcommand{\ZZ}{\mathbb Z}
\newcommand{\bI}{\mathbf I}
\newcommand{\bR}{\mathbf R}
\newcommand{\bS}{\mathbf S}
\newcommand{\bH}{\mathbf H}
\begin{document}
\title[Sample compression schemes for balls in graphs]{Sample compression schemes for balls in graphs}

\thanks{An extended abstract of parts of this paper was presented in~\cite{MFCS_version}.}

\author[J. Chalopin]{J\'{e}r\'{e}mie Chalopin}
\address{Aix-Marseille Universit\'{e}, Universit\'{e} de Toulon, CNRS, LIS, Marseille, France}
\email{jeremie.chalopin@lis-lab.fr}

\author[V. Chepoi]{Victor Chepoi}
\address{Aix-Marseille Universit\'{e}, Universit\'{e} de Toulon, CNRS, LIS, Marseille, France}
\email{victor.chepoi@lis-lab.fr}
\author[F. Mc~Inerney]{Fionn {Mc Inerney}}
\address{Algorithms and Complexity Group, TU Wien, Vienna, Austria}
\email{fmcinern@gmail.com}
\author[S. Ratel]{S\'{e}bastien Ratel}
\address{Aix-Marseille Universit\'{e}, Universit\'{e} de Toulon, CNRS, LIS, Marseille, France}
\email{ratel.seb@gmail.com}
\author[Y. Vax\`{e}s]{Yann Vax\`{e}s}
\address{Aix-Marseille Universit\'{e}, Universit\'{e} de Toulon, CNRS, LIS, Marseille, France}
\email{yann.vaxes@lis-lab.fr}

\date{}

\begin{abstract}
One of the open problems in machine learning is whether any set-family of VC-dimension $d$ admits a sample compression scheme of size~$O(d)$. In this paper, we study this problem for balls in graphs. 
For a ball $B=B_r(x)$ of a graph $G=(V,E)$, a realizable sample for $B$ is a signed subset
$X=(X^+,X^-)$ of $V$ such that $B$ contains $X^+$ and is disjoint from $X^-$. A proper sample compression scheme of size $k$ consists of a compressor and a reconstructor. The compressor maps any realizable sample $X$  to a subsample $X'$ of size at most $k$.
The reconstructor maps each such subsample $X'$ to a ball $B'$ of $G$ such that $B'$ includes $X^+$ and is disjoint from $X^-$. 

For balls of arbitrary radius $r$, we design proper labeled sample compression schemes of size $2$ for trees, of size $3$ for cycles,
of size $4$ for interval graphs, of size $6$ for trees of cycles, and of size $22$
for cube-free median graphs.
For balls of a given radius, we design proper labeled sample
compression schemes of size $2$ for trees and of size $4$ for interval graphs.
We also design approximate sample compression schemes of size 2 for balls of $\delta$-hyperbolic graphs.
\end{abstract}

\maketitle

{\keywords{Keywords: Sample Compression Schemes, Balls in Graphs, VC-dimension}}

\subjclass{MSC codes: 68Q32, 68R10, 05C12}

\section{Introduction}

Sample compression schemes were introduced by Littlestone and Warmuth~\cite{LiWa}, 
and have been vastly studied in the literature due to their importance in computational machine learning.
Roughly, ~
a sample compression scheme consists of a compressor $\alpha$ and a reconstructor $\beta$, and the aim is to compress data as much as possible, such that data coherent with the original data can be reconstructed from the compressed data. For balls in graphs,  sample compression schemes of size $k$ can be defined as follows. Given a ball $B=B_r(x)$ of a graph $G=(V,E)$, a realizable sample for $B$ is a signed subset
$X=(X^+,X^-)$ of $V$ such that 
$X^+$ is included in $B$, and 
$X^-$ is disjoint from $B$. Given a realizable sample $X$, $X$ is compressed  to a subsample $\alpha(X)\subseteq X$
of size at most $k$, where the size of a (sub)sample is the number of signed elements it contains. The reconstructor $\beta$  takes $\alpha(X)$ as an input and returns $\beta(\alpha(X))$, a subset $B'$ of vertices of $G$ that is consistent with $X$, {\it i.e.},
$X^+$ is included in $B'$, and
$X^-$ is disjoint from $B'$. If $B'$ is always a ball of $G$, then the compression scheme is proper. If $X^+=B$ and $X^-=V\setminus B$, then
$\beta(\alpha(X))$ must coincide with $B$. Note that a proper sample compression scheme of size $k$ for the family of all balls of $G$ yields a sample compression scheme of size $k$ for any subfamily of balls
({\it e.g.}, for balls of a fixed radius $r$), but this scheme is no longer proper.
Sample compression schemes are labeled if $\beta$ knows the labels of the elements of $\alpha(X)$, and are unlabeled otherwise (abbreviated LSCS and USCS, resp.).
The Vapnik-Chervonenkis dimension (VC-dimension) of a set system was
introduced by Vapnik and Chervonenkis \cite{VaCh} as a complexity measure of set systems. VC-dimension is central in PAC-learning, and  
is important in combinatorics and discrete geometry. Floyd and Warmuth~\cite{FlWa} asked whether
any set-family of VC-dimension $d$ has a sample compression scheme of size~$O(d)$.
This remains one of the oldest open problems in machine learning.

In this paper, we consider the family of balls in graphs, which is as general as the sample compression conjecture. Indeed, the sample compression conjecture for set families in general is equivalent to the same conjecture restricted to the family of balls of radius 1 on split graphs in which samples only contain vertices in the clique, and the centers of the unit balls are in the stable set. Balls in graphs also constitute an important topic in graph theory, and moreover, their VC-dimension has often been considered in the literature (see, {\it e.g.},~\cite{BeDaFo,BouTh,ChEsVa,DuHaVi,PiSi}). Families of balls in Euclidean spaces have bounded VC-dimension \cite{Du1} and they constitute an important subclass of Dudley concept classes, which arise as sets of positivity for linearly parameterized functions and have been investigated in PAC-learning \cite{BDLi,Fl}. Finally,  balls in general metric spaces occur in the definition of metric entropy and metric capacity \cite{KoTi}. For relationships of these two notions with VC-dimension and PAC-learning, see~\cite{Ku}.

\subsection*{VC-dimension of balls}
The VC-dimension of the balls of radius $r$ of a graph not containing $K_{n+1}$ as a minor is at most $n$~\cite{ChEsVa}.
This result was extended to arbitrary balls in~\cite{BouTh}. Hence, the VC-dimension of balls of planar graphs is at most 4 (2 for trees and 3 for trees of cycles), and the VC-dimension of balls of a chordal graph $G$ is at most its clique number $\omega (G)$. The VC-dimension of balls of interval graphs was shown to be at most~2 in~\cite{DuHaVi}. The VC-dimension of balls of cube-free median graphs is unknown, but we prove that it is at least 4 in Example~\ref{ex:cube-free_med_VCdim}.
Finally, note that the VC-dimension of $d$-dimensional balls in $\mathbb{R}^d$ is $d+1$~\cite{Du1}. This result was generalized in \cite{Dudley,WeDu} to all Dudley classes.

\subsection*{Our results}
In this paper, we design proper sample compression schemes of small size 
for 
the family of  balls  of a graph $G$.
We investigate this problem for different graph classes. For trees, we exhibit proper USCS of size $2$ for all balls, and proper LSCS of size $2$ for balls of equal radius. We also give proper USCS of size $3$ for all balls in cycles.
For trees of cycles, we exhibit proper LSCS 
of size $6$ 
for all balls.
Then, we design proper LSCS 
of size $22$ 
for all balls of cube-free
median graphs.
We also construct proper LSCS 
of size $4$ 
for all balls of interval graphs.
This is followed by some partial results for split graphs and planar graphs.
Mainly, for any split graph $G$, we give proper LSCS of size $\omega(G)$ for all balls of $G$, and we design proper LSCS of size $4$ for balls of radius~$1$ of planar graphs.
Finally, we define $(\rho,\mu)$-approximate
proper sample compression schemes, and design ($2\delta,3\delta)$-approximate LSCS 
of size $2$ for $\delta$-hyperbolic graphs.
%
%

\subsection*{Related work}  Floyd and Warmuth~\cite{FlWa} proved that, for any concept class
of VC-dimension $d$, any LSCS 
has size at least $\frac{d}{5}$, and that,
for some maximum classes of VC-dimension $d$, they have size at least $d$.
Neylon~\cite{Neylon} proved that the concept class of positive halfspaces in $\mathbb{R}^2$ (which has VC-dimension~$2$) does not admit proper USCS of size~$2$.
Later, P\'{a}lv\"{o}lgyi and Tardos~\cite{PaTa} proved that some other concept classes of VC-dimension $2$ do not admit USCS of size at most $2$.
On the positive side, it was shown by Moran and Yehudayoff~\cite{MoYe} that LSCS of size $O(2^d)$ exist
(their schemes are not proper).  For particular concept classes, better results are known.
Floyd and Warmuth~\cite{FlWa} designed LSCS of size $d$ for regions in arrangements of central hyperplanes in ${\mathbb R}^d$. Ben-David and Litman  \cite{BDLi} obtained USCS
of size $d$ for regions in arrangements of affine hyperplanes in ${\mathbb R}^d$.
In particular, Ben-David and Litman~\cite{BDLi}
gave USCS of size $d$ for Dudley classes (see~\cite{BDLi,Dudley}). This yields USCS of size $d$ for balls in $\mathbb{R}^{d-1}$. 
Helmbold, Sloan, and Warmuth \cite{HeSlWa} (implicitly) constructed USCS
of size $d$ for intersection-closed concept classes.
Moran and Warmuth~\cite{MoWa} designed proper LSCS
of size $d$ for ample classes. Chalopin et al.~\cite{ChChMoWa} designed  USCS
of size $d$ for maximum families. They also combinatorially characterized USCS
for ample classes via the existence of
\emph{unique sink orientations} of their graphs. However, the existence of such orientations is open.
Chepoi, Knauer, and Philibert \cite{ChKnPh} extended the result of~\cite{MoWa}, and designed proper LSCS of size $d$ for concept classes defined by Complexes of Oriented Matroids (COMs). 
COMs were introduced in~\cite{BaChKn} as a natural common generalization of ample classes and Oriented Matroids \cite{BjLVStWhZi}.

\section{Definitions}\label{sec:definitions}
\subsection*{Concept classes and sample compression schemes}
Let $V$ be a non-empty finite set. Let $\C\subseteq 2^V$  be a family of subsets (also called a \emph{concept class}) of $V$. The {\it VC-dimension}
$\vcdim(\C)$ of $\C$ is the size of a largest set $Y\subseteq V$ \emph{shattered} by $\C$, {\it i.e.},
such that $\{ C\cap Y: C\in
\C\}=2^{Y}$. In machine learning,  a \emph{(labeled) sample}
is a set $X=\{(x_1,y_1),\ldots,(x_m,y_m)\}$, where $x_i\in V$ and $y_i\in\{-1,+1\}$. 
To $X$ 
is associated  the unlabeled sample
$\underline{X}=\{x_1,\ldots,x_m\}$.
A sample $X$ is \emph{realizable by a concept} $C$  if $y_i=+1$ if $x_i\in C$, and $y_i=-1$ if $x_i\notin C$.
A sample $X$ is \emph{realizable by a concept class} $\C$ if $X$ is realizable  by some $C\in \C$.

We adopt the language of sign maps and sign vectors from 
\cite{BjLVStWhZi}.
Let $\covectors$ be a  {\it set of sign vectors}, {\it i.e.},
maps from $V$ to $\{\pm 1,0\} := \{-1,0,+1\}$. The elements of $\covectors$ are
also called \emph{covectors}. For $X \in \covectors$, let
$X^+ := \{v\in V: X_v=+1\}$ and $X^-:= \{v\in V:
X_v=-1\}$. $\underline{X} = X^-\cup X^+$ 
is called the \emph{support} of $X$, and  its complement  $X^0 := V\setminus
\underline{X} = \{v\in V: X_v=0\}$ the \emph{zero set} of $X$.
Since $X^0=V\setminus (X^-\cup X^+)$, we will view any sample $X$ as $X^-\cup X^+$. Let $\preceq$ be the product ordering
on $\{ \pm 1,0\}^V$ relative to the ordering of signs with $0 \preceq -1$  and $0 \preceq +1$. 
Any concept class $\C\subseteq 2^V$ can be viewed as a set of sign vectors of $\{ \pm 1\}^V$:
for any $C\in \C$ we consider the sign vector $X(C)$, where $X_v(C)=+1$ if $v\in C$ and $X_v(C)=-1$ if  $v\notin C$.
For simplicity, we will consider $\C$ as a family of sets
and as a set of  $\{ \pm 1\}$-vectors. 
We now define sample compression schemes.  This way of presenting them seems novel.
From the definition, it follows that  a sample $X$ is just a  $\{ \pm 1,0\}$-sign vector. Given a concept class $\C\subseteq 2^V$ and $C\in \C$,
the set of samples  realizable by $C$ consists of  all covectors
 $X\in \{ \pm 1,0\}^V$ such that $X\preceq C$.  We denote by  $\downarrow\C$ the set of all samples realizable by $\C$.


A \emph{proper labeled sample compression scheme} (\emph{proper LSCS}) of size $k$ for a concept class
$\C \subseteq \{ \pm 1\}^V$ is defined by a \emph{compressor} 
$\alpha: \{ \pm 1,0\}^V \to \{ \pm 1,0\}^V$ and
a \emph{reconstructor} 
$\beta: \{ \pm 1,0\}^V  \to \C$ such that, for any
realizable sample $X\in \downarrow\C$, 
$\alpha(X)\preceq X\preceq \beta(\alpha(X))$ and $|\underline{\alpha}(X)|\le k$,
where $\preceq$ is the order between sign vectors defined above, and $\underline{\alpha}(X)$ is the support of the subsample of the sign vector $X$.
Hence, $\alpha(X)$ is a signed vector with a support of size at most $k$ such that $\alpha(X)\preceq X$, and
$\beta(\alpha(X))$  is a concept $C$ of $\C$ viewed as a sign vector. It suffices to define the map $\alpha$ only on 
$\downarrow\C$, and the map $\beta$ only on $\Ima(\alpha) := \alpha(\downarrow\C)$. The condition $X\preceq \beta(\alpha(X))$
is equivalent to the condition $\beta(\alpha(X))|\underline{X}= X$, which means that the restriction of the
concept $\beta(\alpha(X))$ to the support of $X$ coincides with the sign vector $X$.
\emph{Proper unlabeled sample compression schemes} (\emph{proper USCS}) are defined analogously, only that $\alpha(X)$ is not
a signed vector, but a subset of size at most $k$ of the support of $X$.
Formally, a proper unlabeled sample compression scheme of size $k$
for a concept class  $\C \subseteq 2^{V}$ is defined by the functions
$\alpha: \{ \pm 1,0\}^V \to 2^V$ and $\beta: 2^V \to \C$ such that, for any
realizable sample $X\in \downarrow\C$,
$\alpha(X)\subseteq \underline{X},$ $X\preceq \beta(\alpha(X))$, and $|\alpha(X)|\le k.$
For graphs, any preprocessing on the input graph $G$, such as a labeling or an embedding of $G$, is permitted and known to both the compressor and the reconstructor. As in, {\it e.g.},~\cite{LiWa,MoYe}, information, like representing the support as a vector with coordinates, is also permitted, and when we use such information, we refer to $\alpha$ and $\beta$ as vectors rather than maps.
Lastly, in our schemes, the reconstructor returns the empty set when $X^+=\varnothing$, and thus, one may consider that our schemes are not proper. We note that in all of the LSCS for the family of balls of arbitrary radius we exhibit in this paper, we could simply choose an ordering on the vertices of the graph $G=(V,E)$, and put into $\alpha(X)$ a single vertex $z\in X^-$ such that its successor $z'$ in the ordering does not belong to $X^-.$ Then, the reconstructor returns a ball $B_0(z')$ that does not intersect $X=X^-$ by the choice of $z.$ However, to avoid additional complications for such degenerate cases, we make use of the empty set.
\subsection*{Graphs}
Every graph $G=(V,E)$ in this paper is
simple and connected. 
The \emph{distance} $d(u,v):=d_G(u,v)$ between two vertices $u$ and $v$ of a graph $G$ is the length of a $(u,v)$-shortest path. The \emph{interval}
$I(u,v)$ is the set of vertices contained in $(u,v)$-shortest paths. 
A set $S$ is \emph{gated} if, for any vertex $x\in V$, there is a vertex $x'\in S$ (the \emph{gate} of $x$, with $x'=x$ if $x\in S$) such that $x'\in I(x,y)$ for any $y\in S$.
A \emph{median} of a triplet $u,v,w$ is any vertex in 
$I(u,v)\cap I(v,w)\cap I(w,u)$. A graph $G$ is \emph{median} \cite{BaCh_survey} if any triplet of vertices
$u,v,w$ has a unique median.   For any vertex $x\in V$ and any integer $r\geq 0$,
the \emph{ball of radius $r$ centered at $x$} is the
set $B_r(x) := \{ v\in V: d(v,x)\le r\}$. The unit ball $B_1(x)$ is usually denoted by $N[x]$ and called the \emph{closed neighborhood of $x$}.
The \emph{sphere} of radius $r$ centered at $x$ is the set $S_r(x)=\{ z\in V: d(z,x)=r\}$. Let also $\cB_r(u)=V\setminus B_r(u)$.
Two balls $B_{r_1}(x)$ and $B_{r_2}(y)$ are \emph{distinct} if $B_{r_1}(x)$ and $B_{r_2}(y)$ are distinct as sets.
We denote by $\B(G)$ the set of all distinct balls of $G$, and by $\B_r(G)$ the set of all distinct balls of radius $r$ of $G$.
%
For a subset $Y\subseteq V$, we call $\diam(Y)=\max\{ d(u,v): u,v\in Y\}$ the \emph{diameter} of $Y$, and we call any pair $u,v\in Y$ such that $d(u,v)=\diam(Y)$ a \emph{diametral pair} of $Y$.

\section{Trees}\label{sec:trees}

It is well known that balls in trees are gated, the family of balls has VC-dimension 2, and trees are median graphs.
For simplicity, we first consider metric trees $T$, {\it i.e.}, each edge of $T$ is homeomorphic to the segment $[0,1]$ (combinatorial trees are treated later). Indeed, metric trees will serve as a warm-up to familiarize the reader with the problem, and they cover 
the case  when the diameter of $X^+$ in a combinatorial tree is even. 

\subsection{Proper USCS for
\texorpdfstring{$\B(T)$}{ℬ(T)} for metric trees}


%
%

First, let $X$ be a realizable sample for the family $\B(T)$. 
Define $\alpha(X)$ to be the sign map
such that $\alpha(X)$ is any diametral pair 
of the set $X^+$ if $|X^+|\geq 2$, and $\alpha(X)=X^+$ otherwise. For each pair of vertices $\{u,v\}$ of $T$, let $\beta(\{ u,v\})$ be any ball $B$ of $T$ having $u,v$ as a diametral pair.
The center of $B$ is in the middle of the $(u,v)$-path, and the radius of $B$ is $\frac{1}{2}d(u,v)$. Also, let $\beta(\{u\})$ be $B_0(u)$ and $\beta(\varnothing)$ be the empty ball.

\begin{proposition} \label{USCS-trees}
For any tree $T=(V,E)$, the  pair  $(\alpha,\beta)$ of maps defines a proper unlabeled sample compression scheme
of size 2 for $\B(T)$.
\end{proposition}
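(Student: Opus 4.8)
The plan is to verify the three defining conditions of a proper USCS of size $2$: that $\alpha(X)$ is a subset of $\underline{X}$ of size at most $2$, that $\beta(\alpha(X))$ is always a genuine ball of $T$, and that $X \preceq \beta(\alpha(X))$, i.e., $X^+ \subseteq \beta(\alpha(X))$ and $X^- \cap \beta(\alpha(X)) = \varnothing$. The first two are immediate from the definitions: $\alpha(X)$ is either a diametral pair of $X^+$, a singleton, or $\varnothing$, and $\beta$ sends each such set to a ball of $T$ (a diametral pair $u,v$ has the midpoint of the $(u,v)$-path as center and $\frac12 d(u,v)$ as radius; singletons and $\varnothing$ are handled explicitly). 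So the real content is the consistency condition $X \preceq \beta(\alpha(X))$.

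For consistency, I would split into cases according to $|X^+|$. If $X^+ = \varnothing$ or $X^+ = \{u\}$, then $\beta(\alpha(X))$ is $\varnothing$ or $B_0(u)$; since $X$ is realizable by some ball $B$ of $T$, and $B \cap X^- = \varnothing$, the degenerate ball still contains $X^+$ and misses $X^-$ (here one uses that a ball realizing $X$ exists, so in particular $u \notin X^-$). The main case is $|X^+| \ge 2$: let $\{u,v\}$ be the chosen diametral pair of $X^+$, and let $B' = \beta(\{u,v\}) = B_\rho(c)$ where $c$ is the midpoint of the $(u,v)$-path and $\rho = \frac12 d(u,v)$.

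First I would show $X^+ \subseteq B'$. This is the key geometric step: in a (metric) tree, for the midpoint $c$ of the geodesic between $u$ and $v$, every point $w$ with $d(u,w) \le d(u,v)$ and $d(v,w) \le d(u,v)$ satisfies $d(c,w) \le \frac12 d(u,v)$. The cleanest argument: let $m$ be the median of $u,v,w$ (trees are median graphs), which lies on the $(u,v)$-path; then $d(u,v) = d(u,m) + d(m,v)$ and $d(w,c) = d(w,m) + d(m,c)$ if $m$ lies between $w$ and $c$, or can be bounded using $d(w,m) = \tfrac12(d(u,w)+d(v,w)-d(u,v))$ together with $d(m,c) = |d(u,m) - \rho|$. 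Since $u,v$ is a diametral pair of $X^+$, for any $w \in X^+$ we have $d(u,w), d(v,w) \le d(u,v)$, and a short computation with these two identities gives $d(w,c) \le \rho$, so $w \in B'$. I expect this inequality — showing the diametral pair's midpoint-ball captures all of $X^+$ — to be the main obstacle, though in a tree it reduces to elementary manipulation of distances along a path.

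Finally I would show $X^- \cap B' = \varnothing$. Suppose for contradiction that some $z \in X^-$ lies in $B'$, i.e., $d(z,c) \le \rho$. Since $X$ is realizable, there is a ball $B = B_R(x)$ of $T$ with $X^+ \subseteq B$ and $z \notin B$, so $d(x,z) > R$ while $d(x,u), d(x,v) \le R$. I would use the tree structure: the geodesics from $x$ to $u$ and from $x$ to $v$ meet the $(u,v)$-path, and in fact $x$ has a gate $g$ on the $(u,v)$-path with $d(x,u) = d(x,g) + d(g,u)$ and $d(x,v) = d(x,g)+d(g,v)$, whence $d(x,u) + d(x,v) = 2d(x,g) + d(u,v) \le 2R$, giving $d(x,g) \le R - \rho$. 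Then $d(x,z) \le d(x,g) + d(g,c) + d(c,z)$; bounding $d(g,c) \le \rho$ (both $g$ and $c$ are on the $(u,v)$-path, which has length $2\rho$ — here I would need $d(g,c)\le\rho$, which holds because... actually one should be more careful and instead route through $c$ directly) leads to $d(x,z) \le R$, contradicting $z \notin B$. The careful version: since $c$ is the midpoint, $d(g,c) \le \rho$ automatically fails in general, so instead I would argue $z \in B'$ forces $z$ close to the $(u,v)$-path and then show $d(x,z) \le \max\{d(x,u),d(x,v)\} \le R$ using that in a tree the function $w \mapsto d(x,w)$ restricted to a geodesic is convex and $c$ lies on $I(u,v)$. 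This contradiction completes the proof.
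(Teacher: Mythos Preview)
Your treatment of $X^+ \subseteq B'$ is correct and matches the paper's: the median of $u,v,w$ on $I(u,v)$, together with the diametral-pair hypothesis, gives $d(c,w) \le \rho$.

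The $X^-$ argument, however, has a genuine gap. You correctly abandon the bound $d(g,c) \le \rho$, but your fallback --- convexity of $w \mapsto d(x,w)$ along $I(u,v)$ --- only yields $d(x,c) \le \max\{d(x,u),d(x,v)\} \le R$, and combining this with $d(c,z) \le \rho$ gives $d(x,z) \le R + \rho$, not $R$. The point is that $z$ need not lie on $I(u,v)$, so convexity along that geodesic says nothing directly about $d(x,z)$. Your target inequality $d(x,z) \le \max\{d(x,u),d(x,v)\}$ is in fact true, but it needs an extra step.

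The paper closes this gap as follows. Let $t$ be the median of $u,v,z$; say without loss of generality $t \in I(c,v)$. From $d(c,z) = d(c,t) + d(t,z) \le \rho$ and $d(t,v) = \rho - d(c,t)$ one gets $d(t,z) \le d(t,v) \le \rho \le d(t,u)$. Now let $g$ be the gate of the realizing centre $x$ in $I(u,v)$ and split on its position: if $g \in I(u,t)$ then $d(x,z) = d(x,t) + d(t,z) \le d(x,t) + d(t,v) = d(x,v) \le R$; if $g \in I(t,v)$ then $d(x,z) = d(x,t) + d(t,z) \le d(x,t) + d(t,u) = d(x,u) \le R$. Either way $z \in B_R(x)$, contradicting $z \in X^-$. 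This median-plus-case-split is the missing ingredient; once you insert it, your proof coincides with the paper's.
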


\begin{proof} Let $X$ be a realizable sample for a ball $B_{r'}(y)$. If $|X^+|<2$, then $\alpha(X)=X^+$ and $\beta(\alpha(X))=X^+$, so assume that $|X^+|\geq 2$. Let $\{ u,v\}$ be a diametral pair of $X^+$, and
let $B_{r}(x)$ be the ball of $T$ returned as $\beta(\{u,v\})$. Thus, $r=\frac{1}{2}d(u,v)$. We assert that $B_{r}(x)$ is consistent with $X$. 
Otherwise, either there exists $w'\in B_{r}(x)\cap X^-$ or $w''\in X^+\setminus B_{r}(x)$. First, let there exist $w'\in B_{r}(x)\cap X^-$.
Suppose, without loss of generality, that the median  $t$ of the triplet $u,v,w'$ belongs to $I(x,v)$.
Since $w'\in B_r(x)$, $d(x,w')\le r=d(x,v)$, and therefore, $d(t,w')\le d(t,v)\le r$. Note also that $d(t,w')\le d(t,u)$ because $d(t,w')\le r$ and $d(t,u)\ge r$.
Let $z$ be the median of the triplet $u,v,y$. If $z\in I(u,t)$, then
since $d(t,w')\le d(t,v)$ and $v\in B_{r'}(y)$, we conclude that $w'\in B_{r'}(y)$, contrary to the assumption that  $w'\in X^-$. Thus, $z\in I(v,t)$.
Since $t$ is the median of $u,w',y$, and $d(t,w')\le d(t,u)$, we conclude that $d(y,w')\le d(y,u)\le r'$, contrary to the assumption that $w'\in X^-$.
Now, let there exist $w''\in X^+\setminus B_{r}(x)$. Again, let $t$ be the median of $u,v,w''$, and suppose that $t\in I(x,v)$. Since $w''\notin B_{r}(x)$,
we conclude that $d(t,w'')>d(t,v)$.  This implies that $d(u,w'')>d(u,v)$. Since $w''\in X^+$, this contradicts that $\{ u,v\}$ is a diametral pair of $X^+$.
\end{proof}

\subsection{Proper LSCS for
\texorpdfstring{$\B(T)$}{ℬ(T)} for combinatorial trees}

We present a proper labeled sample compression scheme for the family of
balls of a fixed (combinatorial) tree.

Consider a realizable sample $X$ for the family $\B(T)$ such that
$|X^+| \geq 2$. Consider a diametral pair $\{u^+,v^+\}$ of $X^+$, and let
$r = \left\lceil\frac{d(u^+,v^+)}{2} \right\rceil$.  Let $x$ be the
unique vertex in $I(u^+,v^+)$ such that
$d(u^+,x) = \left\lfloor\frac{d(u^+,v^+)}{2}\right\rfloor$ and
$d(x,v^+) = r = \left\lceil\frac{d(u^+,v^+)}{2}\right\rceil$, and $y$ be the
unique vertex in $I(u^+,v^+)$ such that
$d(u^+,y) = r = \left\lceil\frac{d(u^+,v^+)}{2}\right\rceil$ and
$d(y,v^+) = \left\lfloor\frac{d(u^+,v^+)}{2}\right\rfloor$. Note that if
$d(u^+,v^+) = 2r$, then $x = y$.

\begin{claim}\label{claimBrxBry}
  The balls $B_r(x)$ and $B_r(y)$ satisfy the following properties:
  \begin{enumerate}[(1)]
    \item $X^+\subseteq B_r(x) \cap B_r(y)$;
    \item $B_r(x) \cap X^- = \varnothing$ or
      $B_r(y) \cap X^- = \varnothing$.
  \end{enumerate}
  Consequently, one of the balls (possibly both) $B_r(x)$ or
  $B_r(y)$ realizes $X$.
\end{claim}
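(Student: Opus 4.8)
The plan is to verify the two numbered properties separately, then deduce the final "Consequently" sentence. For property (1), I would observe that $x$ and $y$ both lie on the $(u^+,v^+)$-geodesic, with $x$ at distance at most $r$ from both endpoints and similarly for $y$. Since balls in trees are gated (equivalently, since trees are median graphs), for any $w\in X^+$ the gate $t$ of $w$ on the path $I(u^+,v^+)$ satisfies $d(w,u^+)=d(w,t)+d(t,u^+)$ and $d(w,v^+)=d(w,t)+d(t,v^+)$. Because $\{u^+,v^+\}$ is a diametral pair of $X^+$, we have $d(w,u^+)\le d(u^+,v^+)$ and $d(w,v^+)\le d(u^+,v^+)$; combined with $d(u^+,v^+)=d(u^+,t)+d(t,v^+)$ this forces $d(w,t)\le d(t,v^+)$ and $d(w,t)\le d(t,u^+)$. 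Then $d(w,x)=d(w,t)+d(t,x)$ if $x\notin I(t,\text{something})$... more carefully: whether $t$ lies between $u^+$ and $x$ or between $x$ and $v^+$, one checks $d(w,x)\le \max\{d(x,u^+),d(x,v^+)\}=r$ using the two inequalities on $d(w,t)$, and symmetrically $d(w,y)\le r$. So $w\in B_r(x)\cap B_r(y)$.

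**Property (2).** Here I would argue by contradiction: suppose there exist $a\in B_r(x)\cap X^-$ and $b\in B_r(y)\cap X^-$. Since $X$ is realizable, there is a ball $B_{r'}(c)$ with $X^+\subseteq B_{r'}(c)$ and $X^-\cap B_{r'}(c)=\varnothing$; in particular $u^+,v^+\in B_{r'}(c)$, so $d(c,u^+)\le r'$ and $d(c,v^+)\le r'$, which gives $d(u^+,v^+)\le 2r'$, hence $r'\ge r$ when $d(u^+,v^+)=2r$ and $r'\ge r$ as well in the odd case since $r'$ is an integer $\ge \lceil d(u^+,v^+)/2\rceil = r$. Let $z$ be the gate of $c$ on $I(u^+,v^+)$. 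Then $z$ is somewhere on the path; say $z\in I(u^+,y)$ — equivalently $z$ is on the "$x$-side" or lies between $x$ and $y$ when they differ (when $d(u^+,v^+)=2r$, $x=y$ and $z$ is just somewhere on the path). The key computation: from $d(c,u^+)\le r'$ and the geodesic decomposition through $z$, one bounds $d(c,x)$ or $d(c,y)$; if $z$ is on the $u^+$-side of (or equal to) $y$, then any vertex in $B_r(y)$ turns out to be within distance $r'$ of $c$, contradicting $b\in X^-$; symmetrically if $z$ is on the $v^+$-side of $x$, any vertex of $B_r(x)$ is within distance $r'$ of $c$, contradicting $a\in X^-$. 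Since $z$ must fall on one side or the other (the path is covered by $I(u^+,x)\cup I(y,v^+)$ because $x$ and $y$ are the two "central" vertices), one of the two contradictions always applies.

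**The distance estimate in (2).** Let me sketch the core inequality for, say, the case $z\in I(u^+,y)$, aiming to show $B_r(y)\subseteq B_{r'}(c)$. For $w\in B_r(y)$, let $t$ be the gate of $w$ on $I(u^+,v^+)$; then $d(w,y)=|d(w,t)\text{-combination}|$ and since $d(w,y)\le r$ we get control on how far $t$ can be from $y$ and on $d(w,t)$. Then $d(c,w)\le d(c,z)+d(z,t)+d(t,w)$ along the tree, and using $d(c,z)+d(z,v^+)=d(c,v^+)\le r'$ together with $d(z,v^+)=d(z,y)+d(y,v^+)\ge d(z,y)+\lfloor d(u^+,v^+)/2\rfloor$, one extracts $d(c,z)\le r'-d(z,v^+)$ and bounds $d(z,t)+d(t,w)\le d(z,y)+r$ (triangle-type reasoning on the path plus $d(w,y)\le r$), yielding $d(c,w)\le r'-d(z,v^+)+d(z,y)+r = r' - d(y,v^+) + r = r' - \lfloor d(u^+,v^+)/2\rfloor + \lceil d(u^+,v^+)/2\rceil \le r'+1$ — and one must be slightly careful here to get $\le r'$ rather than $\le r'+1$, which is exactly where the parity of $d(u^+,v^+)$ and the asymmetric placement of $x$ versus $y$ is used. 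I expect this last parity bookkeeping — squeezing the bound down to $r'$ rather than $r'+1$, and correctly handling the boundary case where $z$ coincides with $x$ or $y$ — to be the main obstacle; everything else is a routine gatedness argument in trees. Finally, the "Consequently" clause is immediate: by (1) both balls contain $X^+$, and by (2) at least one of them misses $X^-$, so that ball realizes $X$.
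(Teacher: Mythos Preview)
Your argument for property~(1) is essentially the paper's: project $w$ onto $I(u^+,v^+)$ via its median/gate and use the diametral-pair inequality. That part is fine.

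For property~(2), however, your pairing of cases to balls is backwards, and the off-by-one you flag is not a bookkeeping nuisance but a genuine failure. Concretely, take the path $u^+\,x\,y\,v^+$ with $d(u^+,v^+)=3$, so $r=2$, and let the realizing ball be $B_3(u^+)$ (so $c=u^+$ and $z=u^+\in I(u^+,y)$). Your claim would force $B_2(y)\subseteq B_3(u^+)$, but any vertex at distance~$2$ from $y$ on the $v^+$-side lies at distance~$4$ from $u^+$. So the inclusion $B_r(y)\subseteq B_{r'}(c)$ simply does not hold when $z\in I(u^+,x)$; no amount of parity care will repair it.

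The fix is to swap the pairing, and then the argument becomes immediate with no parity analysis at all. The key asymmetry in the definitions is that $d(x,v^+)=r$ and $d(y,u^+)=r$ exactly (whereas $d(x,u^+)$ and $d(y,v^+)$ may equal $r-1$). So if the gate $z$ of $c$ lies in $I(u^+,x)$, then $x\in I(c,v^+)$ and
\[
r'\ \ge\ d(c,v^+)\ =\ d(c,x)+d(x,v^+)\ =\ d(c,x)+r,
\]
whence $d(c,x)\le r'-r$ and the triangle inequality gives $B_r(x)\subseteq B_{r'}(c)$, so $B_r(x)\cap X^-=\varnothing$. Symmetrically, if $z\in I(y,v^+)$ then $B_r(y)\cap X^-=\varnothing$. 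Since $I(u^+,x)\cup I(y,v^+)=I(u^+,v^+)$, one of the two always applies. This is exactly the paper's proof: it is direct (no contradiction wrapper needed), uses only the triangle inequality for~(2), and avoids your detour through the gate of an arbitrary $w\in B_r(y)$.
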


\begin{proof}
  The proof is similar to the proof of Proposition~\ref{USCS-trees}.
  Consider $w \in X^+$, and let $w'$ be the median of $u^+,v^+,w$. Without loss of generality, assume that
  $w' \in I(u^+,x)$. Then, $d(v^+,w) = d(v^+,x) +d(x,w) = d(v^+,y)
  +d(y,w)$. Since $u^+,v^+$ is a diametral pair of $X^+$, we have
  $d(v^+,w) \leq d(u^+,v^+) = d(u^+,x)+d(x,v^+) = d(u^+,y)+d(y,v^+)$. Consequently,
  $d(x,w) \leq d(x,u^+) \leq r$ and $d(y,w) \leq d(y,u^+) \leq
  r$. Therefore, $X^+ \subseteq B_r(x)$ and $X^+ \subseteq B_r(y)$.

  Consider a ball of center $z$ and of radius $r'$ that realizes
  $X$. Without loss of generality, assume that the median of $z,u^+,v^+$
  belongs to $I(u^+,x)$. Observe that $d(z,v^+) = d(z,x) + d(x,v^+)$, and
  thus, $r' \geq d(z,v^+) = d(z,x) + r$.  Consequently, for any
  $w \in B_r(x)$,
  $d(w,z) \leq d(w,x) + d(x,z) \leq r + d(x,z) \leq r'$, and thus,
  $B_r(x) \subseteq B_{r'}(z)$. Consequently,
  $B_r(x) \cap X^- = \varnothing$.
\end{proof}

If $d(u^+,v^+) = 2r$ is even, then $x=y$ and $B_r(x)$ realizes $X$ by
Claim~\ref{claimBrxBry}. If $d(u^+,v^+) = 2r-1$, by
Claim~\ref{claimBrxBry}, $B_r(x)$ or $B_r(y)$ realizes $X$. Observe
that $B_r(x)$ and $B_r(y)$ are the only balls of radius $r$ in $\B(T)$
containing $u^+$ and $v^+$. If $B_r(x)$ is the only ball of radius $r$
realizing $X$, then there exists $w \in X^-$ such that
$w \in B_r(y) \setminus B_r(x)$. Since $x$ and $y$ are adjacent,
necessarily, $d(w,x) = d(w,y) + 1 = r+1$. Observe that this implies
that $y$ is the median of $u^+,v^+,w$, that
$d(y,w) = d(y,u^+) = d(y,v^+)+1 = r$, and that $x$ is the unique vertex of
$I(u^+,w)$ such that $d(u^+,x) = r-1$.

\begin{proposition}
  For any tree $T$, there exists a proper labeled sample compression scheme
  of size $2$ for $\B(T)$.
\end{proposition}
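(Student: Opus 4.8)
The plan is to exhibit explicit maps $\alpha$ (the compressor) and $\beta$ (the reconstructor) extending the construction already set up above, and then to verify the defining condition $\alpha(X)\preceq X\preceq\beta(\alpha(X))$ with $|\underline{\alpha}(X)|\le 2$ case by case. First I fix a linear order $<$ on $V$, known to both maps and used only to break ties canonically.

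\emph{The compressor.} If $X^+=\varnothing$, put $\alpha(X)=\varnothing$; if $X^+=\{u\}$, put $\alpha(X)=\{(u,+1)\}$. If $|X^+|\ge 2$, fix a diametral pair $\{u^+,v^+\}$ of $X^+$ and form $r$, $x$, $y$ as above. If $d(u^+,v^+)=2r$ is even (so $x=y$), output $\{(u^+,+1),(v^+,+1)\}$. If $d(u^+,v^+)=2r-1$ is odd, invoke Claim~\ref{claimBrxBry}: at least one of $B_r(x)$, $B_r(y)$ realizes $X$. If both do, again output $\{(u^+,+1),(v^+,+1)\}$. If exactly one does, rename the pair (swapping $u^+\leftrightarrow v^+$, hence $x\leftrightarrow y$) so that $B_r(x)$ realizes $X$; then $B_r(y)$ still contains $X^+$ by Claim~\ref{claimBrxBry}, so it fails only because some $w\in X^-$ lies in $B_r(y)$, and I output $\{(u^+,+1),(w,-1)\}$ for such a $w$.

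\emph{The reconstructor.} On a subsample $Y$ with $|\underline{Y}|\le 2$: set $\beta(\varnothing)=\varnothing$ and $\beta(\{(a,+1)\})=B_0(a)$. If $\underline{Y}=\{a,b\}$ with both signs $+1$, let $d=d(a,b)$ and return $B_{d/2}(m)$ with $m$ the midpoint of $a,b$ when $d$ is even, and $B_{\lceil d/2\rceil}(c)$ with $c$ the vertex of $I(a,b)$ at distance $\lfloor d/2\rfloor$ from $\min\{a,b\}$ when $d$ is odd. If $\underline{Y}=\{a,b\}$ with $Y_a=+1$ and $Y_b=-1$, return $B_{d(a,b)/2}(c)$ with $c$ the vertex of $I(a,b)$ at distance $d(a,b)/2-1$ from $a$. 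These cases are mutually exclusive and cover $\Ima(\alpha)$: $\alpha$ never outputs a lone negative vertex or two negative vertices, and in the all-positive case the parity of $d(a,b)$ records which branch of $\alpha$ produced it.

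\emph{Verification.} The cases $|X^+|\le 1$ are immediate (realizability gives $u\notin X^-$). The even case is exactly Claim~\ref{claimBrxBry}, since $\beta$'s midpoint $m$ is the compressor's $x=y$. In the ``odd and both work'' case, $\{x,y\}$ is precisely the pair consisting of the vertex of $I(u^+,v^+)$ at distance $\lfloor d(u^+,v^+)/2\rfloor$ from $u^+$ and the one at the same distance from $v^+$, so $\beta$'s canonical center is $x$ or $y$, and both balls realize $X$ by assumption. The case with real content is $\alpha(X)=\{(u^+,+1),(w,-1)\}$, where I must show $\beta$ returns $B_r(x)$: since $x,y$ are adjacent and $w\in B_r(y)\setminus B_r(x)$, one gets $d(w,y)=r$ and $d(w,x)=r+1$; since $t\mapsto d(w,t)$ is convex along the path $I(u^+,v^+)$ and strictly decreases from $x$ to $y$, its minimum lies on the $v^+$-side of $x$, whence $x\in I(u^+,w)$; therefore $d(u^+,w)=d(u^+,x)+d(x,w)=(r-1)+(r+1)=2r$, and $x$ is the unique vertex of $I(u^+,w)$ at distance $r-1=d(u^+,w)/2-1$ from $u^+$. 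Thus $\beta$ recovers $r$ and the center $x$ correctly, and $B_r(x)$ realizes $X$ by the choice made in $\alpha$. I expect this last identification — getting $x\in I(u^+,w)$ and $d(u^+,w)$ even, so that $\beta$ can invert the construction without ambiguity — to be the only subtle point; the rest is bookkeeping over the finitely many cases.
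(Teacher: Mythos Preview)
Your proof is correct and follows essentially the same approach as the paper's own proof: the compressor outputs the diametral pair when both candidate balls work, and $\{u^+,w\}$ with $w\in X^-$ when only one does, then the reconstructor recovers $r=d(u^+,w)/2$ and the center $x\in I(u^+,w)$ at distance $r-1$ from $u^+$. The only cosmetic difference is that you use a fixed linear order on $V$ to canonically pick one of the two centers in the ``odd, both work'' case, whereas the paper simply returns any ball of the right radius containing both endpoints.
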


\begin{proof}
  We first define the compressor $\alpha$.  Consider a realizable
  sample $X$ for the family $\B(T)$. If $|X^+| \leq 1$, then let
  $\alpha^+(X) := X^+$ and $\alpha^-(X) := \varnothing$.  Suppose now that
  $|X^+| \geq 2$, and consider the vertices $u^+, v^+, x, y$ defined as
  above, and let $r = \left\lceil \frac{d(u^+,v^+)}{2}
  \right\rceil$. If $B_r(x)$ and $B_r(y)$ both realize $X$, then
  $\alpha^+(X) := \{u^+,v^+\}$ and $\alpha^-(X):= \varnothing$.
  Otherwise, by Claim~\ref{claimBrxBry}, either $B_r(x)$ or $B_r(y)$
  realizes $X$, say the first (the other case is
  symmetric). Consequently, there exists $w \in X^-$ such that
  $d(w,x) = d(w,y) +1 = r+1$. In this case, let $\alpha^+(X) := \{u^+\}$,
  and $\alpha^-(X) := \{w\}$. Note that, in this case, $d(u^+,w) = 2r$.

  We now define the reconstructor $\beta$. Consider a sample
  $Y \in \Ima(\alpha)$. If $Y^+ = Y^- = \varnothing$, then $\beta(Y)$ is
  the empty ball. If $Y^+ =\{y\}$ and $Y^- = \varnothing$, then
  $\beta(Y) = B_0(y)$. If $Y^+ = \{u,v\}$ and $Y^- = \varnothing$, then
  let $r = \left\lceil \frac{d(u,v)}{2}\right\rceil$, and let
  $\beta(Y)$ be any ball of radius $r$ containing $u$ and $v$. Suppose
  now that $Y^+ = \{u\}$ and $Y^- = \{w\}$.  Since
  $Y \in \Ima(\alpha)$, we can assume that $d(u,w) = 2r$. Let $x$ be
  the unique vertex in $I(u,w)$ such that $d(u,x) = r-1$, and let
  $\beta(Y) = B_r(x)$.

  We claim that $(\alpha,\beta)$ is a proper labeled sample compression
  scheme for $\B(T)$. Consider a sample $X$. If $|X^+| \leq 1$, then
  $\alpha(X) = X^+$ and $\beta(\alpha(X^+)) = X^+$. Suppose now that
  $|X^+| \geq 2$. If $\alpha^+(X) = \{u^+,v^+\}$, then $u^+,v^+$ is a
  diametral pair of $X^+$, and any ball of $T$ of radius
  $r = \left\lceil \frac{d(u^+,v^+)}{2} \right\rceil$ containing $u^+$ and $v^+$ realizes
  $X$. Consequently, $\beta(\alpha(X))$ realizes $X$ in this
  case. Suppose now that $\alpha^+(X) = \{u^+\}$ and
  $\alpha^-(X) = \{w\}$. Then, in this case, the ball $B_r(x)$
  realizes $X$, where $r = \frac{d(u^+,w)}{2}$ and $x$ is the unique
  vertex in $I(u^+,w)$ that is at distance $r-1$ from $u^+$.  Since
  $\beta(\alpha(X)) = B_r(x)$, we are done.
\end{proof}

\subsection{Proper LSCS for
\texorpdfstring{$\B_r(T)$}{ℬr(T)} for combinatorial trees}

Now, let $X$ be a realizable sample for the family $\B_r(T)$. 
The sample compression scheme for $\B(T)$ cannot be applied to $\B_r(T)$ since $\beta$ may return a ball $B$ whose 
radius can be much
smaller than $r$. Keeping the center of $B$ and increasing its radius until $r$ may result in a ball which is no longer compatible with $X^-$. 
This leads us to design a different technique that encodes the center of a ball realizing the input sample. 
Let $X=X^+\cup X^-$ be a realizable sample for $\B_r(T)$. Let $\{u^+,v^+\}$ be a diametral pair of $X^+$ 
(if $X^+=\varnothing$, then $u^+$ and $v^+$ are not defined).  Similarly to the
proof of Proposition~\ref{USCS-trees}, we obtain the following result:

\begin{lemma} \label{positive-part-trees}
Any ball $B_r(x)$ of $T$ containing $\{u^+,v^+\}$  also contains $X^+$.
\end{lemma}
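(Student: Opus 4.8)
The statement to prove is Lemma~\ref{positive-part-trees}: any ball $B_r(x)$ of $T$ containing a diametral pair $\{u^+,v^+\}$ of $X^+$ also contains all of $X^+$. This is essentially the positive-part argument extracted from the proof of Proposition~\ref{USCS-trees} and from the first paragraph of the proof of Claim~\ref{claimBrxBry}, now stated for a ball of the \emph{prescribed} radius $r$ rather than the minimal one.

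\textbf{Approach.} The plan is to argue by contradiction: suppose $B_r(x) \supseteq \{u^+,v^+\}$ but there is some $w \in X^+$ with $w \notin B_r(x)$, i.e., $d(x,w) > r$. I would then use the tree (median) structure to pull out a shortest path through the median of the triple $u^+, v^+, w$, and derive that $d(u^+,v^+)$ is strictly larger than either $d(u^+,w)$ or $d(v^+,w)$, contradicting that $\{u^+,v^+\}$ is a diametral pair of $X^+$. If $X^+ = \varnothing$ there is nothing to prove, and if $|X^+| = 1$ the claim is trivial, so assume $|X^+|\ge 2$.

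\textbf{Key steps.} First, let $m$ be the median of the triple $u^+, v^+, x$ in the tree $T$; since $m \in I(u^+, v^+)$, write $d(u^+,v^+) = d(u^+,m) + d(m,v^+)$, and by symmetry (swapping the roles of $u^+$ and $v^+$) assume $d(m, u^+) \ge d(m, v^+)$. Second, let $t$ be the median of the triple $u^+, v^+, w$; it lies on the $(u^+,v^+)$-path, so $d(u^+,v^+) = d(u^+,t) + d(t,v^+)$ and $d(w, u^+) = d(w,t) + d(t, u^+)$, and likewise for $v^+$. The crucial geometric observation in a tree is that the path from $x$ to $w$ passes through $m$ and then follows the $(u^+,v^+)$-path up to $t$ before branching off toward $w$ (or a sub-case thereof depending on the relative positions of $m$ and $t$ on the $(u^+,v^+)$-path); in particular $d(x,w) = d(x, s) + d(s, w)$ where $s$ is the meeting point, and one gets $d(x, w) \le \max\{d(x,u^+), d(x,v^+)\} $ only if $d(t,w)$ is small — more precisely, combining $d(x,w) > r \ge d(x,u^+) \ge d(x,m)+d(m,u^+)$ forces $d(t,w) > d(t,u^+)$, whence $d(v^+, w) = d(v^+,t) + d(t,w) > d(v^+,t) + d(t,u^+) = d(u^+,v^+)$, contradicting diametrality. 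Third, I would double-check the opposite sub-case where the median of $u^+,v^+,x$ lies on the side of $v^+$ and $t$ lies on the side of $u^+$: here one instead concludes $d(u^+, w) > d(u^+, v^+)$, again a contradiction. In all cases the diametral-pair hypothesis is violated, so $w \in B_r(x)$.

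\textbf{Main obstacle.} The only delicate point is the case analysis on the relative order, along the $(u^+,v^+)$-geodesic, of the two medians (the median of $u^+,v^+,x$ and the median of $u^+,v^+,w$), together with keeping the inequalities $d(x,u^+)\le r$ and $d(x,v^+)\le r$ correctly oriented so that the extra length $d(t,w)$ is forced to exceed $d(t,u^+)$ or $d(t,v^+)$. Since this is exactly the bookkeeping already carried out (in the minimal-radius setting) in Proposition~\ref{USCS-trees} and the opening of Claim~\ref{claimBrxBry}, I expect the argument to go through verbatim with ``radius $= \tfrac12 d(u^+,v^+)$'' replaced by ``radius $\ge \tfrac12 d(u^+,v^+)$'', which is all we need since $B_r(x) \ni u^+,v^+$ already forces $r \ge \tfrac12 d(u^+,v^+)$.
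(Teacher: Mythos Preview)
Your approach is correct and is essentially the same median/contradiction argument the paper uses. The paper's version is considerably shorter: it introduces only the single median $t$ of $u^+,v^+,w$, assumes without loss of generality that $t\in I(x,v^+)$ (which absorbs your second median $m$ and the whole case split), and then from $d(x,w)>r\ge d(x,v^+)$ with $t\in I(x,w)\cap I(x,v^+)$ gets $d(t,w)>d(t,v^+)$, hence $d(u^+,w)=d(u^+,t)+d(t,w)>d(u^+,t)+d(t,v^+)=d(u^+,v^+)$.
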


\begin{proof}
Let $B_r(x)$ be a ball of $T$ containing $\{u^+,v^+\}$, and suppose, by way of contradiction, that there exists $w\in X^+\setminus B_{r}(x)$. Let $t$ be the median of $u^+,v^+,w$, and suppose, without loss of generality, that $t\in I(x,v^+)$. Since $w\notin B_{r}(x)$,
we conclude that $d(t,w)>d(t,v^+)$.  This implies that $d(u^+,w)>d(u^+,v^+)$. Since $w\in X^+$, this contradicts that $\{u^+,v^+\}$ is a diametral pair of $X^+$.
\end{proof}


If $X^-$ is located far away from $X^+$, then any $r$-ball realizing $X^+$ also realizes $X$.
The next lemma provides the conditions under which this choice of the $r$-ball is no longer true:

\begin{lemma}\label{lem:r+1} 
Either any $r$-ball containing $X^+$ is disjoint from $X^-$, or there exists a
ball $B_r(x)$ realizing $X$ and a vertex $s\in X^-$ such that $d(x,s)=r+1$.
\end{lemma}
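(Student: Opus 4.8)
The plan is to dichotomize on whether \emph{every} $r$-ball containing $X^+$ is already disjoint from $X^-$. If so, we are in the first case and there is nothing to prove, so assume not: there is some $r$-ball $B$ with $X^+\subseteq B$ but $B\cap X^-\neq\varnothing$. Among all such ``bad'' balls, together with the $r$-balls that genuinely realize $X$, I would look for one whose center is as close as possible to $X^-$. More precisely, first I would establish that there does exist at least one $r$-ball realizing $X$: since $X$ is realizable for $\B_r(T)$ by hypothesis, some $B_r(x_0)$ has $X^+\subseteq B_r(x_0)$ and $B_r(x_0)\cap X^-=\varnothing$. Then the family $\mathcal{F}$ of $r$-balls $B_r(z)$ with $X^+\subseteq B_r(z)$ is nonempty and, by Lemma~\ref{positive-part-trees}, coincides with the family of $r$-balls containing $\{u^+,v^+\}$, hence is exactly the set of $r$-balls whose center lies in the ``middle region'' $M=\{z: d(z,u^+)\le r,\ d(z,v^+)\le r\}$, which is a subtree of $T$.

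The key step is to move continuously (combinatorially, one edge at a time) from a realizing center $x_0$ toward a bad center. Fix a vertex $s\in X^-$ that is covered by some bad ball $B_r(x_1)$, i.e.\ $d(x_1,s)\le r$ with $x_1\in M$. Walk along the path from $x_0$ to $x_1$ inside the subtree $M$ (it stays in $M$ since $M$ is convex in a tree), and let $x$ be the \emph{last} vertex on this path such that $B_r(x)$ still realizes $X$, i.e.\ $B_r(x)\cap X^-=\varnothing$; let $x'$ be the next vertex on the path, so $x'\in M$, $x$ and $x'$ are adjacent, and $B_r(x')\cap X^-\neq\varnothing$. Pick $s\in X^-\cap B_r(x')$. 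Then $d(x',s)\le r$ while $d(x,s)\ge r+1$ (the latter because $B_r(x)\cap X^-=\varnothing$). In a tree, adjacent vertices have distances to any fixed vertex differing by exactly $1$, so $d(x,s)=d(x',s)+1\le r+1$, forcing $d(x,s)=r+1$ exactly. Since $x\in M$ we have $X^+\subseteq B_r(x)$ by Lemma~\ref{positive-part-trees}, and $B_r(x)\cap X^-=\varnothing$ by choice of $x$, so $B_r(x)$ realizes $X$, and $s\in X^-$ witnesses $d(x,s)=r+1$. This is exactly the second alternative.

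The one point that needs care — and which I expect to be the main obstacle to write cleanly rather than the main difficulty conceptually — is justifying that the path from $x_0$ to $x_1$ stays inside $M$, so that every intermediate ball $B_r(z)$ along it contains $X^+$ (otherwise ``the last realizing vertex'' could be followed by a vertex whose ball fails to contain $X^+$, not one that hits $X^-$). This follows because $M=B_r(u^+)\cap B_r(v^+)$ is an intersection of two balls in a tree and balls in trees are gated, hence convex, so $M$ is a convex subtree and contains the unique $x_0$--$x_1$ path. A secondary subtlety is the degenerate case $X^+=\varnothing$: then $M$ is all of $V$, the argument still runs verbatim once we know $X$ is realizable (pick any realizing center $x_0$ and any $s\in X^-$ with some $r$-ball covering it — such $s$ exists unless $X^-=\varnothing$, in which case again the first alternative trivially holds). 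I would state these gating/convexity facts as a one-line invocation of the standard fact quoted at the start of Section~\ref{sec:trees} that balls in trees are gated, and otherwise the proof is just the edge-by-edge walk plus the tree distance parity observation $d(x,s)=d(x',s)\pm 1$ for adjacent $x,x'$.
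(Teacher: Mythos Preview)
Your proof is correct. Both your argument and the paper's hinge on an edge-by-edge transition from a realizing center to a non-realizing one, but they organize the work differently. You first observe that the set $M=B_r(u^+)\cap B_r(v^+)$ of centers of $r$-balls containing $X^+$ is a convex subtree (via gatedness of balls), so the $x_0$--$x_1$ path stays inside $M$; this guarantees that at the transition vertex $x'$ the only possible failure is $B_r(x')\cap X^-\neq\varnothing$, and the $\pm 1$ distance parity for adjacent vertices finishes immediately. The paper instead fixes a specific $z\in X^-$ lying in some bad ball, picks a realizing center $x$ minimizing $d(x,z)$, and steps once toward $z$; it then must treat \emph{both} failure modes at $x'$---hitting $X^-$ (done as in your proof) or missing some $w\in X^+$---and rules out the latter by the distance computation $d(w,z)=d(w,x)+1+d(x',z)>2r+1$, contradicting that $B_r(y)$ contains $X^+\cup\{z\}$. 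Your route is a bit more conceptual and avoids that case split at the cost of invoking the convexity of $M$; the paper's route is more self-contained but needs the extra contradiction step.
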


\begin{proof}
We can suppose that 
$X^-\ne \varnothing$ since otherwise any $r$-ball containing $X^+$ is trivially disjoint from $X^-$. Let $B_r(y)$ be an $r$-ball containing $X^+$ and a vertex $z\in X^-$.
Among all $r$-balls realizing $X$, pick a ball $B_r(x)$ minimizing $d(x,z)$. If $d(x,z)=r+1$, then we are done by taking $s=z$. So, let $d(x,z)>r+1$. Let $x'$ be the
neighbor of $x$ in $I(x,z)$. By the choice of $x$, the ball $B_r(x')$ does not realize $X$.
Since $d(x,z)>r+1$, $z\notin B_r(x')$. If $B_r(x')$ contains a vertex $z'\in X^-$, then $d(x,z')=r+1$,
and we are done by taking $s=z'$. Thus, there exists $w\in X^+\setminus B_r(x')$.
Since $w\in B_r(x)$, we deduce that $d(x',w)=r+1$ and $d(x,w)=r$. Consequently,  $d(w,z)=d(w,x)+d(x,x')+d(x',z)=r+1+d(x',z)>2r+1$.
This contradicts the assumption that $B_r(y)$ 
contains $X^+\cup \{ z\}$.
\end{proof}

In view of Lemma~\ref{lem:r+1}, in certain cases, the center $x$ of a ball $B_r(x)$ realizing $X$
is located on the sphere $S_{r+1}(s)$ centered at some vertex $s$ of $X^-$.
If $s$ is given, then we have to encode the position of $x$ on
$S_{r+1}(s)$ using only the vertices of $X$. Denote by  $\ell_s$ the
labeling of the sphere $S_{r+1}(s)$ obtained by performing a Depth-First Search (DFS) of $T$ with root $s$. 
This labeling assigns increasing labels (starting from 0) to the vertices of $S_{r+1}(s)$, {\it i.e.}, the first vertex of $S_{r+1}(s)$ reached by the DFS
is labeled 0, the second vertex of $S_{r+1}(s)$ reached by the DFS is labeled
1, and so on.
In other words, we consider the total order of the vertices of the tree defined by the discovery times of the DFS, and we restrict this order to the vertices of the sphere $S_{r+1}(s)$.
Let $n_s=|S_{r+1}(s)|-1$ (we also set $n_s=0$ if
$S_{r+1}(s)=\varnothing$).
Consider the labels $\{ 0,1,\ldots, n_s\}$ in clockwise circular order 
and let us call any sequence of the form $\{ i,i+1,\ldots,j-1,j\}$ or
$\{ j,j+1,\ldots,n_s,0,1,\ldots,i-1,i\}$ a \emph{circular interval}.

\begin{lemma} \label{intersection-ball-sphere} For any $v\in V(T)$,
$S_{r+1}(s)\cap B_r(v)$ and $S_{r+1}(s)\cap \cB_r(v)$ are circular intervals.
\end{lemma}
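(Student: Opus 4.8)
The statement claims that for any vertex $v$, the intersection of the sphere $S_{r+1}(s)$ with the ball $B_r(v)$ (and with its complement $\cB_r(v)$) is a circular interval in the DFS-based labeling $\ell_s$ of $S_{r+1}(s)$. The key structural fact I would exploit is that the DFS labeling of a tree has the property that, for any vertex $w$, the set of vertices in the subtree rooted at $w$ (with respect to the root $s$) receives a \emph{contiguous block} of discovery times. Restricting to $S_{r+1}(s)$, the vertices of $S_{r+1}(s)$ lying in the subtree below $w$ therefore form a circular interval (in fact an ordinary interval, since contiguity is preserved under restriction to a subset). So the plan is to reduce $S_{r+1}(s)\cap B_r(v)$ to a union of such subtree-traces and then argue it is actually a single contiguous block.

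\medskip

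First I would fix $s$, and let $p$ be the gate of $v$ in the subtree picture — more precisely, let $t$ be the vertex on the $(s,v)$-path that is the median of $s$, $v$, and an arbitrary point of $S_{r+1}(s)$; one checks that the relevant branching vertex $t$ lies at distance $d(s,t)$ from $s$, and every vertex $z\in S_{r+1}(s)$ satisfies $d(v,z)=d(v,t)+d(t,z)$ when $z$ lies in the subtree $T_t$ hanging off $t$ away from $s$, while $d(v,z)=d(v,z')+\cdots$ follows a different formula otherwise. The cleaner way: for each $z\in S_{r+1}(s)$, let $m(z)$ be the median of $s,v,z$; then $d(v,z)=d(v,m(z))+d(m(z),z)$ and $d(s,z)=d(s,m(z))+d(m(z),z)=r+1$, so $d(m(z),z)=r+1-d(s,m(z))$ and hence $d(v,z)=d(v,m(z))+r+1-d(s,m(z))$. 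The condition $z\in B_r(v)$ becomes $d(v,m(z))-d(s,m(z))\le -1$, i.e. $d(s,m(z))-d(v,m(z))\ge 1$. Since $m(z)$ ranges over vertices on the $(s,v)$-path, the quantity $d(s,m(z))-d(v,m(z))$ is monotone (increasing) as $m(z)$ moves from $s$ towards $v$; thus there is a \emph{threshold vertex} $t^*$ on the $(s,v)$-path such that $z\in B_r(v)$ iff $m(z)$ is on the $v$-side of $t^*$ (inclusive), equivalently iff $z$ belongs to a single subtree $T_{c}$ where $c$ is the child of $t^*$ towards $v$ — or more carefully, iff $z$ lies in the subtree of $T$ (rooted at $s$) hanging below some fixed vertex $c$ on the $(s,v)$-path.

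\medskip

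Once the set $S_{r+1}(s)\cap B_r(v)$ is identified with $S_{r+1}(s)\cap T_c$ for a single vertex $c$, I invoke the DFS contiguity property: $V(T_c)$ is an interval of discovery times, so its restriction to $S_{r+1}(s)$ is an interval of $\{0,1,\ldots,n_s\}$, which is in particular a circular interval. The complement $S_{r+1}(s)\cap\cB_r(v)$ is then $\{0,\ldots,n_s\}$ minus a circular interval, which is again a circular interval by definition (the complement of $\{i,\ldots,j\}$ is $\{j+1,\ldots,n_s,0,\ldots,i-1\}$). Two degenerate cases need a sentence each: if $t^*$ coincides with $v$ (so $c$ is undefined or $B_r(v)$ catches no sphere vertex, or catches all of them) then $S_{r+1}(s)\cap B_r(v)$ is $\varnothing$ or all of $S_{r+1}(s)$, both trivially circular intervals; and if $v=s$ then $S_{r+1}(s)\cap B_r(v)=\varnothing$ since $r+1>r$.

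\medskip

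\textbf{Main obstacle.} The delicate point is the monotonicity/threshold argument: one must verify carefully that $d(s,m)-d(v,m)$ is genuinely monotone along the $(s,v)$-path and that the set $\{m(z): z\in S_{r+1}(s)\}$ is closed enough (it is exactly the set of vertices on the $(s,v)$-path from which some branch reaches depth $r+1$ from $s$) for the threshold to cut out a single subtree rather than a union of several. Concretely, if $m(z)$ and $m(z')$ are two distinct medians on the path both on the "accepting" side of $t^*$, the corresponding sphere vertices $z,z'$ lie in possibly different child-subtrees of different path vertices, and I must confirm all such subtrees are nested inside one common $T_c$ — which holds precisely because $c$ can be taken as the child of $t^*$ on the path to $v$, so every accepting $m(z)$ lies in $T_c$ and hence $z\in T_c$. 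Spelling this nesting out using the gate property of paths in trees (Lemma-style reasoning as in the proof of Proposition~\ref{USCS-trees}) is the bulk of the work; the DFS-interval fact itself is standard.
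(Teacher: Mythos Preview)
Your proposal is correct and takes a genuinely different route from the paper's proof. The paper argues \emph{convexity} directly: it picks two sphere points $x,y\in S_{r+1}(s)\cap B_r(v)$ with labels $i<j$ and any sphere point $z$ with label strictly between $i$ and $j$, lets $a$ be the lowest common ancestor of $x,y$ (so $z$ is also a descendant of $a$ by DFS contiguity, and $d(a,x)=d(a,y)=d(a,z)$ since all three lie on $S_{r+1}(s)$), and then does a short case analysis on where the gate of $v$ in the tripod spanned by $s,x,y$ sits to conclude $d(v,z)\le\max\{d(v,x),d(v,y)\}\le r$. Your approach instead produces an \emph{explicit description}: via the median $m(z)$ of $s,v,z$ you rewrite $d(v,z)\le r$ as a monotone condition on the depth of $m(z)$ along the $(s,v)$-path, deduce that $S_{r+1}(s)\cap B_r(v)=S_{r+1}(s)\cap T_c$ for a single subtree $T_c$ rooted at a path vertex, and then invoke DFS contiguity of $T_c$. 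Your version is more structural and in fact yields the slightly stronger statement that the intersection is an ordinary (non-wrapping) interval; the paper's argument gives the same conclusion but with less overhead, since it never needs to compute the threshold or name $c$. One small wrinkle to tidy up: when you write ``$c$ is the child of $t^*$ towards $v$'' you are off by one (if $m(z)=t^*$ then $z$ need not lie below that child), but your immediate ``or more carefully'' clause fixes this by taking $c=t^*$ itself, which is the correct choice.
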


\begin{proof}
Pick $x,y\in S_{r+1}(s)\cap B_r(v)$ with $\ell_s(x)=i<\ell_s(y)=j$, and $z\in S_{r+1}(s)$ with $i<\ell_s(z)<j$.
We assert that $z\in B_r(v)$. Let $a$ be the median of  $s,x,y$. Then, $a$ is the lowest
common ancestor of $x,y$ in $T$ rooted at $s$.  According to the DFS, for $z\in S_{r+1}(s)$, we have $i<\ell(z)<j$
if and only if $a$ is an ancestor of $z$, {\it i.e.}, $a$ belongs to a path between $s$ and $z$. Since $x,y,z\in S_{r+1}(s)$,
we have $d(a,x)=d(a,z)=d(a,y)$ for any such $z$. Let $b$ be the closest to  $v$ vertex in the subtree $T'$ of $T$ spanned by the
vertices $s,x,y$. If $b\in I(s,a)$, then $a\in I(v,x)\cap I(v,y)\cap I(v,z)$, and therefore, $d(v,z)=d(v,a)+d(a,z)=d(v,a)+d(v,x)=d(v,x)\le r$, yielding $z\in B_r(v)$.
Now, suppose that $b\in I(a,x)$ (the case $b\in I(x,y)$ is similar).
By the triangle inequality and since $d(a,z)=d(a,y)$, we obtain $d(v,z)\le d(v,a)+d(a,z)=d(v,b)+d(b,a)+d(a,y)=d(v,y)\le r$, and hence, $z\in B_r(v)$ in this case too.
Finally, $S_{r+1}(s)\cap \cB_r(v)$ is an interval as the complement of $S_{r+1}(s)\cap B_r(v)$. 
\end{proof}

Let $S_{r+1}(s)\cap B_r(v)$ be a non-empty proper subset of $S_{r+1}(s)$. We
denote by $\phi_s^+(v)$ the last vertex of the circular interval $S_{r+1}(s)\cap B_r(v)$, and by $\phi_s^-(v)$ the last vertex of the circular interval $S_{r+1}(s)\cap \cB_r(v)$. That is, for the labeling $\ell_s$, $\phi_s^+(v)$ is the vertex whose label is the smallest $i \leq n_s$ such that the vertex with label $i+1 \mod (n_s+1)$ is not in the circular interval $S_{r+1}(s)\cap B_r(v)$, and $\phi_s^-(v)$ is the vertex whose label is the smallest $i \leq n_s$ such that the vertex with label $i+1 \mod (n_s+1)$ is not in the circular interval $S_{r+1}(s)\cap \cB_r(v)$.
See Fig.~\ref{fig:center_designator} for an illustration.
Let $B_r(x)$ and $s\in X^-$ be as in Lemma~\ref{lem:r+1}: $B_r(x)$ realizes the
sample $X$ and $d(x,s)=r+1$. To locate 
$x$ on the sphere $S_{r+1}(s)$, we identify $x$ by another vertex $t\in X$  such that
$\phi_s^+(t)=x$ if $t\in X^+$ or $\phi_s^-(t)=x$ if $t\in X^-$ holds. We call
$t\in X$ the {\em center designator} of the vertex $s$. 
Now, we strengthen the assertion of Lemma~\ref{lem:r+1}.

\begin{figure}[htb]
    \begin{minipage}{0.44\linewidth}
        \centering
        \includegraphics[width=0.6\linewidth]{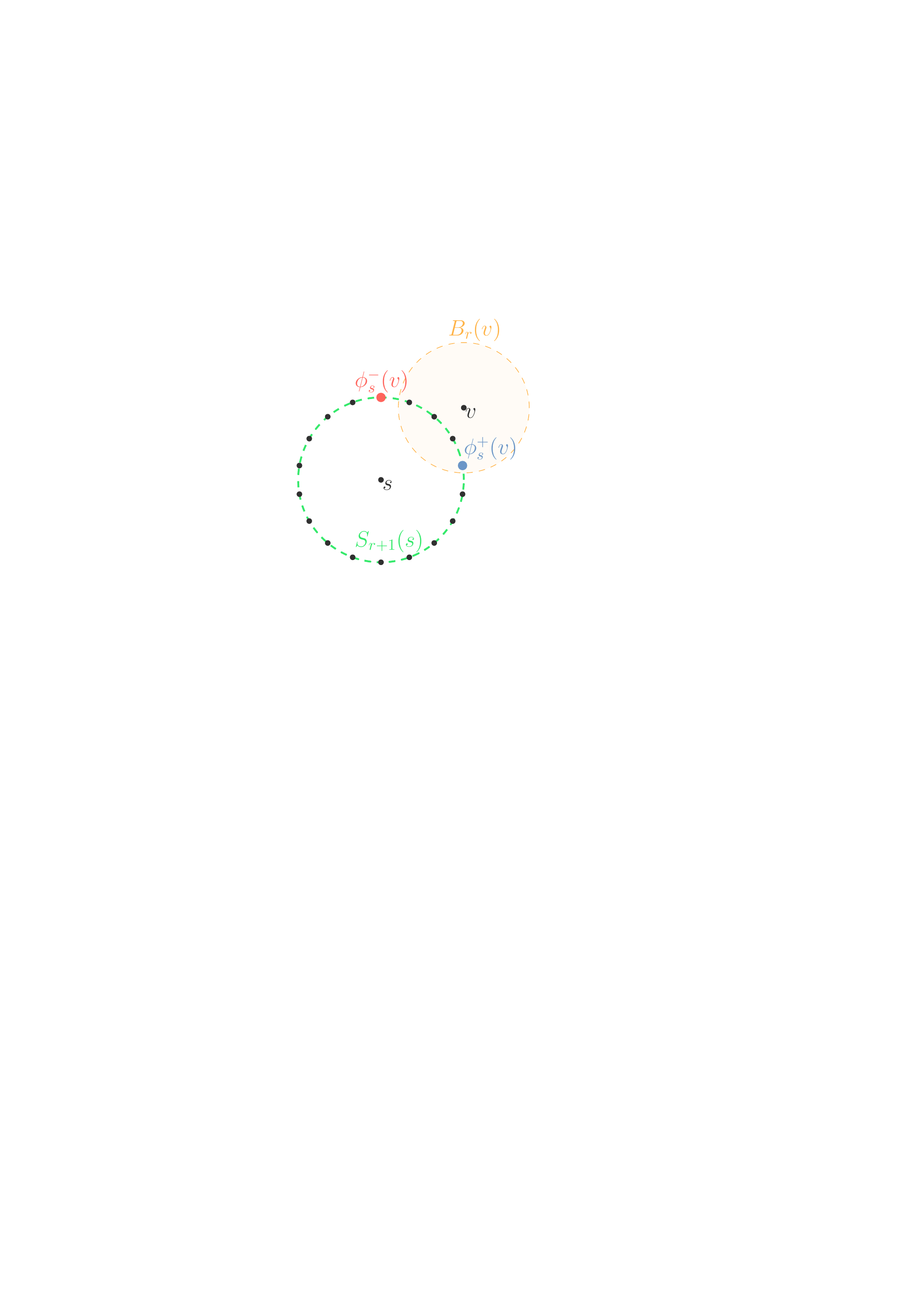}
    \end{minipage}
    \begin{minipage}{0.5\linewidth}
        \centering
        \includegraphics[width=0.65\linewidth]{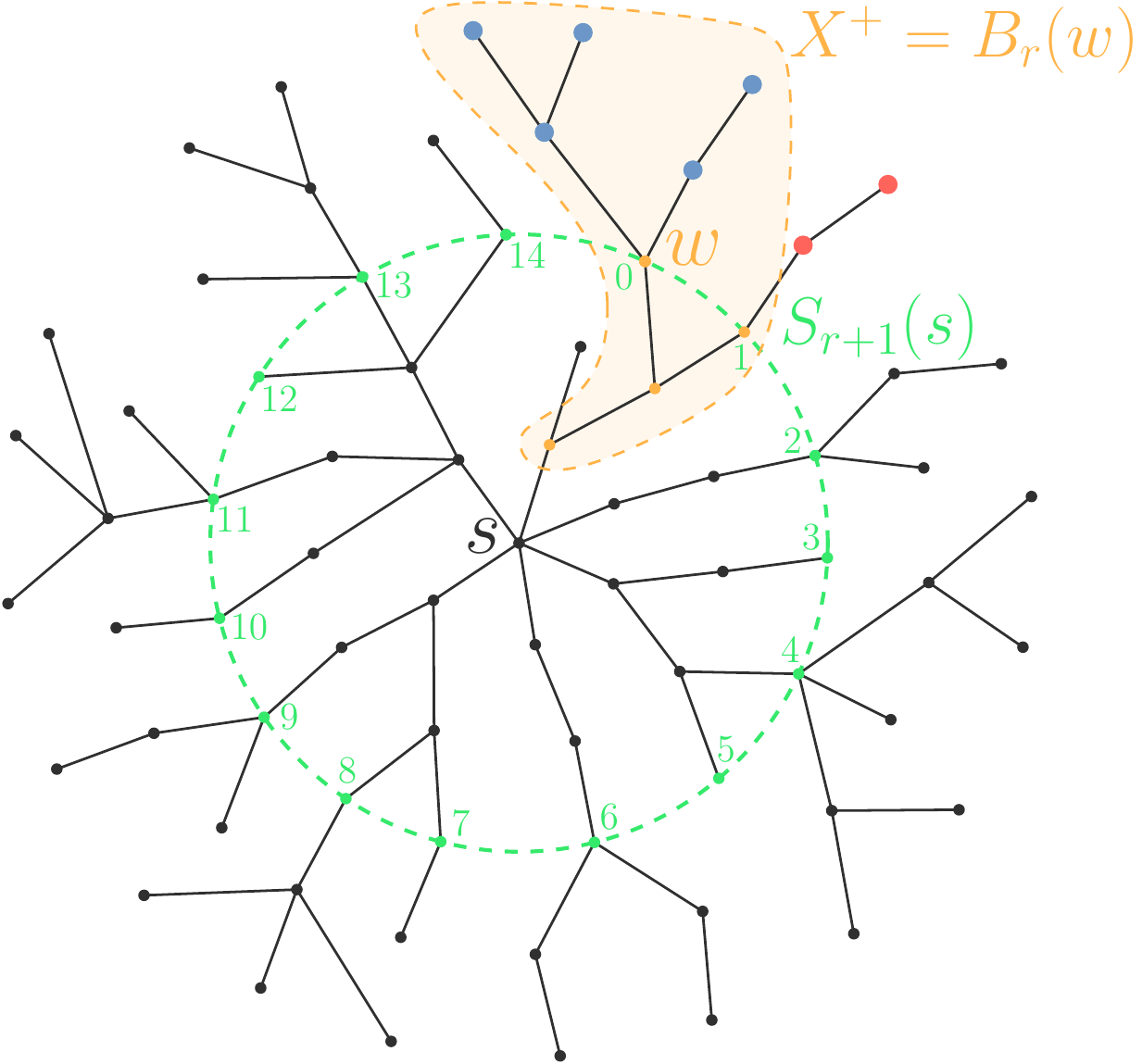}
    \end{minipage}
    \caption{
        \label{fig:phi}
        \label{fig:center_designator}
        On the left is an illustration of the vertices $\phi_s^+(v)$ and
        $\phi_s^-(v)$ associated to a vertex $v \in V(T)$ with respect to a
        vertex $s \in X^-$.
        On the right is an example of the possible center designators of a
        vertex $s \in X^-$. The vertices $t \in X^+$ such that the
        $r$-ball centered at $\phi_s^+(t)$ realizes $X$ are in blue. The
        vertices $t \in X^-$ such that the $r$-ball centered at
        $\phi_s^-(t)$ realizes $X$ are in red.
    }
\end{figure}

\begin{lemma}\label{lem:center_designator} 
One of the following conditions holds:
\begin{itemize}
\item[(1)] any $r$-ball realizing $X^+$ also realizes  $X$;
\item[(2)] there exists a vertex $s\in X^-$ such that, for all $x\in S_{r+1}(s)$, the ball $B_r(x)$ realizes $X$;
\item[(3)] there exists a vertex $s\in X^-$ having a center designator $t\in X$.
\end{itemize}
\end{lemma}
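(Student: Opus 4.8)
The plan is to assume that condition (1) fails and to deduce (2) or (3). If (1) fails, then some $r$-ball containing $X^+$ meets $X^-$, so Lemma~\ref{lem:r+1} provides a ball $B_r(x)$ realizing $X$ and a vertex $s\in X^-$ with $d(x,s)=r+1$; in particular $x\in S_{r+1}(s)$. I would fix such an $s$ and consider the set $A:=\{\,x'\in S_{r+1}(s): B_r(x') \text{ realizes } X\,\}$, which contains $x$ and is therefore non-empty. If $A=S_{r+1}(s)$, then (2) holds and we are done. So assume henceforth that $A$ is a non-empty proper subset of $S_{r+1}(s)$; in particular $|S_{r+1}(s)|\ge 2$.

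Since $A$ is a non-empty proper subset of the cyclically ordered set $S_{r+1}(s)$ (ordered by the DFS-labeling $\ell_s$), there are vertices $x^\ast\in A$ and $x^{\ast\ast}\notin A$ with $x^{\ast\ast}$ the cyclic successor of $x^\ast$. As $B_r(x^{\ast\ast})$ fails to realize $X$, either (a) some $w\in X^+$ has $w\notin B_r(x^{\ast\ast})$, or (b) some $z\in X^-$ has $z\in B_r(x^{\ast\ast})$. I claim that in case (a) $x^\ast=\phi_s^+(w)$ and in case (b) $x^\ast=\phi_s^-(z)$; since $B_r(x^\ast)$ realizes $X$ (because $x^\ast\in A$), this shows that $w$, respectively $z$, is a center designator of $s$, giving (3). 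For (a): from $x^\ast\in A$ we get $X^+\subseteq B_r(x^\ast)$, hence $w\in B_r(x^\ast)$, i.e. $x^\ast\in S_{r+1}(s)\cap B_r(w)$; and $w\notin B_r(x^{\ast\ast})$ means $x^{\ast\ast}\notin S_{r+1}(s)\cap B_r(w)$. By Lemma~\ref{intersection-ball-sphere} this set is a circular interval; it is non-empty (contains $x^\ast$) and proper (misses $x^{\ast\ast}$), and $x^\ast$ is the unique vertex of it whose cyclic successor lies outside it, so $x^\ast=\phi_s^+(w)$. Case (b) is symmetric using $\cB_r(z)$: $x^\ast\in A$ gives $z\notin B_r(x^\ast)$, so $x^\ast\in S_{r+1}(s)\cap\cB_r(z)$, while $z\in B_r(x^{\ast\ast})$ gives $x^{\ast\ast}\notin\cB_r(z)$; hence $S_{r+1}(s)\cap\cB_r(z)$ is a non-empty proper circular interval whose boundary vertex is $x^\ast$, i.e. $x^\ast=\phi_s^-(z)$.

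The step that needs the most care is the last one: deducing ``$x^\ast=\phi_s^\pm(\cdot)$'' from ``$x^\ast$ lies in the circular interval and its cyclic successor does not.'' This relies on Lemma~\ref{intersection-ball-sphere} (so that the relevant intersection is genuinely a circular interval and hence has a unique such boundary vertex, matching the definition of $\phi_s^\pm$) and on the observation that the intersection is a non-empty proper subset of $S_{r+1}(s)$, which is exactly the hypothesis under which $\phi_s^\pm$ is defined; both facts are immediate once $x^\ast$ and $x^{\ast\ast}$ have been chosen. One should also note in passing that $s$ is never the vertex produced: every point of $S_{r+1}(s)$ is at distance $r+1>r$ from $s$, so $s\notin B_r(x^{\ast\ast})$ (ruling out $z=s$) and $s\notin X^+$ (ruling out $w=s$), so that $\phi_s^\pm$ is indeed evaluated at an admissible vertex of $X$.
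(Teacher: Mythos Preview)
Your proof is correct and follows essentially the same approach as the paper's: assume (1) fails, invoke Lemma~\ref{lem:r+1} to obtain $s\in X^-$ and a realizing center on $S_{r+1}(s)$, dispose of case (2), and otherwise find two cyclically consecutive points on the sphere, one good and one bad, then read off a center designator from whichever witness (positive missed or negative hit) causes the bad ball to fail. Your write-up is in fact slightly more explicit than the paper's in justifying that the relevant intersection is a non-empty proper circular interval (so that $\phi_s^\pm$ is defined); the final remark about $z\neq s$ is harmless but redundant, since the non-empty proper interval condition is already established directly from $x^\ast$ and $x^{\ast\ast}$.
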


\begin{proof} By Lemma~\ref{lem:r+1}, either any $r$-ball realizing $X^+$ also realizes $X$ (and (1) holds) or there exists a ball $B_r(w)$ realizing $X$,
and a vertex $s\in X^-$ with $d(s,w)=r+1$. In the second case, if all the vertices of $S_{r+1}(s)$ are centers of $r$-balls realizing $X$, then (2) is satisfied.
Thus, we can suppose that 
$S_{r+1}(s)$ contains a vertex whose $r$-ball 
does not realize $X$. In this case, we assert that $s$ admits a center designator, {\it i.e.}, $s$ satisfies (3).
Indeed, moving clockwise along $S_{r+1}(s)$ according to the DFS-order $\ell_s$, we will find two consecutive vertices $x$ and $y$ such that $B_r(x)$ realizes the sample $X$ and $B_r(y)$ does not
realize $X$. The latter implies that either there exists $t'\in X^+\setminus B_r(y)$ or there exists $t''\in X^-\cap B_r(y)$. If $t'$ exists, since $t'\in B_r(x)$ and $t'\notin B_r(y)$, we conclude that $x\in B_r(t')$
and $y\notin B_r(t')$. Hence, $x$ is the last vertex of the circular interval $S_{r+1}(s)\cap B_r(t')$, and thus, $x=\phi_s^+(t')$, yielding that $t'$ is a center designator of $s$.
On the other hand, if $t''$ exists, since $t''\notin B_r(x)$ and $t''\in B_r(y)$, we conclude that $x\notin B_r(t'')$ and $y\in B_r(t'')$.
Thus, $x$ is the last vertex of the circular interval $S_{r+1}(s)\cap \cB_r(t'')$, and thus, $x=\phi_s^-(t'')$, yielding that $t''$ is a center designator of $s$.
\end{proof}

We now use the previous results to define the compressor $\alpha(X)$ for any realizable sample $X$. Let
$\{u^+,v^+\}$ be a diametral pair of $X^+$. We define $\alpha(X)$ according to 
Lemma~\ref{lem:center_designator}:
\begin{itemize}
    \item[(1)] if any $r$-ball realizing $X^+$ also realizes  $X$, then set
    $\alpha^-(X):=\varnothing$ and $\alpha^+(X):=(u^+,v^+)$;
    \item[(2)] if there exists a vertex $s\in X^-$ such that, for all $x\in
    S_{r+1}(s)$, the ball $B_r(x)$ realizes $X$, then set $\alpha^-(X):=(s)$ and $\alpha^+(X):=\varnothing$;
    \item[(3)] if there exists $s\in X^-$ having a center designator
    $t\in X$, then set $\alpha^-(X):=(s,t)$ and $\alpha^+(X):=\varnothing$ if $t\in X^-$,
    and set $\alpha^-(X):=(s)$ and $\alpha^+(X):=(t)$, otherwise.
\end{itemize}

In Case (3), when $|\alpha^-(X)|=2$, we suppose that the second vertex of $\alpha^-(X)$ is the center designator of the first vertex.
The reconstructor $\beta$ takes any sign vector $Y$ with a support of size at most 2 (from $\Ima(\alpha)=\alpha(\downarrow\C)$) and returns a ball $B_r(y)$ defined in the following way:
\begin{enumerate}
    \item[(1)] if $Y^-=\varnothing$, then  $\beta(Y)$ is any ball $B_r(y)$
    containing $Y^+$;
    \item[(2)] if $Y^+=\varnothing$ and $Y^-=(s)$, then $\beta(Y)$ is any
    ball $B_r(y)$ centered at a vertex $y$ of $S_{r+1}(s)$;
    \item[(3$'$)] if $Y^+=\varnothing$ and $Y^-=( s,t)$, then $\beta(Y)$ is
    the ball $B_r(y)$ centered at the vertex $y=\phi_s^-(t)$;
    \item[(3$''$)] if $Y^+=(t)$ and $Y^-=(s)$, then $\beta(Y)$ is the
    ball $B_r(y)$ centered at the vertex $y=\phi_s^+(t)$.
\end{enumerate}

From  the correspondence of the cases in Lemma~\ref{lem:center_designator} and the definitions of $\alpha$ and $\beta$, we get:

\begin{proposition} \label{LSCS-trees-extra-information}
For any tree $T=(V,E)$ and any fixed radius $r$, the pair $(\alpha,\beta)$ of vectors defines a proper labeled sample compression scheme of size 2 for $\B_r(T)$.
\end{proposition}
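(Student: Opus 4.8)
The plan is to check directly, following the three cases of Lemma~\ref{lem:center_designator} in the priority order used to define $\alpha$, that $\beta(\alpha(X))$ is always a ball of radius $r$ realizing $X$, and that $|\underline{\alpha}(X)|\le 2$; this is all the proposition asks for. The bound on the support and the relation $\alpha(X)\preceq X$ are immediate from the definition of $\alpha$: in each case $\underline{\alpha}(X)$ has at most two elements, every vertex put into $\alpha^+(X)$ (the pair $u^+,v^+$ in Case (1), or the center designator $t$ when $t\in X^+$ in Case (3)) lies in $X^+$, and every vertex put into $\alpha^-(X)$ (the vertex $s$, and in Case (3) possibly the center designator $t$) lies in $X^-$. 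Properness is also immediate: by inspection of its four cases, $\beta$ always outputs a ball of radius exactly $r$. It remains to verify consistency. The all-zero sample $X=\varnothing$, realizable by every ball, is dealt with by $\alpha(\varnothing)=\varnothing$ and $\beta(\varnothing):=$ an arbitrary fixed ball of radius $r$; in all other cases one of $X^+,X^-$ is non-empty, and $|X^+|\le 1$ raises no issue, since a singleton $X^+$ satisfies $u^+=v^+$ and $X^+=\varnothing$ triggers only Cases (2)--(3), which do not mention $u^+,v^+$.

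In Case (1), $\alpha(X)=\big((u^+,v^+),\varnothing\big)$, so $Y^-=\varnothing$ and $\beta$ returns some ball $B_r(y)$ containing $\{u^+,v^+\}$. By Lemma~\ref{positive-part-trees}, $X^+\subseteq B_r(y)$; thus $B_r(y)$ is an $r$-ball realizing $X^+$, and being in Case (1) of Lemma~\ref{lem:center_designator} means every such ball also realizes $X$, so $B_r(y)\cap X^-=\varnothing$. Hence $\beta(\alpha(X))$ realizes $X$. In Case (2), $\alpha(X)=\big(\varnothing,(s)\big)$ for a vertex $s\in X^-$ with the property that $B_r(x)$ realizes $X$ for every $x\in S_{r+1}(s)$; $\beta$ returns $B_r(y)$ for some $y\in S_{r+1}(s)$, which therefore realizes $X$ by that very property.

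In Case (3), $\alpha$ records $s\in X^-$ together with a center designator $t\in X$ of $s$, i.e.\ (by the proof of Lemma~\ref{lem:center_designator}) a vertex $t\in X$ with $\phi_s^{+}(t)=x$ if $t\in X^+$, or $\phi_s^{-}(t)=x$ if $t\in X^-$, where $x\in S_{r+1}(s)$ is a vertex with $B_r(x)$ realizing $X$. The coordinate-order convention on $\alpha^-(X)$, together with the split of the two recorded vertices across $\alpha^+,\alpha^-$ and the fact that $\beta$ knows their labels and supports, let $\beta$ recover which vertex is $s$, which is $t$, and whether $t\in X^+$ or $t\in X^-$. Accordingly $\beta$ computes $\phi_s^{-}(t)$ in subcase (3$'$) and $\phi_s^{+}(t)$ in subcase (3$''$); in both subcases the relevant set $S_{r+1}(s)\cap B_r(t)$ (resp.\ $S_{r+1}(s)\cap\cB_r(t)$) is a non-empty proper circular interval of $S_{r+1}(s)$ by Lemma~\ref{intersection-ball-sphere} and the choice of $t$, so $\phi_s^{\pm}(t)$ is well defined and equals $x$. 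Thus $\beta(\alpha(X))=B_r(x)$ realizes $X$, completing the proof.

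I do not expect a genuine obstacle here: the real work is contained in Lemmas~\ref{positive-part-trees}, \ref{lem:r+1}, \ref{intersection-ball-sphere}, and \ref{lem:center_designator}, and this last step is pure bookkeeping. The only points that need attention are that $\beta$ be well defined on all of $\Ima(\alpha)$ — in particular that $\phi_s^{\pm}(t)$ make sense on the compressed samples produced in Case (3) — that the label and order conventions let $\beta$ tell the four reconstructor cases apart, and that the degenerate samples ($X=\varnothing$, $X^+=\varnothing$, $|X^+|=1$) not break any reference to $u^+,v^+$.
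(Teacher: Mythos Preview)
Your proof is correct and follows essentially the same approach as the paper: a case-by-case verification that $\beta(\alpha(X))$ realizes $X$, matching the cases of $\alpha$ and $\beta$ to those of Lemma~\ref{lem:center_designator} and invoking Lemma~\ref{positive-part-trees} for Case~(1). You are somewhat more explicit than the paper about the size bound, properness, well-definedness of $\phi_s^{\pm}(t)$, and the degenerate samples, but the structure and content of the argument are the same.
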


\begin{proof}
Let $Y=\alpha (X)$ and $B_r(y)=\beta(Y)$. Case (1) in the definition of $\beta$ corresponds to Case (1) in the definition of $\alpha$, and to Case (1) in Lemma~\ref{lem:center_designator}.
In this case, $B_r(y)$ realizes $X$ since it realizes $X^+$ by Lemma~\ref{positive-part-trees}. Case (2) in the definition of $\beta$ corresponds to Case (2) in the definition of $\alpha$, and to Case (2) in Lemma~\ref{lem:center_designator}.
In this case, $B_r(y)$ realizes $X$. Cases (3$'$) and (3$''$) in the definition of $\beta$ correspond to Case (3) in the definition of $\alpha$ and to Case (3) in Lemma~\ref{lem:center_designator}.
Thus, in  Cases (3$'$) and (3$''$), $t$ is the center designator of $s$. By Lemma~\ref{lem:center_designator} and its proof, $B_r(y)$ realizes $X$.
\end{proof}

\subsection{Proper LSCS without
  information for \texorpdfstring{$\B_r(T)$}{ℬr(T)} for combinatorial trees}

Next, we show how to design proper labeled sample compression schemes
without information for balls of fixed radius in trees. We
consider the case where the vertices of $Y$ are not distinguishable by
their order, {\it i.e.}, $\alpha(X)$ is not a
vector. 
The only case where this causes problems is when
$\alpha(X) := \{t,s\} =: Y$ and $t \in X^-$ is the center designator of
$s \in X^-$, since it is not known which of the vertices in $Y$ is the
center designator. To get past this problem, we use an arbitrary
labeling $\ell'$ of the vertices of $T$, that assigns a distinct
integer in $1,\ldots,n$ to each of the vertices of $T$, {\it i.e.},
for any two distinct vertices $u,v \in V(T)$, we have that
$\ell'(u) \neq \ell'(v)$, and we increase the size of the support to
at most $6$ in all cases. Namely, we replace item (3$''$) in the
encoding with information by the following lines (where $t$
denotes a center designator of $s$):

\begin{enumerate}
    \item[(1)] if there exists $s, t \in X^-$ such that $\ell'(s) < \ell'(t)$,
    or if $|X| = 2$, then set $\alpha^-(X) := \{s,t\}$ and $\alpha^+(X) :=
    \varnothing$;

    \item[(2)] if $|X^+| \ge 1$, then set $\alpha^-(X) := \{s,t\}$ and
    $\alpha^+(X) := \{w\}$, where $w$ is an arbitrary vertex of $X^+$;

    \item[(3)] at this point, $|X^+| = 0$ and $\ell'(t) <
    \ell'(s)$ for any center designator $t \in X^-$ of $s \in X^-$.
    If there exist $s, t, p \in X^-$ such that $\ell'(t) < \ell'(s) <
    \ell'(p)$, or if $X^- = \{s,t,p\}$, then set $\alpha^-(X) := \{s,t,p\}$;

    \item[(4)] from now on, $\ell'(s) = \max\{ \ell'(v) : v \in X^- \}$.
    If there exists $s, t, p, q \in X^-$ such that $\ell'(t) < \ell'(p) <
    \ell'(q) < \ell'(s)$, or if $X^- = \{s,t,p,q\}$, then set $\alpha^-(X)
    := \{s,t,p,q\}$;

    \item[(5)] if there exists $s, t, p, q, w \in X^-$ such that $\ell'(w) < \ell'(q) < \ell'(t) < \ell'(p)
    < \ell'(s)$, or if $X^- = \{s,t,p,q,w\}$, then set $\alpha^-(X) :=
    \{s,t,p,q,w\}$, where $q$ and $w$ are two arbitrary vertices of $X^-$
    distinct from $s$, $t$, $p$, and each other;

    \item[(6)] if $|X^-| \ge 6$, then set $\alpha^-(X) := \{s,t,p,q,w,z\}$,
    where $p$, $q$, $w$, and $z$ are four arbitrary vertices of $X^-$
    distinct from $s$, $t$, and each other.
\end{enumerate}

The reconstructor $\beta$ takes any sign vector $Y$ from $\Ima(\alpha) :=
\alpha(\downarrow\C)$ (with a support of size at most $6$) and returns a ball
$B_r$ of radius $r$ of $T$ as described in the following lines:

\begin{enumerate}

    \item[(1)] if $Y^+ = \varnothing$ and $Y^- = \{s\}$, then $\beta(Y)$ is any
    ball $B_r(y)$ centered at a vertex $y$ of $S_{r+1}(s)$;

    \item[(2)] if $|Y^+| \ge 1$ and $Y^- = \varnothing$, then $\beta(Y)$ is any ball
    $B_r(y)$ containing $Y$;

    \item[(3)] if $Y^+ = \{t\}$ and $Y^- = \{s\}$, then $\beta(Y)$ is the ball
    $B_r(y)$ centered at the vertex $y = \phi_s^+(t)$;

    \item[(4)] if $|Y| = |Y^-| = 2$, then let $s, t \in Y$ be such that
    $\ell'(s) < \ell'(t)$.
    Let $y := \phi_s^-(t)$.
    If $B_r(y) \cap Y = \varnothing$, then $\beta(Y):=B_r(y)$.
    Otherwise, $\beta(Y)$ is any ball $B_r(y)$ avoiding $Y^-$;

    \item[(5)] if $|Y^-| = 2$ and $|Y^+| = 1$, then let $s, t \in Y$ be such
    that $\ell'(s) > \ell'(t)$. Then, $\beta(Y)$ is the ball $B_r(y)$ centered
    at the vertex $y = \phi_s^-(t)$;

    \item[(6)] if $|Y| = |Y^-| = 3$, then let $s, t, p \in Y$ be such that
    $\ell'(t) < \ell'(s) < \ell'(p)$.
    Let $y := \phi_s^-(t)$.
    If $B_r(y) \cap Y = \varnothing$, set $\beta(Y):=B_r(y)$. Otherwise, $\beta(Y)$ is any ball $B_r(y)$ avoiding
    $Y^-$;

    \item[(7)] if $|Y| = |Y^-| = 4$, then let $t \in Y$ be such that $\ell'(t)
    = \min \{ \ell'(y) : y \in Y \}$, and let $s \in Y$ be such that $\ell'(s)
    = \max \{ \ell'(y) : y \in Y \}$.
    Let $y := \phi_s^-(t)$.
    If $B_r(y) \cap Y = \varnothing$, then $\beta(Y):=B_r(y)$.
    Otherwise, $\beta(Y)$ is any ball $B_r(y)$ avoiding $Y^-$;

    \item[(8)] if $|Y| = |Y^-| = 5$, then let $s, t, p, q, w \in Y$ be such that
    $\ell'(w) < \ell'(q) < \ell'(t) < \ell'(p) < \ell'(s)$.
    Let $y := \phi_s^-(t)$.
    If $B_r(y) \cap Y = \varnothing$, then $\beta(Y):=B_r(y)$.
    Otherwise, $\beta(Y)$ is any ball $B_r(y)$ avoiding $Y^-$;

    \item[(9)] if $|Y| = |Y^-| = 6$, then let $s \in Y$ be such that $\ell'(s)
    = \max \{ \ell'(y) : y \in Y \}$, and let $t \in Y$ be such that $\forall y
    \in Y, y \ne s, \ell'(y) < \ell'(t)$. Then, $\beta(Y)$ be the ball $B_r(y)$
    centered at the vertex $y = \phi_s^-(t)$.
\end{enumerate}

The following Proposition~\ref{LSCS-trees-no-extra-information} is now easy to
obtain since in line (4) ((6), resp.), either $t$
is the center designator of $s$, or $|X^+| = 0$ and $|X^-| = 2$ ($|X^-| = 3$, resp.). Moreover, in line (7) ((8), resp.), either $t$ is the center designator of $s$, or $|X^+| = 0$ and $|X^-| = 4$ ($|X^-| = 5$, resp.).

\begin{proposition} \label{LSCS-trees-no-extra-information}
    For any tree $T=(V,E)$ and any radius $r$, the pair $(\alpha,\beta)$ of
    maps defines a proper labeled sample compression scheme of size $6$ for $\B_r(T)$.
\end{proposition}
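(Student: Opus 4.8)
The plan is to verify that the pair $(\alpha,\beta)$ described just above the statement is indeed a proper LSCS of size $6$ for $\B_r(T)$, by checking two things: (i) for every realizable sample $X \in \mathord\downarrow\B_r(T)$, the compressed sample $\alpha(X)$ has support of size at most $6$ and satisfies $\alpha(X) \preceq X$; and (ii) $\beta(\alpha(X))$ is an $r$-ball of $T$ that realizes $X$. Point (i) is immediate from inspection of the six cases in the definition of $\alpha$: in each case the support is an explicitly named subset of $X$ of size between $2$ and $6$, and the $+/-$ signs match those in $X$. So the work is in (ii), and the key observation — already flagged in the paragraph preceding the statement — is that the only genuinely new difficulty relative to Proposition~\ref{LSCS-trees-extra-information} is the ambiguous case where $\alpha(X)$ is an \emph{unordered} set containing a vertex $s \in X^-$ and its center designator $t \in X^-$; in the information-equipped scheme the order of the pair told $\beta$ which was which, and here we must recover that from the fixed labeling $\ell'$ together with padding $X^-$ up to a larger fixed size.

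The main step is a case analysis mirroring the six encoding rules. First I would dispatch the easy cases: if $Y^- = \varnothing$ (rule (2) of $\beta$), then $\beta(Y)$ is any $r$-ball containing $Y^+ \subseteq X^+$, which realizes $X$ by Case (1) of Lemma~\ref{lem:center_designator} together with Lemma~\ref{positive-part-trees}, since this rule is only invoked from encoding Case (1) of $\alpha$. If $Y^+ = \varnothing$ and $|Y^-| = 1$ (rule (1) of $\beta$), this is encoding Case (2), and every vertex of $S_{r+1}(s)$ centers an $r$-ball realizing $X$. If $Y^+ = \{t\}$, $Y^- = \{s\}$ (rule (3) of $\beta$), this is encoding Case (3) with $t \in X^+$ the center designator of $s$, and $\phi_s^+(t)$ centers a realizing ball by the proof of Lemma~\ref{lem:center_designator}. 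Likewise rule (5) of $\beta$ with $|Y^-| = 2, |Y^+| = 1$ handles encoding Case (2) of the new list, where $t \in X^-$ is the center designator of $s$ and the extra positive vertex $w$ breaks the symmetry: the larger-$\ell'$ element must be $s$ (since in that encoding rule $\ell'(t) < \ell'(s)$ is forced), so $\beta$ correctly recovers $\phi_s^-(t)$.

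The crux is rules (4), (6), (7), (8), (9) of $\beta$, matching encoding rules (1), (3), (4), (5), (6), where $Y = Y^-$ is a padded negative set of size $2,3,4,5,6$. The point, stated in the sentence after the encoding lists, is a dichotomy: in each such case, \emph{either} the distinguished pair $(s,t)$ selected by $\beta$ from the $\ell'$-order genuinely has $t$ a center designator of $s$ — in which case $y = \phi_s^-(t)$ centers a realizing $r$-ball, and since that ball avoids $X^-$ it in particular avoids $Y^- \subseteq X^-$, so the test $B_r(y)\cap Y = \varnothing$ passes and $\beta(Y) = B_r(y)$ realizes $X$ — \emph{or} we are in a "degenerate" situation where $|X^+| = 0$ and $X^- = Y^-$ is exactly this small set, meaning $\beta$ knows all of $X$ and may simply return any $r$-ball avoiding $Y^-$ (such a ball exists since $X$ is realizable), which trivially realizes $X$ because $X^+=\varnothing$. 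I would verify that the $\ell'$-threshold conditions in encoding rules (1)–(6) force exactly this dichotomy: rule (1) fires when either there are two negatives with $\ell'(s)<\ell'(t)$ among a center-designator pair (so $\beta$'s choice of the $\ell'$-larger as $t$ matches) or $|X|=2$; rules (3)–(6) progressively assume, via the "at this point/from now on" preamble, that $\ell'(t) < \ell'(s)$ for every center-designator pair and that $s$ is $\ell'$-maximal in $X^-$, then pad with arbitrary further negatives so that $\beta$ can always identify $s$ (the $\ell'$-max) and $t$ (the $\ell'$-max among the rest, or the $\ell'$-min, as the rule dictates) — and if the padding is impossible because $X^-$ is too small, that smallness is itself recorded and handled by the degenerate branch. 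The main obstacle is purely bookkeeping: checking that the inequality patterns $\ell'(w)<\ell'(q)<\ell'(t)<\ell'(p)<\ell'(s)$ etc.\ in the encoder are exactly inverted/selected correctly by the decoder so that $\beta$ reconstructs the same $(s,t)$ that $\alpha$ intended, and that no sample falls through the cracks between consecutive rules. Once this is confirmed, the proposition follows, as noted, directly from the correspondence between the encoding rules, the decoding rules, and Lemma~\ref{lem:center_designator}.
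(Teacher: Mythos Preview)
Your proposal is correct and follows exactly the approach the paper sketches in the one-sentence justification preceding the proposition: the dichotomy that in each of the padded negative cases either the $\ell'$-selected pair $(s,t)$ is a genuine center-designator pair (so $B_r(\phi_s^-(t))$ realizes $X$ and passes the disjointness test), or else $X^+=\varnothing$ and $X^-=Y^-$ so that any $r$-ball avoiding $Y^-$ trivially realizes $X$. You have simply unpacked the paper's terse remark into a full case analysis, which is entirely appropriate.
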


\section{Trees of cycles}\label{sec:cacti}

In this section, we present a proper labeled sample compression scheme of size 6 for balls of trees of cycles.
Recall that a \emph{tree of cycles} (or \emph{cactus}) is a graph in which each
\emph{block} (2-connected component) is a cycle or an edge. As mentioned in the introduction, the family of balls of a tree of cycles has VC-dimension $3$. In the literature, trees of cycles are classical examples of graphs of treewidth 2, and hence, difficult algorithmic problems can be solved on such graphs in polynomial time. However, the metric structure of trees of cycles is more involved (see, {\it e.g.},~\cite{GoKaPa}). As we have already seen in Section~\ref{sec:trees}, the main difficulty when considering proper labeled sample compression schemes for trees of cycles does not come from a single tree. For completeness, in this section, we first show that the difficulty for such schemes for trees of cycles also does not come from a single cycle, by giving a proper labeled sample compression scheme of size 3 for balls of cycles. In fact, as will be discussed in Remark~\ref{remark}, the difficulty comes from a spider which is a single cycle with paths of different lengths emanating from it. 

\subsection*{Cycles}\label{sec:cycles}

Given a cycle $C$, we define a proper labeled sample compression scheme of size $3$ for the set $\B(C)$ of balls of $C$ as follows. We first define the compressor function $\alpha$. For any realizable sample $X$ for $\B(C)$,
\begin{enumerate}[(1)]
\item if $|X^{+}| \leq 1$ or $|X|=|X^+|=2$, then set $\alpha(X) := X^{+}$;
\item otherwise, if $|X^{-}|=0$ and $|X^+|\geq 3$, then set $\alpha(X):= \{u,v,w\}$, where $u,v,w$ are any $3$ vertices in $X^+$;
\item otherwise ($|X^-|\geq 1$ and $|X^+|\geq 2$), there exist distinct vertices $u,v \in X^+$ and a path $P$ between $u$ and $v$ on $C$ such that $X \cap P = X^+$, and so, set $\alpha(X^+)=\{u,v\}$ and $\alpha(X^-)=\{w\}$, where $w\in X^-$ is such that $\min_{s\in X^-}\{d(u,s),d(v,s)\}=\min\{d(u,w),d(v,w)\}$.
\end{enumerate}

We now describe the reconstructor function $\beta$. Let $Y:=\alpha(X)$.
\begin{enumerate}[(1)]
\item if $Y = \varnothing$, then $\beta(Y)$ is the empty ball;
\item if $Y = \{u\}$, then $\beta(Y) = B_0(u)$;
\item if $Y = \{u,v\}$ or $Y^+=\{u,v,w\}$, then $\beta(Y)$ is any ball of $C$ covering $C$;
\item if $Y^+ = \{u,v\}$ and $Y^- = \{w\}$, then $\beta(Y)= B_r(y)$, where $P$ is the path between $u$ and $v$ on $C$ that does not contain $w$, $r=\left\lceil \frac{d_P(u,v)}{2} \right\rceil$ ($d_P(u,v)$ is the distance between $u$ and $v$ in $P$), and $y\in P$ is a center of $P$ such that $d(y,w)>r$.
\end{enumerate}

\begin{proposition} \label{LSCS-cycles}
For any cycle $C=(V,E)$, the  pair  $(\alpha,\beta)$ of maps defines a proper labeled sample compression scheme of size 3 for $\B(C)$.
\end{proposition}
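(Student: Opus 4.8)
The plan is to verify that for every realizable sample $X$ for $\B(C)$, the ball $\beta(\alpha(X))$ is consistent with $X$, checking each of the cases (1)--(3) in the definition of $\alpha$ and the corresponding cases in the definition of $\beta$. The size bound $|\underline{\alpha(X)}|\le 3$ is immediate from the description of $\alpha$, so the whole content is the consistency check.

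First I would dispose of the easy cases. If $|X^+|\le 1$, then $\alpha(X)=X^+$ and $\beta$ returns either the empty ball (when $X^+=\varnothing$) or $B_0(u)$ (when $X^+=\{u\}$); in the latter subcase I must observe that since $X$ is realizable, $u$ lies in some ball that avoids $X^-$, but in fact we only need $B_0(u)$ to contain $X^+=\{u\}$ and avoid $X^-$, which holds because $X$ realizable forces $u\notin X^-$ — more carefully, realizability by a ball $B$ with $X^+\subseteq B$ and $X^-\cap B=\varnothing$ gives $u\in B$ hence $u\notin X^-$, so $B_0(u)\cap X^-=\varnothing$. If $|X|=|X^+|=2$ or if $|X^-|=0$ and $|X^+|\ge 3$, then $X^-=\varnothing$ and $\beta$ returns a ball covering all of $C$ (which exists: take $B_{\lceil|V|/2\rceil}(y)$ for any $y$), and this trivially contains $X^+$ and is disjoint from the empty set $X^-$.

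The real case is (3): $|X^-|\ge 1$ and $|X^+|\ge 2$. Here the key structural claim, which I would isolate as the heart of the argument, is that there exist $u,v\in X^+$ and one of the two $u$--$v$ arcs $P$ of $C$ with $X\cap P=X^+$; equivalently, the vertices of $X^+$ all lie on a common arc containing no vertex of $X^-$. This follows because $X$ is realizable by a ball $B_{r'}(x')$, and a ball in a cycle is an arc; since $X^+\subseteq B_{r'}(x')$ and $X^-\cap B_{r'}(x')=\varnothing$, the arc $B_{r'}(x')$ is an arc containing $X^+$ and avoiding $X^-$, and taking $u,v$ to be the two endpoints of the minimal subarc of it spanning $X^+$ gives the claimed $P$ as a subarc of $B_{r'}(x')$. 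Then I would show $\beta(Y)=B_r(y)$ with $r=\lceil d_P(u,v)/2\rceil$ and $y$ the center of $P$ with $d(y,w)>r$ is consistent: (a) $X^+\subseteq B_r(y)$ because $X^+\subseteq P$ and every vertex of $P$ is within $\lceil d_P(u,v)/2\rceil$ of a center $y$ of $P$ — here I need that $y\in P$ and distances along $P$ agree with distances in $C$ for points of $P$, which holds as $P$ is a shortest-path arc (it is contained in the geodesically convex ball $B_{r'}(x')$, so $P$ is itself geodesic); (b) $B_r(y)\cap X^-=\varnothing$: by the choice of $w$ as the element of $X^-$ minimizing $\min\{d(u,\cdot),d(v,\cdot)\}$, every $s\in X^-$ is at least as far from $\{u,v\}$ as $w$ is, and since $d(y,w)>r$ one derives $d(y,s)>r$ for all $s\in X^-$; I must also check that such a center $y$ with $d(y,w)>r$ actually exists (of the one or two centers of $P$, at least one is on the far side from $w$), using that $w\notin P$.

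The main obstacle I anticipate is part (b): converting the ``$w$ is the closest $X^-$-vertex to the endpoints $u,v$'' condition into ``$B_r(y)$ avoids all of $X^-$.'' This requires a careful case analysis of where $s\in X^-$ sits on the cycle relative to the arc $P$ and the center $y$ — a vertex of $X^-$ on the opposite side of $P$ from $w$ could a priori be close to $y$ — so one must use the realizability of $X$ (the ball $B_{r'}(x')\supseteq P$ avoids $X^-$, forcing every $X^-$-vertex outside the arc $B_{r'}(x')$) together with the minimality in the choice of $w$ to rule this out. I would also need to confirm that $d(y,w)>r$ can be achieved simultaneously with $y$ being a center of $P$; when $d_P(u,v)$ is even there is a unique center and one checks $d(y,w)>r$ directly from $w\notin P$, and when it is odd there are two adjacent centers and one of them lies strictly farther from $w$. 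Everything else is routine once the arc structure is pinned down.
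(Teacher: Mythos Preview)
Your overall plan matches the paper's proof: same case split, same structural claim about the arc $P$ with $X\cap P=X^+$, same reconstruction $B_r(y)$. Two points deserve correction or simplification.

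First, your justification for (a) contains a false sub-claim. Balls in a cycle are \emph{not} geodesically convex in general: an arc of length greater than $n/2$ is not convex, and this can occur (take $n=10$, $X^+=\{0,7\}$, $X^-=\{8\}$; then $P=\{0,\ldots,7\}$ has $d_P(0,7)=7\ne 3=d_C(0,7)$). Fortunately you do not need $d_C=d_P$ on $P$: since $d_C(y,z)\le d_P(y,z)\le r$ for every $z\in P$, one gets $P\subseteq B_r(y)$ directly, and hence $X^+\subseteq B_r(y)$.

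Second, your plan for (b) is more laborious than necessary. The paper avoids any case analysis on the position of a general $s\in X^-$ by observing that $B_r(y)$, as a set, equals $P$ when $d_P(u,v)$ is even and equals $P\cup\{z\}$ for a single extra vertex $z$ (a neighbor of $u$ or $v$ outside $P$) when $d_P(u,v)$ is odd. In the even case $X\cap P=X^+$ finishes immediately. In the odd case one only has to rule out $z\in X^-$: realizability forces that at most one of the two outside-neighbors of $u$ and $v$ lies in $X^-$ (otherwise no odd-length arc can contain $\{u,v\}$ and avoid both), and if one does, it must equal $w$ by the minimality defining $w$; since $y$ was chosen with $d(y,w)>r$, we get $z\ne w$. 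This replaces your anticipated ``careful case analysis'' by a one-line counting of $|B_r(y)\setminus P|$.
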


\begin{proof}
We just need to show that, for any realizable sample $X$ for $\B(C)$, $\beta(\alpha(X))$ realizes $X$. If $X^+ = \varnothing$, then $\beta(\alpha(X)) = \varnothing =X^+$.  If $X^+ = \{u\}$, then $\beta(\alpha(X)) = \{u\} = X^+$.  If $|X^-|=0$ and $|X^+|\geq 2$, then $\beta(\alpha(X))=V$, and $V\cap X=X^+$. Finally, if $|X^-|\geq 1$ and $|X^+|\geq 2$, then there exist $u,v\in X^+$ and a path $P$ between $u$ and $v$ on $C$ such that $X \cap P = X^+$. Indeed, this follows since $X$ is a realizable sample for $\B(C)$, and any ball in $C$ is either a path or the entire cycle $C$. In this case, $\alpha(X^+)=\{u,v\}$ and $\alpha(X^-)=\{w\}$, where $w\in X^-$ is such that $\min_{s\in X^-}\{d(u,s),d(v,s)\}=\min\{d(u,w),d(v,w)\}$. Then, $\beta(\alpha(X))=B_r(y)$, where $P$ is the path between $u$ and $v$ not containing $w$, $r=\left\lceil \frac{d_P(u,v)}{2} \right\rceil$, and $y\in P$ is a center of $P$ such that $d(y,w)>r$. That is, $\{d(y,u),d(y,v)\}\subseteq \{r,r-1\}$, and so, $X^+\subseteq X\cap B_r(y)$. If $d_P(u,v)$ is even, then $B_r(y)=P$, and by the definition of $P$, $X\cap P=X^+$, and so, we are done. If $d_P(u,v)$ is odd, then at most one of the neighbors of $u$ and $v$ may be in $X^-$ since $X$ is a realizable sample for $\B(C)$. If one of the neighbors of $u$ and $v$ is in $X^-$, then it must be $w$ by definition. Since $B_r(y)$ contains $P$ and at most one of the neighbors of $u$ and $v$ outside of $P$, but not $w$, then $X\cap B_r(y)=X^+$.
\end{proof}


%
%
%

\subsection*{Trees of cycles}
Let $G$ be a tree of cycles.  For a vertex $v$ of $G$, let $C(v)=\{v\}$ if $v$ is a cut vertex,
and otherwise, let $C(v)$ be the unique block containing $v$.
Let $T(G)$ be the tree whose vertices are the cut vertices and the blocks of $G$, and where a cut
vertex $v$ is adjacent to a block $B$ of $G$ if and only if $v\in B$.  For any two vertices $u,v$ of $G$, let $C(u,v)$ denote the union of all cycles
and/or edges on the unique path of $T(G)$ between $C(u)$ and $C(v)$. Note
that $C(u,v)$ is a path of cycles, and that $C(u,v)$ is gated. Note also that if $u$ and $v$ belong to the same block of $G$, then $C(u,v)$ is this block.
Let $X$ be a realizable sample for $\B(G)$, and $\{u^+,v^+\}$ a diametral
pair of $X^+$. The next lemma shows that the center of a ball realizing $X$ can always be found in $C(u^+,v^+)$.

\begin{lemma}\label{lem-xinC} Let $B_r(x)$ be a ball realizing $X$, $x'$
be the gate of $x$ in $C(u^+,v^+)$, and $r'=r-d(x,x')$.
Then, the ball $B_{r'}(x')$ also realizes $X$. 
\end{lemma}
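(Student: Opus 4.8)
The statement to prove is Lemma~\ref{lem-xinC}: if $B_r(x)$ realizes $X$, $x'$ is the gate of $x$ in the gated path-of-cycles $C(u^+,v^+)$, and $r' = r - d(x,x')$, then $B_{r'}(x')$ also realizes $X$. The natural approach is to argue separately that $B_{r'}(x')$ contains $X^+$ (it realizes the positive part) and that $B_{r'}(x')$ is disjoint from $X^-$ (it realizes the negative part), exploiting the gate property at every turn.

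Let me think about the structure. First I would observe that since $u^+, v^+ \in C(u^+,v^+)$ and $x'$ is the gate of $x$, we have $x' \in I(x, u^+)$ and $x' \in I(x, v^+)$; more generally, for any vertex $z \in C(u^+,v^+)$, $x' \in I(x,z)$, so $d(x,z) = d(x,x') + d(x',z)$.

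\textbf{Positive part.} Take any $w \in X^+ \subseteq B_r(x)$. The key sub-claim is that every vertex of $X^+$ lies in $C(u^+,v^+)$ — this should follow from the fact that $\{u^+,v^+\}$ is a diametral pair of $X^+$ together with the gatedness of $C(u^+,v^+)$: if some $w \in X^+$ had its gate $w^* \ne w$ in $C(u^+,v^+)$, then going through $w^*$ would force $d(u^+, w) > d(u^+, v^+)$ or $d(v^+,w) > d(u^+,v^+)$ (whichever side $w^*$ falls on the $(u^+,v^+)$-path), because $C(u^+,v^+)$ has $u^+,v^+$ as "endpoints" in the path-of-cycles sense — this is essentially the same argument as in Proposition~\ref{USCS-trees} and Lemma~\ref{positive-part-trees}, now run inside the gated subgraph. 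Actually I should be a bit careful: $u^+$ and $v^+$ need not be antipodal inside $C(u^+,v^+)$, but the point is that any shortest path from $w^*$ leaving $C(u^+,v^+)$ toward $w$ adds length, contradicting diametrality. Once $w \in C(u^+,v^+)$, we get $d(x,w) = d(x,x') + d(x',w)$, so $d(x',w) = d(x,w) - d(x,x') \le r - d(x,x') = r'$, hence $w \in B_{r'}(x')$.

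\textbf{Negative part.} Take any $s \in X^-$, so $d(x,s) > r$. Let $s'$ be the gate of $s$ in $C(u^+,v^+)$ (possibly $s = s'$). Since $x' \in I(x, s')$ and $s' \in I(s', s)$ with $s'$ the gate, we need $x' \in I(x,s)$: indeed the gate property gives $s' \in I(x', s)$, and $x' \in I(x,s')$, so concatenating, $d(x,s) = d(x,x') + d(x', s') + d(s',s) = d(x,x') + d(x',s)$. Therefore $d(x',s) = d(x,s) - d(x,x') > r - d(x,x') = r'$, so $s \notin B_{r'}(x')$. This direction is clean.

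\textbf{Main obstacle.} The routine-looking but genuinely load-bearing step is the sub-claim that $X^+ \subseteq C(u^+,v^+)$, i.e., that a diametral pair of $X^+$ "spans" all of $X^+$ within the tree-of-cycles metric. In a tree this is immediate; in a path of cycles one must be slightly careful because within a single cycle a diametral pair of $X^+$ need not be antipodal and a third point of $X^+$ could a priori sit on the "far arc." I expect the correct statement is: the smallest gated subgraph containing $X^+$ is contained in $C(u^+,v^+)$ whenever $\{u^+,v^+\}$ is diametral — and the proof is the gate argument sketched above, possibly invoking that in $C(u^+,v^+)$ the vertices $u^+, v^+$ realize the diameter of $X^+$ and any $w$ outside would strictly exceed it through its gate. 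I would also want to double-check that $r' \ge 0$, which holds since $x' \in B_r(x)$ as $x'$ is the gate of $x$ and $u^+ \in B_r(x)$ forces $d(x,x') \le d(x,u^+) \le r$ (using $x' \in I(x,u^+)$). Assembling these, $B_{r'}(x') \cap X^- = \varnothing$ and $X^+ \subseteq B_{r'}(x')$, so $B_{r'}(x')$ realizes $X$, completing the proof.
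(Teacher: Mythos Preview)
Your negative part is fine (and in fact simpler than you make it: for any $s\in V$, $d(x,s)\le d(x,x')+d(x',s)$, so $B_{r'}(x')\subseteq B_r(x)$ immediately gives $X^-\cap B_{r'}(x')=\varnothing$).

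The positive part has a genuine gap: the sub-claim ``every vertex of $X^+$ lies in $C(u^+,v^+)$'' is \emph{false}. Take the star $K_{1,3}$ with center $c$ and leaves $a,b,d$, and let $X^+=\{a,b,d\}$. Any two leaves form a diametral pair, say $\{u^+,v^+\}=\{a,b\}$; then $C(a,b)$ is the path $a\!-\!c\!-\!b$, and $d\notin C(a,b)$. Your proposed argument (``going through the gate $w^*$ would force $d(u^+,w)>d(u^+,v^+)$ or $d(v^+,w)>d(u^+,v^+)$'') fails here: the gate of $d$ is $c$, and $d(a,d)=d(b,d)=2=d(a,b)$, with no strict inequality. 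So you cannot conclude $w\in C(u^+,v^+)$, and hence cannot directly apply the gate property to get $x'\in I(x,w)$.

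The paper avoids this by arguing differently. Assuming $z\in X^+\setminus B_{r'}(x')$, it splits into two cases. If $x'\in I(z,u^+)$ (or symmetrically $I(z,v^+)$), then $d(u^+,z)=d(u^+,x')+d(x',z)>d(u^+,x')+r'\ge d(u^+,x')+d(x',v^+)\ge d(u^+,v^+)$, contradicting diametrality. Otherwise $x'\notin I(z,u^+)\cup I(z,v^+)$; since $x'$ is the gate of $x$ in $C(u^+,v^+)$ (hence a cut vertex separating $x$ from $C(u^+,v^+)\setminus\{x'\}$), all three of $u^+,v^+,z$ lie in the same component of $G\setminus\{x'\}$, and that component does not contain $x$. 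Thus $x'\in I(x,z)$ after all, giving $d(x,z)=d(x,x')+d(x',z)>r$, contradicting $z\in X^+\subseteq B_r(x)$. The point is that one does not need $z\in C(u^+,v^+)$; one only needs that either $x'$ separates $z$ from one of $u^+,v^+$ (diametrality contradiction) or it doesn't, in which case $x'$ still separates $x$ from $z$.
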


\begin{proof} Let $B':=B_{r'}(x')$. From the definition of $x'$ and $r'$, it immediately follows that $B'\subseteq B_r(x)=B$, implying that $X^-\cap B'=\varnothing$. To prove the inclusion $X^+\subseteq B'$,
suppose by way of contradiction  that there exists $z \in X^+\setminus B'$. First, suppose that $x' \in I(z,u^+)\cup I(z,v^+)$, say $x'\in I(z,u^+)$. Then, $d(u^+,v^+) \leq d(u^+,x')+d(x',v^+) \leq d(u^+,x')+r'<d(u^+,x')+d(x',z)=d(u^+,z)$, and thus, $\{ u^+,v^+\}$ is not a  diametral pair of $X^+$, a contradiction. Consequently, $x' \notin I(z,u^+)\cup I(z,v^+)$. In particular, $x'\ne u^+,v^+$. This shows that, in the graph $G\setminus\{x'\}$,
the three vertices $u^+, v^+$, and $z$ belong to the same connected component and
$x'$ is a cut vertex separating the vertex $x$ from  $u^+,v^+,z$. Since $z
\notin B_{r'}(x')$, we have
$d(x,z)=d(x,x')+d(x',z)>d(x,x')+r'=r$, contradicting the assumption  that $z \in X^+=X\cap B_r(x)$.
\end{proof}


In what follows, let $B_r(x)$ be a ball realizing $X$ with $x$ in
$C(u^+,v^+)$ 
(it exists by Lemma~\ref{lem-xinC}). If $x \in \{u^+, v^+\}$ or $x$ is
a cut vertex of $C(u^+,v^+)$, then we set $C=\{x\}$. Otherwise, $x$
belongs to a single cycle of $C(u^+,v^+)$ and  we let $C = C(x)$. The main
idea is to encode a region of $C(u^+,v^+)$ where the center $x$ of
$B_r(x)$ is located (this region may be $C$), the center, and the
radius of $B_r(x)$ by a few vertices of $X$.  The diametral pair
$\{ u^+,v^+\}$ is in $\alpha(X)$.  If $X$ contains a vertex $w$ whose
gate in $C$ is distinct from the gates of $u^+$ and $v^+$ (in $C$),
then $C$ is easily detected by including $w$ in $\alpha(X)$.
In this case, it remains to find the position
of $x$ in $C$ and to compute the radius $r$. This is done by using 2 or 3 vertices of $X$.
Otherwise, we show that $B_r(x)$ is determined by 4 vertices in $X$.

\subsection*{The partitioning of $X$}
For a vertex $y\in C(u^+,v^+)$, set $r_y:=\max\{ d(y,u^+), d(y,v^+)\}$ and
$r_y^* := \max \{ d(y,w) : w \in X^+ \}$. Clearly, $B_{r_y^*}(y)$ is
the smallest ball centered at $y$ containing $X^+$. For any vertex $z$ of $G$,
we denote by $z'$ its gate in $C(u^+,v^+)$.
Let $u^*$ and $v^*$ be the  gates of $u^+$ and $v^+$ in $C$.
If $C=\{x\}$, then $u^*=v^*=x$.
We partition $X$ and $X^-$ as follows.
Let $X_u$ ($X_u^-$, resp.) consist of all $w \in X$
($w \in X^-$, resp.) whose gate $w'$ in $C(u^+,v^+)$ belongs to
$C(u^+,u^*)$. The sets $X_v$ and $X^-_v$ are defined analogously.
Let $X_C$ ($X_C^-$, resp.) consist of all the vertices $w \in X$
($w \in X^-$, resp.) whose gates $w'$ in $C(u^+,v^+)$ belong to $C\setminus \{u^*,v^*\}$.
Note that $X_C=\varnothing$ if $C=\{x\}$.
Note also that some of these sets can be empty and that $X_u^-
\subseteq X_u$, $X_v^- \subseteq X_v$, and $X_C^- \subseteq X_C$.

If the gate in $C(u^+,u^*)$ of every vertex $w \in X_u$  is $u^+$ (this holds in
particular, if $u^*=u^+$), then we set $u_0 = u^+$. Otherwise, let
$u_0$ be the cut vertex of $C(u^+, u^*)$ farthest from $u^*$ such
that, for any vertex $w \in X_u$, its gate $w' \in C(u^+, u^*)$ is in
$C(u^+,u_0)$.  Note that we may have $u_0=u^*$.  Analogously, we
define the cut vertex $v_0$ with respect to $v^*$ and $X_v$.  Hence,
$u_0$ and $v_0$ are always well-defined.

First, suppose that $X_C = \varnothing$.
Let $w_1$ be a vertex of $X_u$ such that $w_1'=u_0$, and if no such vertex $w_1$ exists ({\it i.e.}, $u_0\neq u^+$),
then let $w_1\in X_u$ be such that $u_0\in C(w_1')$ and $w_1'$ is not a cut vertex of $C(u^+,v^+)$. Let $z_1$ be a vertex of $X^-_u$ closest to $x$.
Note that $w_1$ always exists as $u^+$ is in $X_u$, and that $z_1$
exists if and only if $X_u^-$ is non-empty.
Similarly, we define the vertices $w_2$ and $z_2$ with respect to $X_v$ and $X^-_v$.
See Fig.~\ref{fig:LSCS_cactii} for an illustration.
The next lemmas show how to compute $B_r(x)$ in this case.

\begin{figure}[htb]
    \centering
    \includegraphics[width=0.57\linewidth]{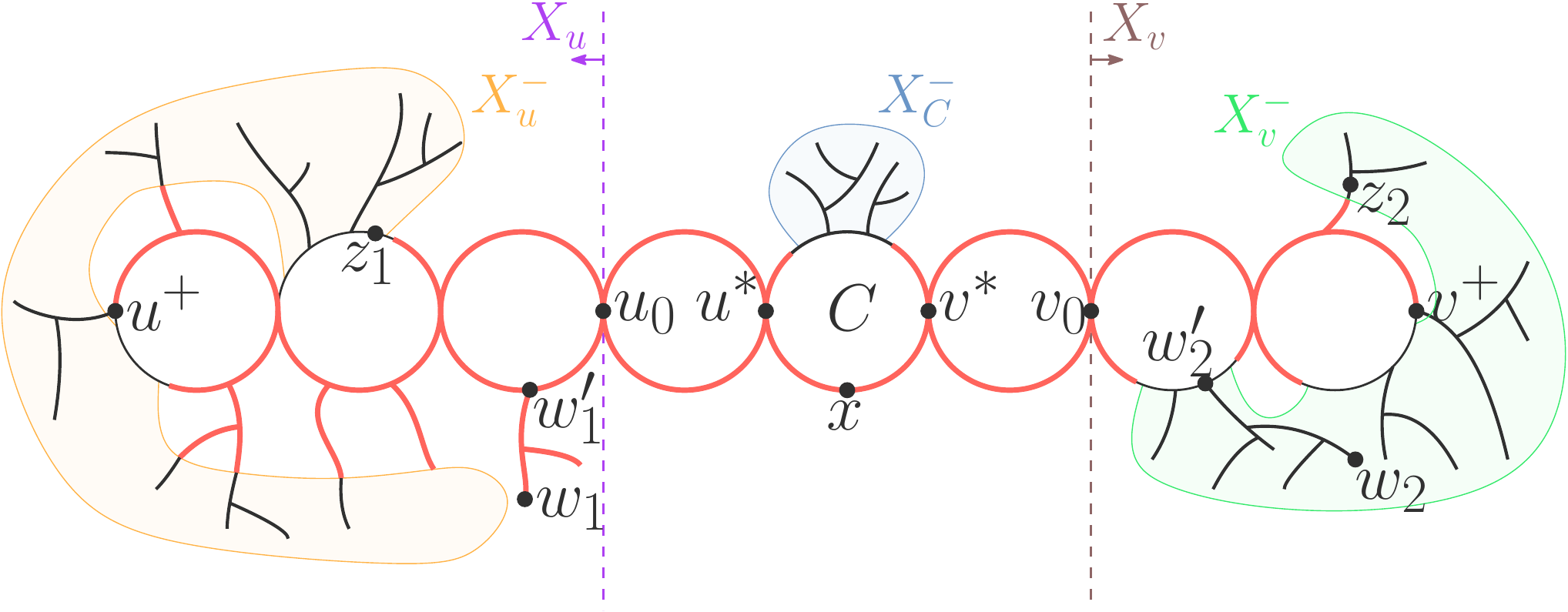}
    \caption{
        \label{fig:LSCS_cactii}
        The vertices and sets used in the proper labeled sample compression scheme for
        trees of cycles. The ball $B_r(x)$ is represented in red. 
         The cycles outside $C(u^+,v^+)$ are represented as paths.
    }
\end{figure}

\begin{lemma}
    \label{lem:ball_containing_u+v+}
    For $y \in C(u_0, v_0)$, if there exists a vertex $w \in X^+
    \setminus B_{r_y}(y)$, then $w' \in C(y)$.
    Consequently, if $X_C = \varnothing$, then, for any $y \in C(u_0,
    v_0)$, we have $X^+\subset B_{r_y}(y)$ and $r_y = r_y^*$.
\end{lemma}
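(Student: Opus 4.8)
The statement has two parts. For the first (local) claim, fix $y \in C(u_0,v_0)$ and suppose $w \in X^+ \setminus B_{r_y}(y)$; I want to show its gate $w'$ in $C(u^+,v^+)$ lies on the cycle (or trivial block) $C = C(y)$ associated with $y$. The idea is the same separation argument as in Lemma~\ref{lem-xinC}: if $w' \notin C(y)$, then $w'$ lies in $C(u^+,u^*)$ or in $C(v^+,v^*)$ (the two ``sides'' of $C$ along $C(u^+,v^+)$), say the $u$-side, so that $y'$s gate into $C(w')$, and in fact $y$ itself, separates $w$ from $u^+$ in a suitable sense. More precisely, I would argue that if $w'$ is on the $u^+$-side of $C(y)$, then $y \in I(w, v^+)$, and compute
\[
d(y,w) > r_y \ge d(y,v^+),
\]
which forces $d(w,v^+) = d(w,y) + d(y,v^+) > 2 d(y,v^+) \ge \dots$ Actually the cleaner route: since $y \in I(w,v^+)$ and $d(w,y) > r_y \ge d(u^+,y)$ would — no. Let me instead use: $y$ separates $w$ from $u^+$ as well (because $w'$ is strictly on the $u^+$-side of $C(y)$ means the gate of $w$ in $C(u^+,v^+)$ is between $u^+$ and $u_0$... wait, one must be careful whether $w' = u_0$ or $w'$ is between). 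I would split into the cases $w' \in I(u^+, y)$ versus $w' \in I(v^+, y)$ along $C(u^+,v^+)$; in the first case $y \notin I(w,u^+)$ but $y \in I(w, v^+)$, giving $d(w,v^+) = d(w,y)+d(y,v^+) > r_y + d(y,v^+) \ge d(y,v^+) + d(y,v^+)$, and since $r_y \ge d(y,u^+)$ as well, $d(w,v^+) > d(y,u^+)+d(y,v^+) = d(u^+,v^+)$ (using that $y \in I(u^+,v^+)$ because $y \in C(u_0,v_0) \subseteq C(u^+,v^+)$ is gated and on the path), contradicting that $\{u^+,v^+\}$ is a diametral pair of $X^+$. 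The symmetric case is identical. So $w' \in C(y)$, proving the first part.

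\textbf{Deriving the consequence.} Now suppose $X_C = \varnothing$ and fix $y \in C(u_0,v_0)$. By definition of $X_C$, no vertex of $X$ has its gate in $C \setminus \{u^*, v^*\}$; but $C(y)$ is exactly this block $C$ (or $C(y) = \{y\}$ when $y$ is a cut vertex or endpoint, in which case $C(y) \subseteq C(u^+,v^+)$ consists of a single vertex and the argument is even simpler). In either case, a vertex $w \in X^+$ with $w' \in C(y)$ must actually have $w' \in \{u^*, v^*\}$ when $C(y)=C$ is the distinguished cycle — wait, that's not quite what $X_C = \varnothing$ says; $X_C = \varnothing$ says gates in $C \setminus \{u^*,v^*\}$ don't occur, so $w' \in \{u^*,v^*\}$ or $w'$ is outside $C$ entirely. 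Combined with the first part (which forces $w' \in C(y)$), we get: if $y \in C(u_0,v_0)$ and $y$ lies in the interior of $C$ (not $u^*$ or $v^*$), then $w' \in C(y) = C$ and $w' \notin C \setminus \{u^*,v^*\}$ gives $w' \in \{u^*, v^*\}$; but then I still need to rule this out. Here I invoke the definition of $u_0, v_0$: these are chosen precisely so that every $w \in X_u$ has gate in $C(u^+,u_0)$, i.e.\ the gate of $w$ in the sub-path-of-cycles from $u^+$ to $u^*$ lies on the $u^+$-side of $u_0$; symmetrically for $v_0$. So if $y \in C(u_0,v_0)$, then $y$ is "between" $u_0$ and $v_0$ and hence not strictly closer to $u^*$ than $u_0$ nor to $v^*$ than $v_0$; a direct distance computation then shows $d(y,w) \le \max(d(y,u^+),d(y,v^+)) = r_y$ for $w$ with $w' \in \{u^*,v^*\}$ as well, since the gate property gives $d(y,w) = d(y,w') + d(w',w)$ and $d(w',w) \le$ the corresponding bound built into the choice of $u_0$ (resp.\ $v_0$). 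This closes the contradiction, so $X^+ \subseteq B_{r_y}(y)$, and then $r_y^* = \max\{d(y,w): w \in X^+\} \le r_y$; since $u^+, v^+ \in X^+$ we also have $r_y^* \ge \max\{d(y,u^+),d(y,v^+)\} = r_y$, hence $r_y = r_y^*$.

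\textbf{Main obstacle.} The genuinely fiddly point is the bookkeeping around the gates $u^*, v^*$ and the cut vertices $u_0, v_0$: one has to make sure that for $y \in C(u_0,v_0)$ a positive vertex $w$ whose gate equals $u^*$ or $v^*$ (the boundary cases not excluded by $X_C = \varnothing$) still satisfies $d(y,w) \le r_y$. This is exactly what the choice of $u_0$ and $v_0$ is engineered to guarantee — $u_0$ is the cut vertex of $C(u^+,u^*)$ farthest from $u^*$ still dominating all gates of $X_u$ — so the bound $d(u_0, w) \le d(u_0, u^+)$ for $w \in X_u$, and the analogous bound at $v_0$, combined with $y$ lying between $u_0$ and $v_0$ on the gated path-of-cycles $C(u^+,v^+)$, yields $d(y,w) = d(y,u_0) + d(u_0,w) \le d(y,u_0) + d(u_0,u^+) = d(y,u^+) \le r_y$. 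I would state this as a short sub-computation. The separation-by-a-cut-vertex argument in the first part is routine given Lemma~\ref{lem-xinC}; the second part is essentially a matter of unwinding the definitions carefully.
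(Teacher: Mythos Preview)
Your argument for the first assertion has a real gap. In a tree of cycles, a vertex $y$ that lies inside a cycle of $C(u^+,v^+)$ need \emph{not} lie on any geodesic between vertices on opposite sides of that cycle, so neither your case split ``$w'\in I(u^+,y)$ versus $w'\in I(v^+,y)$'' is exhaustive, nor is the key claim $y\in I(w,v^+)$ justified. (Your parenthetical ``$y\in I(u^+,v^+)$ because $y\in C(u_0,v_0)\subseteq C(u^+,v^+)$'' confuses the gated path-of-cycles $C(u^+,v^+)$ with the interval $I(u^+,v^+)$; these coincide in trees but not here.) Concretely, if $y$ sits on the longer arc of some cycle between two cut vertices, geodesics from $w$ to $v^+$ will take the shorter arc and miss $y$ entirely, so $d(w,v^+)=d(w,y)+d(y,v^+)$ fails and your chain of inequalities collapses.

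The fix is exactly what you do in your ``Main obstacle'' paragraph for the consequence, and it is what the paper does from the outset for \emph{both} assertions at once: use the cut vertex $u_0$ (or $v_0$) as the separator rather than $y$. Since $w\in X$, the definition of $u_0,v_0$ forces $w'\in C(u^+,u_0)\cup C(v^+,v_0)$ (when $w\in X_u\cup X_v$); say $w'\in C(v^+,v_0)$. Then $v_0$ is a genuine cut vertex separating $\{y,u^+\}$ from $\{w,v^+\}$, so one gets the clean additivities
\[
d(y,w)=d(y,v_0)+d(v_0,w),\quad d(y,v^+)=d(y,v_0)+d(v_0,v^+),\quad d(u^+,w)=d(u^+,v_0)+d(v_0,w),
\]
and $d(y,w)>r_y\ge d(y,v^+)$ gives $d(v_0,w)>d(v_0,v^+)$, hence $d(u^+,w)>d(u^+,v^+)$, contradicting the diametral choice. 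So your computation $d(y,w)=d(y,u_0)+d(u_0,w)\le d(y,u_0)+d(u_0,u^+)=d(y,u^+)\le r_y$ is the right idea, but it already proves the first assertion (for $w\in X_u\cup X_v$) and should replace your interval-based attempt, not merely patch the boundary cases $w'\in\{u^*,v^*\}$.
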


\begin{proof}
Suppose that the gate $w'$ of $w$ in $C(u^+, v^+)$ does not belong to
    $C(y)$. By the definition of $u_0$ and $v_0$, it follows that $w'$
    either belongs to $C(u_0, u^+)$ or to $C(v_0, v^+)$, say $w' \in C(v_0,
    v^+)$. Then, $v_0$ separates  $u^+$ (and $y$ if $v_0\neq y$) from $v^+$ and $w$. Consequently,
    $$
    \begin{array}{lcl}
        d(y,w) = d(y,v_0) + d(v_0,w) &\text{and} &d(y,v^+) = d(y,v_0) +
        d(v_0,v^+),~\text{and} \\
        d(u^+,w) = d(u^+,v_0) + d(v_0,w) &\text{and} &d(u^+,v^+) =
        d(u^+,v_0) + d(v_0,v^+).
    \end{array}
    $$
    From the first two equalities, we obtain $d(v_0,w)>d(v_0,v^+)$. From the last two equalities, it follows that $d(u^+,w) > d(u^+,v^+)$,
    contradicting that $\{u^+, v^+\}$ is a diametral pair of $X^+$.
\end{proof}

\begin{lemma} \label{lem:ball_not_containing_z1z2}
    If $X^-\ne \varnothing$ and $X_C=\varnothing$, then $B_{r_y^*}(y)\cap
    X^-=\varnothing$ for any vertex $y\in C(u_0,v_0)$ such that
    $B_{r_y^*}(y)\cap \{ z_1,z_2\}=\varnothing$.
\end{lemma}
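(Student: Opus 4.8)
The plan is to show that if some vertex $z \in X^-$ lies in $B_{r_y^*}(y)$ for a vertex $y \in C(u_0,v_0)$, then in fact $z \in \{z_1, z_2\}$ up to the closest-to-$x$ choice, more precisely that $B_{r_y^*}(y)$ must then contain $z_1$ or $z_2$, contradicting the hypothesis $B_{r_y^*}(y) \cap \{z_1,z_2\} = \varnothing$. Since $X_C = \varnothing$, every vertex of $X^-$ lies in $X_u^-$ or $X_v^-$, so it suffices to treat a vertex $z \in X_u^-$ (the case $z \in X_v^-$ being symmetric) and argue that $z \in B_{r_y^*}(y)$ forces $z_1 \in B_{r_y^*}(y)$.

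First I would record the geometric setup: since $z \in X_u^-$, its gate $z'$ in $C(u^+,v^+)$ lies in $C(u^+,u_0) \subseteq C(u^+,u^*)$, and by the definition of $u_0$ every such gate is on the $C(u^+,u_0)$ side. The key point is that $u_0$ is a cut vertex of $C(u^+,v^+)$ separating $\{y, v^+\}$ (and the center $x$) from the part of the graph hanging off $C(u^+,u_0)$; hence $u_0 \in I(y,z)$ and $u_0 \in I(x,z)$, giving $d(y,z) = d(y,u_0) + d(u_0,z)$ and $d(x,z) = d(x,u_0) + d(u_0,z)$. Likewise $z_1 \in X_u^-$ has its gate in $C(u^+, u_0)$ and was chosen closest to $x$, so $d(x,z_1) \le d(x,z)$, and again $u_0 \in I(x,z_1)$ and $u_0 \in I(y,z_1)$, so $d(x,u_0) + d(u_0, z_1) \le d(x, u_0) + d(u_0, z)$, i.e. $d(u_0, z_1) \le d(u_0, z)$. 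Combining with $u_0 \in I(y, z_1)$ yields $d(y, z_1) = d(y, u_0) + d(u_0, z_1) \le d(y, u_0) + d(u_0, z) = d(y, z)$.

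Then I would close the argument: if $z \in B_{r_y^*}(y) \cap X^-$, then $d(y,z) \le r_y^*$, and by the chain of inequalities above $d(y, z_1) \le d(y,z) \le r_y^*$, so $z_1 \in B_{r_y^*}(y)$, contradicting $B_{r_y^*}(y) \cap \{z_1, z_2\} = \varnothing$. Symmetrically, a witness in $X_v^-$ forces $z_2 \in B_{r_y^*}(y)$. Since $X^- \ne \varnothing$ and $X_C = \varnothing$ imply $X_u^- \cup X_v^- = X^-$, this exhausts all cases, so $B_{r_y^*}(y) \cap X^- = \varnothing$.

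The main obstacle I anticipate is justifying cleanly that $u_0$ is genuinely a cut vertex lying on all the relevant geodesics — i.e. that $u_0 \in I(x,z)$, $u_0 \in I(y,z)$, and the analogous statements for $z_1$ — which relies on the tree-of-cycles structure (gates in $C(u^+,v^+)$, the path-of-cycles $C(u^+,u_0)$, and $x, y \in C(u_0,v_0)$) rather than on any single cycle; one must be careful in the degenerate subcases $u_0 = u^+$, $u_0 = u^*$, or $y = u_0$, but in each of these the separation statement still holds, possibly trivially. Everything after that separation fact is a routine manipulation of the additive distance identities together with the "closest to $x$" choice of $z_1$.
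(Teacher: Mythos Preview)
Your proposal is correct and follows essentially the same approach as the paper: both arguments use that the cut vertex $u_0$ (the paper picks the symmetric case with $v_0$ and $z_2$) lies on all geodesics from $x$ and $y$ to the vertices of $X_u^-$, yielding additive distance identities that transfer the ``closest to $x$'' property of $z_1$ into $d(y,z_1)\le d(y,z)$. The only cosmetic difference is that the paper phrases it as a direct contradiction ($z_2\notin B_{r_y}(y)$ and $z\in B_{r_y}(y)$ force $d(x,z_2)>d(x,z)$), while you run the inequality forward; the content is identical, and your remark about the degenerate cases $u_0\in\{u^+,u^*,y\}$ matches the paper's parenthetical handling.
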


\begin{proof}
By Lemma~\ref{lem:ball_containing_u+v+}, we have $r_y^* = r_y$.
    Suppose by way of contradiction that there exists a vertex $z\in
B_{r_y}(y)\cap X^-$. Since $X_C=\varnothing$,
    $z$ belongs to one of the sets $X^-_u$ or $X^-_v$, say $z\in X^-_v$.
    Then, $z_2$ exists and the vertex $v_0$ separates $x$ ($y$, resp.) from $z_2$ and $z$ if $v_0\neq x$ ($v_0\neq y$, resp.).
    Consequently,
    $$
    \begin{array}{lcl}
        d(y,z_2) = d(y,v_0) + d(v_0,z_2) &\text{and} &d(y,z) = d(y,v_0) +
        d(v_0,z),~\text{and} \\
        d(x,z_2) = d(x,v_0) + d(v_0,z_2) &\text{and} &d(x,z) =
        d(x,v_0) + d(v_0,z).
    \end{array}
    $$
    Since $z_2\notin B_{r_y}(y)$ and $z\in B_{r_y}(y)$, we conclude that
    $d(v_0,z_2)>d(v_0,z)$. From the last two equalities, $d(x,z_2)>d(x,z)$,
    contrary to the choice of $z_2$ as a vertex of $X^-_v$ closest to $x$.
\end{proof}

Now, suppose that $X_C \ne \varnothing$,
in particular, $C$ is a cycle.
By the definition of $r_x^*$, $B_{r_x^*}(x)$ also realizes $X$.  Let
$w$ be a vertex of $X$ whose gate $w'$ in $C(u^+,v^+)$ is in
$C\setminus \{u^*,v^*\}$.  If, for every $y \in C$, $B_{r_y^*}(y)$
realizes $X$, then $B_{r_{w'}^*}(w')$ realizes $X$, and, in this case,
let $s \in X^+$ be such that $d(w',s) = r_{w'}^*$. Otherwise, we can
find two adjacent vertices $x$ and $y$ of $C$ such that $B_{r_x^*}(x)$
realizes $X$, but $B_{r_y^*}(y)$ does not. 
This implies that there is a vertex $z \in X^-$ with
$z \in B_{r_y^*}(y) \setminus B_{r_x^*}(x)$. In this case, let
$s, t \in X^+$ be such that $r_y^* = d(y,s)$ and $r_x^* = d(x,t)$,
with $t = s$ whenever $r_ x^* = d(x,s)$ (in
particular, this is the case  if $r_y^* = r^*_x + 1$).  Let $s'$, $t'$, and $z'$ be the
respective gates of $s$, $t$, and $z$ in $C$. If $s = t$ ($s\ne t$,
resp.), then let $P'$ be the path of $C$ between $s'$ and $z'$ ($t'$,
resp.)  containing the edge $xy$.
See Fig.~\ref{fig:r_y^*_and_r_x^*} for an illustration. 

\begin{lemma}
    \label{lem:r_y^*_and_r_x^*}
    For adjacent vertices $x, y \in C$, and the corresponding vertices $z \in X^-$ and $s
    \in X^+$, one of the following conditions holds:
    \begin{enumerate}[(1)]
        \item $r_y^* = r_x^* + 1$, $d(x,z) = d(y,z) + 1$, and $d(x,s) = d(y,s)
        - 1$;
        \item $r_y^* = r_x^* + 1$, $d(x,z) = d(y,z)$, and $d(x,s) = d(y,s) - 1$;
        \item $r_y^* = r_x^*$, $d(x,z) = d(y,z) + 1$, and $d(x,s) = d(y,s)$;
        \item $r_y^* = r_x^*$, $d(x,z) = d(y,z) + 1$, and $d(x,s) = d(y,s) - 1$.
    \end{enumerate}
\end{lemma}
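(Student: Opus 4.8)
The plan is to derive all four alternatives purely from the triangle inequality applied to the edge $xy$, together with the defining properties of $r_x^*$ and $r_y^*$ and the way $z$ and $s$ were chosen. The starting point is the elementary fact that, since $x$ and $y$ are adjacent, $|d(v,x)-d(v,y)|\le 1$ for every vertex $v$ of $G$. Applying this to a vertex of $X^+$ realizing $r_x^*$ (resp.\ $r_y^*$) gives $|r_x^*-r_y^*|\le 1$. Recall also that $X^+\subseteq B_{r_x^*}(x)$ and $X^+\subseteq B_{r_y^*}(y)$ by the very definitions of $r_x^*$ and $r_y^*$; since $B_{r_x^*}(x)$ realizes $X$ while $B_{r_y^*}(y)$ does not, the failure of the latter must be caused by some $z\in X^-\cap B_{r_y^*}(y)$, and as $B_{r_x^*}(x)\cap X^-=\varnothing$ this $z$ also lies outside $B_{r_x^*}(x)$; that is, $d(y,z)\le r_y^*$ and $d(x,z)\ge r_x^*+1$. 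Finally, $d(y,s)=r_y^*$ and $d(x,s)\le r_x^*$ (since $s\in X^+\subseteq B_{r_x^*}(x)$).

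Next I would eliminate the possibility $r_y^*=r_x^*-1$: in that case $d(y,z)\le r_y^*=r_x^*-1$, hence $d(x,z)\le d(y,z)+1\le r_x^*$, contradicting $d(x,z)\ge r_x^*+1$. Therefore $r_y^*\in\{r_x^*,\,r_x^*+1\}$, which is exactly the dichotomy separating cases (3)--(4) from cases (1)--(2).

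Then I would split into the two remaining cases. If $r_y^*=r_x^*$: from $d(x,z)\ge r_x^*+1$, $d(y,z)\le r_x^*$ and $d(x,z)\le d(y,z)+1\le r_x^*+1$ we get $d(x,z)=r_x^*+1$ and then $d(y,z)\ge d(x,z)-1=r_x^*$, so $d(y,z)=r_x^*$ and $d(x,z)=d(y,z)+1$. For $s$ we have $d(y,s)=r_y^*=r_x^*$ and $d(x,s)\le r_x^*$, so together with $|d(x,s)-d(y,s)|\le 1$ this forces $d(x,s)\in\{d(y,s),\,d(y,s)-1\}$, giving case (3) or case (4). If $r_y^*=r_x^*+1$: then $d(y,s)=r_y^*=r_x^*+1$ while $d(x,s)\le r_x^*$, so adjacency forces $d(x,s)=d(y,s)-1=r_x^*$; and from $d(x,z)\ge r_x^*+1=r_y^*\ge d(y,z)$ with $|d(x,z)-d(y,z)|\le 1$ we get $d(x,z)\in\{d(y,z),\,d(y,z)+1\}$ (the value $d(y,z)-1$ is excluded, as it would force $d(y,z)\ge r_x^*+2>r_y^*$). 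This yields case (1) or case (2), and the analysis is complete.

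The argument is essentially bookkeeping with the triangle inequality along the single edge $xy$, so there is no genuine obstacle. The only point requiring care is that the failure of $B_{r_y^*}(y)$ is witnessed by a vertex of $X^-$ and not by a vertex of $X^+$; this is what creates the asymmetry between $z$ (forced \emph{outside} $B_{r_x^*}(x)$) and $s$ (forced \emph{inside} it), and it holds automatically because $X^+\subseteq B_{r_y^*}(y)$ by the definition of $r_y^*$.
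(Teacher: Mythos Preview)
Your proof is correct and follows essentially the same approach as the paper's: both arguments use the triangle inequality along the edge $xy$ to bound $|r_x^*-r_y^*|\le 1$, rule out $r_y^*=r_x^*-1$ via the fact that $z\in B_{r_y^*}(y)\setminus B_{r_x^*}(x)$, and then split into the two remaining cases to pin down the relations between $d(x,z),d(y,z)$ and $d(x,s),d(y,s)$. Your write-up is in fact slightly more explicit than the paper's (you compute the exact values $d(x,z)=r_x^*+1$, $d(y,z)=r_x^*$ in the case $r_y^*=r_x^*$, whereas the paper only records $d(x,z)=d(y,z)+1$), but the underlying reasoning is identical.
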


\begin{proof}

Since $x$ and $y$ are adjacent, $|r_x^* - r_y^*| \le 1$. If $r_x^* = r_y^*
    + 1$, then $z$ also belongs to the ball $B_{r_x^*}(x)$, contrary to our
    choice of $x$. This show that $r_y^* \in \{r_x^*, r_x^* + 1\}$,
    establishing the first equality in each of the four cases.

    Now, notice that $x \notin I(y,z)$. Otherwise, since $r_x^* \ge r_y^* - 1$,
    we obtain that $z \in B_{r_x^*}(x)$, a contradiction.
    Since $x$ and $y$ are adjacent, we have either $d(x,z) = d(y,z) + 1$ ({\it i.e.},
    $y \in I(x,z)$) or $d(x,z) = d(y,z)$.

    If $r_y^* = r_x^* + 1$, we immediately obtain that $d(x,s) = d(y,s) - 1$
    ({\it i.e.}, $x \in I(y,s)$).
    Now, suppose that $r_y^* = r_x^*$. Since $d(y,s) = r_y^*$, we conclude that
    either $d(x,s) = d(y,s) - 1$ ({\it i.e.}, $x \in I(y,s)$) or $d(x,s) = d(y,s)$.
    This establishes the last equality in each of the four cases.

    It remains to prove that, if $r_y^* = r_x^* + 1$, then we are either in Case (1)
    or (2), and if $r_y^* = r_x^*$, then we are either in Case (3) or (4). Indeed, if
    $r_y^* = r_x^* + 1$, then $d(y,s) = d(x,s) + 1 = r_x^* + 1$, and we are in
    Case (1) or (2). On the other hand, if $r_x^* = r_y^*$, then $d(x,z) = d(y,z) + 1$ since $z
    \notin B_{r_x^*}(x)$, and we are in Case (3) or (4).
\end{proof}

Without the knowledge of $r_x^*$ and $r_y^*$, the relationships between
$d(x,z)$ and $d(y,z)$, and between $d(x,s)$ and $d(y,s)$ do not allow us to
distinguish between the cases (1) and (4). This can be done by additionally
using the vertex $t \in X^+$ defined above. Indeed, in Case (1) we have  $t = s$,
while in Case (4) we have $t \ne s$ and $d(x,t) = d(y,t) + 1$. We continue with
the following simple lemma for paths:

\begin{lemma}\label{lem:aux_lemma_P}
    Let $Q$ be a graph which is a path with end-vertices $a\ne b$, and
    let $d'$ be its distance function. Then, $Q$ contains a unique
    edge $x_0y_0$ such that $d'(x_0,b) - d'(x_0,a) \in \{1,2\}$ and
    $d'(x_0,a) < d'(y_0,a)$.
\end{lemma}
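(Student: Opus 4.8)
The plan is to prove the existence and uniqueness of the edge $x_0y_0$ directly by tracking the monotone behavior of the function $f(v) := d'(v,b) - d'(v,a)$ along the path $Q$.

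\textbf{Setup.} Let $Q$ be a path with vertices ordered $a = p_0, p_1, \ldots, p_m = b$, so that $d'(p_i, a) = i$ and $d'(p_i, b) = m - i$. First I would compute $f(p_i) = (m-i) - i = m - 2i$. As $i$ runs from $0$ to $m$, this quantity decreases by exactly $2$ at each step, starting at $f(p_0) = m$ and ending at $f(p_m) = -m$. For an edge $p_{i}p_{i+1}$ with $p_i$ the endpoint closer to $a$ (which is automatic here since $d'(p_i,a) = i < i+1 = d'(p_{i+1},a)$), the condition $f(p_i) \in \{1,2\}$ becomes $m - 2i \in \{1,2\}$.

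\textbf{Existence and uniqueness.} Since $f(p_i) = m - 2i$ takes all values of a fixed parity between $m$ and $-m$ in steps of $2$, exactly one of the two consecutive integers $1, 2$ lies in the image $\{m, m-2, m-4, \ldots, -m\}$: if $m$ is odd, $1$ is attained (at $i = (m-1)/2$) and $2$ is not; if $m$ is even, $2$ is attained (at $i = (m-2)/2$) and $1$ is not. In either case there is exactly one index $i$ with $m - 2i \in \{1,2\}$, hence exactly one edge $x_0 y_0 = p_i p_{i+1}$ with $d'(x_0,b) - d'(x_0,a) \in \{1,2\}$ and $d'(x_0,a) < d'(y_0,a)$. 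Note this uses that $m \geq 1$, which holds since $a \neq b$.

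\textbf{Remarks on the write-up.} This is entirely routine — the only thing to be careful about is the orientation convention ($d'(x_0,a) < d'(y_0,a)$ pins down which endpoint is called $x_0$, eliminating the spurious second solution one would get by reading the edge from the other side), and the observation that the two-element target set $\{1,2\}$ is exactly what is needed so that a function decreasing in steps of $2$ hits it exactly once regardless of the parity of $m$. I do not anticipate any genuine obstacle; the lemma is a bookkeeping fact that will be applied to the path $P'$ inside the cycle $C$ in the subsequent cactus argument, where $f$ will encode the difference of distances to the designated vertices $s$ (or $t$) and $z$.
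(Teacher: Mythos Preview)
Your proof is correct and follows essentially the same approach as the paper: both arguments exploit that the quantity $d'(\cdot,b)-d'(\cdot,a)$ changes by exactly $2$ along the path and hence hits the two-element window $\{1,2\}$ exactly once, with a parity split determining whether the value $1$ or $2$ is attained. Your write-up is slightly more streamlined in that you handle existence and uniqueness together via the explicit function $f(p_i)=m-2i$, whereas the paper first exhibits the edge by a case split on the parity of the length and then argues uniqueness separately by shifting the edge; but the content is the same.
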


\begin{proof}
Let $k$ be a positive integer.
If the length of $Q$ is $2k$, then let
$x_0y_0$ be the edge of $Q$ such that $d'(x_0,a)=k-1$ and $d'(y_0,a)=k$.
Obviously, $d'(x_0,a) < d'(y_0,a)$.
Since $d'(x_0,b)=k+1$, we have that $d'(x_0,b) - d'(x_0,a)=2$.
If the length of $Q$ is $2k+1$, then let
$x_0y_0$ be the edge of $Q$ such that $d'(x_0,a)=k$ and $d'(y_0,a)=k+1$.
Again, $d'(x_0,a) < d'(y_0,a)$.
Since $d'(x_0,b)=k+1$, we have that $d'(x_0,b) - d'(x_0,a)=1$.
Therefore, in both cases, the edge $x_0y_0$ satisfies the distance conditions of the lemma.

Now, we prove that $x_0y_0$ is a unique such edge. If the edge $x_0y_0$ is
moved towards $a$ by $\ell\in \mathbb{N}$ hops, then $d'(x_0,b) - d'(x_0,a)$ increases by
$2\ell$, and thus, in both cases (even or odd length of $Q$), this difference will be at least $3$.
If the edge $x_0y_0$ is
moved towards $b$ by $\ell\in \mathbb{N}$ hops, then $d'(x_0,b) - d'(x_0,a)$ decreases by
$2\ell$, and thus, in both cases, this difference will be at most $0$.
Hence, in both cases, the resulting edge would not satisfy the first distance condition.
\end{proof}

\begin{figure}[htb]
    \centering
    \includegraphics[width=0.7\linewidth]{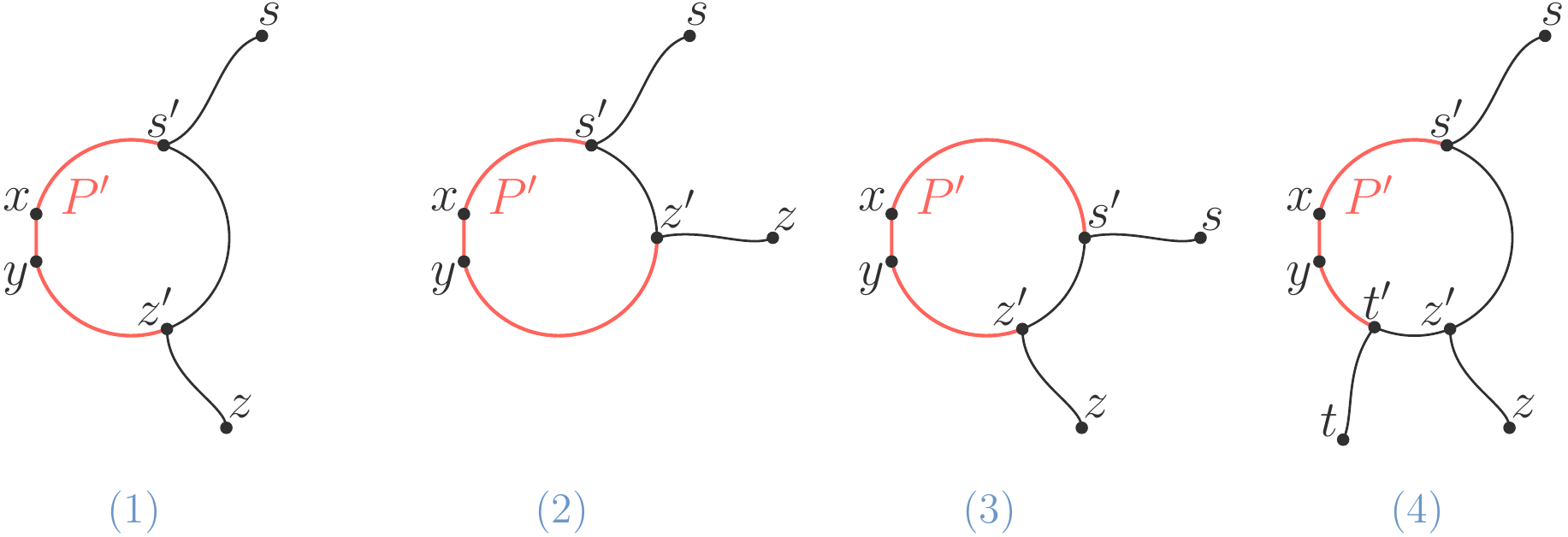}
    \caption{
        \label{fig:r_y^*_and_r_x^*}
        Definition and positioning of $s$, $t$, and $z$ in the four cases of
        Lemma~\ref{lem:r_y^*_and_r_x^*}.
    }
\end{figure}

We use Lemma~\ref{lem:aux_lemma_P} to find adjacent vertices
$x_0$ and $y_0$ of $C$ and an integer $r_x^*$ that satisfy a condition of
Lemma~\ref{lem:r_y^*_and_r_x^*}.
If $s=t$ ($s \neq t$, resp.), then, as before, let $P'$ be the path of
$C$ between $s'$ and $z'$ ($t'$, resp.) containing the edge $xy$.  Let
$P$ be the path of $G$ obtained by joining the shortest $(s',s)$- and
$(z,z')$-paths ($(s,s')$- and $(t',t)$-paths, resp.) to $P'$. Let $d'$
be the distance function on $P$.

%
%

\begin{lemma}
    \label{lem:decoding_recovering_xy}
    Let $P$ be the $(s,z)$-path or $(s,t)$-path of $G$ defined above. Let
    $x_0y_0$ be
    the unique edge of $P$  satisfying the conclusion of Lemma~\ref{lem:aux_lemma_P}. Then, $x_0 = x$ and $y_0 = y$.
    Moreover,
    \begin{enumerate}[(1)]
        \item if $P$ is an $(s,z)$-path, $d'(x_0,z) = d(x_0,z)$, and
        $d'(y_0,s) = d(y_0,s)$, then $r_x^* = d(y_0,s) - 1$;
        \item if $P$ is an $(s,z)$-path, $d'(x_0,z) = d(x_0,z) + 1$, and
        $d'(y_0,s) = d(y_0,s)$, then $r_x^* = d(y_0,s) - 1$;
        \item if $P$ is an $(s,z)$-path, $d'(x_0,z) = d(x_0,z)$, and
        $d'(y_0,s) = d(y_0,s) + 1$, then $r_x^* = d(y_0,s)$;
        \item if $P$ is an $(s,t)$-path, then $r_x^* = d(x_0,t)$.
    \end{enumerate}
\end{lemma}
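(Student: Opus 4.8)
The proof of Lemma~\ref{lem:decoding_recovering_xy} naturally splits into two parts: first showing that the edge $x_0y_0$ supplied by Lemma~\ref{lem:aux_lemma_P} is exactly the edge $xy$, and then, in each of the four cases, recovering the integer $r_x^*$ from the observable distances $d'(\cdot,\cdot)$ along $P$ (which, for vertices of $P'$ lying on $C$, agree with the graph distances in $G$ by gatedness of $C(u^+,v^+)$ and the way $P$ was assembled from shortest paths).

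For the first part, I would apply Lemma~\ref{lem:r_y^*_and_r_x^*} to the fixed adjacent pair $x,y\in C$. Write $a,b$ for the two endpoints of $P$ in the order used to define it: when $P$ is an $(s,z)$-path we take $a=z$, $b=s$; when $P$ is an $(s,t)$-path we take $a=t$, $b=s$. The point is that along $P$, the quantity $d'(\cdot,b)-d'(\cdot,a)$ is a monotone function that jumps by $0$ or $2$ across each edge, and by Lemma~\ref{lem:r_y^*_and_r_x^*} the edge $xy$ realizes exactly the jump profile that Lemma~\ref{lem:aux_lemma_P} pins down. Concretely: in Case (1) of Lemma~\ref{lem:r_y^*_and_r_x^*} with $P$ an $(s,z)$-path we have $d(x,z)=d(y,z)+1$ and $d(x,s)=d(y,s)-1$, so $d'(x,s)-d'(x,z)=\big(d'(y,s)-1\big)-\big(d'(y,z)+1\big)$, giving a difference of $1$ or $2$ once one checks the values of $d'$ at $x$ versus $y$; in Cases (2)–(3) the same bookkeeping again yields a difference in $\{1,2\}$ with $d'(x,a)<d'(y,a)$; and in Case (4), with $P$ the $(s,t)$-path, $d(x,t)=d(y,t)+1$ and $d(x,s)=d(y,s)-1$ together give $d'(x,s)-d'(x,t)=d'(y,s)-d'(y,t)-2$, again landing the edge $xy$ in the range of Lemma~\ref{lem:aux_lemma_P}. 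Since that lemma asserts \emph{uniqueness} of such an edge on a path, we conclude $x_0y_0=xy$, i.e.\ $x_0=x$, $y_0=y$ (with the labeling $d'(x_0,a)<d'(y_0,a)$ matching because $x\in I(y,z)$ in the $(s,z)$-case and $x\in I(y,s)$, hence the $t$-side ordering, in the $(s,t)$-case).

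For the second part, once we know $x_0=x$ and $y_0=y$, recovering $r_x^*$ is just reading off which case of Lemma~\ref{lem:r_y^*_and_r_x^*} we are in from the comparison of $d'$-distances with $G$-distances, and then using $r_x^*=d(x,s)$ or $r_x^*=d(x,t)$. In the $(s,s)$ situation ($s=t$), $r_x^*=d(x,s)$, and since $s$ and $z$ sit on opposite "arms'' of $P$ while $y\in I(x,s)$, we get $d(x,s)=d(y,s)-1$; the three sub-cases (1)–(3) of the present lemma correspond precisely to whether $d(x,z)=d(y,z)$ or $d(x,z)=d(y,z)+1$ and whether $d(y,s)$ is read correctly or off by one along $P$, and in each the stated formula $r_x^*=d(y_0,s)-1$ or $r_x^*=d(y_0,s)$ drops out. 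In Case (4), $s\ne t$, $r_x^*=d(x,t)$ directly, and since $t'$ is an endpoint-side vertex of $P$ with $x\in I(y,t)$ ... actually here $y\in I(x,t)$ or $x\in I(y,t)$ depending on orientation, but in all events $d'(x_0,t)=d(x_0,t)=r_x^*$ because the $(t',t)$-path was appended as a shortest path, so $r_x^*=d(x_0,t)$.

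**Main obstacle.** The delicate point is bookkeeping the discrepancy between $d'$ (distance \emph{along the assembled path} $P$) and $d$ (distance in $G$): for a vertex $v\in P'\subseteq C$ one has $d'(v,s)=d'(v,s')+d(s',s)$ whereas $d(v,s)=d(v,s')+d(s',s)$ only if the gate of $s$ in $C$ is $s'$ and $v$ lies on the "near'' side — so $d'(v,s)$ can exceed $d(v,s)$ by the amount by which going around $C$ the long way (as $P'$ may) overshoots. Pinning down exactly when $d'(x_0,z)=d(x_0,z)$ versus $d(x_0,z)+1$, and $d'(y_0,s)=d(y_0,s)$ versus $d(y_0,s)+1$, which is what separates the four enumerated sub-cases, is where all the care goes; this is essentially a parity/short-side analysis on the cycle $C$ using that $C(u^+,v^+)$ is gated and that $s',t',z'$ are the gates of $s,t,z$ in $C$.
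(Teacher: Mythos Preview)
Your approach is essentially the paper's: run through the four cases of Lemma~\ref{lem:r_y^*_and_r_x^*}, verify in each that the edge $xy$ satisfies the condition of Lemma~\ref{lem:aux_lemma_P}, invoke uniqueness to conclude $x_0y_0=xy$, and then read off $r_x^*$. The paper does exactly this, handling the $d'$-versus-$d$ discrepancy case by case (e.g.\ in its Case~(2) it uses $d'(x,z)=d(x,z)+1$, in its Case~(3) it uses $d'(y,s)=d(y,s)+1$), just as you anticipate in your ``Main obstacle'' paragraph.

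However, your orientation is backwards. You set $a=z$, $b=s$, which would require $d'(x_0,s)-d'(x_0,z)\in\{1,2\}$ and $d'(x_0,z)<d'(y_0,z)$. But in Case~(1) of Lemma~\ref{lem:r_y^*_and_r_x^*} one has $d(x,s)=d(y,s)-1=r_y^*-1$ while $z\in B_{r_y^*}(y)\setminus B_{r_x^*}(x)$ forces $d(x,z)\in\{r_y^*,r_y^*+1\}$; hence $d(x,s)-d(x,z)\in\{-2,-1\}$, not $\{1,2\}$. The correct convention (the one the paper uses) is $a=s$, $b=z$: then $d'(x,z)-d'(x,s)\in\{1,2\}$, and $d(x,s)=d(y,s)-1$ gives $d'(x,s)<d'(y,s)$, matching the ordering clause of Lemma~\ref{lem:aux_lemma_P}. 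Relatedly, your claim ``$x\in I(y,z)$ in the $(s,z)$-case'' is inverted: $d(x,z)=d(y,z)+1$ says $y\in I(x,z)$, not the reverse. These are bookkeeping slips rather than a conceptual gap, but since the entire lemma \emph{is} bookkeeping, getting the orientation right is essentially the proof; once you swap $a\leftrightarrow b$ your outline goes through and coincides with the paper's argument.
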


\begin{proof}
By its definition, the edge $xy$ satisfies one of the four conditions of Lemma~\ref{lem:r_y^*_and_r_x^*}.
First, suppose that $xy$ satisfies Condition (1). Then, $d(x,s)=r_y^* - 1$. Since $z \in B_{r_y^*}(y) \setminus B_{r_x^*}(x)$ and
    $r_y^* = r_x^* + 1$, we conclude that $r_y^* \le d(x,z) \le r_y^* + 1$. Since $d'(x,z)=d(x,z)$ and $d'(x,s)=d(x,s)$, we conclude that
    $d'(x,z) - d'(x,s) = d(x,z) - d(x,s) \in \{1,2\}$.

    Now, suppose that $xy$ satisfies condition (2).
    Then,  $r_y^* = d(y,z)$ because $z
    \in B_{r_y^*}(y) \setminus B_{r_x^*}(x)$ and $d(x,z)=d(y,z) \le r_y^* = r_x^* +
    1$. Since $d'(x,z) = d(x,z) + 1$ and $d'(x,s)=d(x,s)$, we obtain that
    $$\begin{aligned}
        d'(x,z) - d'(x,s) &= d(x,z) + 1 - d(x,s)    \\
                          &= d(y,z) - (d(y,s) - 1) + 1\\
                          &= r_y^* - r_y^* + 2 = 2.
    \end{aligned}$$

    Now, suppose that $xy$ satisfies Condition (3).  Then,
    $d'(y,s)=d(y,s)+1=d(x,s) + 1$, $d'(x,z)=d(x,z),$ and $d'(x,s)=d(x,s)$.
    As in the previous case, we
    can show that $r_y^* = d(y,z)$. Since $r_x^* = r_y^*$ and $d(x,z) = d(y,z)
    + 1=r_y^*+1$, we conclude that
    $$\begin{aligned}
        d'(x,z) - d'(x,s) &= d(x,z) - d(x,s) \\
                          &= r_x^* + 1 - r_x^* = 1.
    \end{aligned}$$

    Since in Cases (1)--(3) of Lemma~\ref{lem:r_y^*_and_r_x^*}, we have $d'(x,z)
    - d'(x,s) \in \{1,2\}$,
    then the edge $xy$ satisfies the conclusion of Lemma~\ref{lem:aux_lemma_P}.
    Therefore, $x =
    x_0$ and $y = y_0$. We can easily check that in the first two cases,
    $r_x^* = d(y_0,s) - 1$, and in the third case, $r_x^* = d(y_0,s)$.

    Finally, if $xy$ satisfies Condition (4) of Lemma~\ref{lem:r_y^*_and_r_x^*}, then $d'(x,z)=d(x,z)$ and $d'(x,s)=d(x,s)$, and,
    as in Case (1),
    we conclude that $x = x_0$ and $y = y_0$. In this case, the path $P$ is an
    $(s,t)$-path, and therefore, $r_x^* = d(x_0,t)$ by the definition of $t$.
\end{proof}

\subsection*{The compressor $\alpha(X)$} The compressor $\alpha(X)$ is a
vector with six coordinates, which are grouped into three pairs:
$\alpha(X):=(\alpha_1(X), \alpha_2(X), \alpha_3(X))$.
The pair $\alpha_1(X) \subseteq X^+$ is a diametral pair $(u^+,v^+)$ of
$X^+$, $\alpha_2(X)$ is used to specify the region of $C(u^+,v^+)$
where the center of the target ball is located, and the pair
$\alpha_3(X)$ is used to compute the radius of this ball. We use the symbol $*$
to indicate that the respective coordinate
of $\alpha(X)$ is empty. 

We continue with the 
definitions of $\alpha_2(X)$ and $\alpha_3(X)$.
First, suppose that $X_C = \varnothing$, {\it i.e.}, $X_u \cup X_v = X$ and $X^-_u
\cup X^-_v = X^-$. 
Then, set $\alpha_2(X) := (w_1, w_2)$ and $\alpha_3(X) := (z_1, z_2)$.
Now, suppose that $X_C \ne \varnothing$. Let $w$ be a vertex of $X$ whose
gate $w'$ in $C(u^+,v^+)$ belongs to $C\setminus \{u^*,v^*\}$.
If $B_{r_x^*}(x)$ realizes $X$ for any vertex $x$ of $C$, then set
$\alpha_2(X):=(w,*)$ and $\alpha_3(X):=(s,*)$, where $s \in X^+$ is such
that $d(w',s) = r_{w'}^*$.
Otherwise, we pick an edge $xy$ of $C$ such that $B_{r_x^*}(x)$ realizes $X$
and $B_{r_y^*}(y)$ does not realize $X$.
Let  $s'$, $t'$, and $z'$ be the respective gates in $C$ of the vertices $s$,
$t$, and $z$ as defined previously.
If $s = t$, then the path $P$ is defined by the vertices $s$ and $z$, and set
$\alpha_3(X) := (s,z)$. Otherwise, the path $P$ is defined by the vertices
$s$ and $t$, and set $\alpha_3(X) := (s,t)$. Moreover, set $\alpha_2(X)
:= (*,w)$ if the edge $xy$ belongs to the path from $s'$ to $z'$
(from $s'$ to $t'$, resp.) in the clockwise traversal of $C$, and
$\alpha_2(X):=(w,*)$ otherwise. Formally, the compressor function $\alpha$ is defined in the following way:
\begin{itemize}
\item[(C1)] if $X^- = \varnothing$, set
  $\alpha_1(X) = \alpha_2(X) = \alpha_3(X) := (*,*)$;
\item[(C2)] otherwise, if $|X^+| = 0$, set
  $\alpha_1(X) = \alpha_2(X) := (*,*)$ and $\alpha_3(X) := (z,*)$,
  where $z$ is an arbitrary vertex of $X^-$;
\item[(C3)] otherwise, if $X^+ = \{u\}$, set $\alpha_1(X) := (u,*)$,
  $\alpha_2(X) := (*,*)$, and $\alpha_3(X) := (z,*)$, where $z$ is an
  arbitrary vertex of $X^-$;
\item[(C4)] otherwise, if $|X^+| \ge 2$ and $X_C = \varnothing$, set
  $\alpha_1(X) := (u^+, v^+)$, $\alpha_2(X) := (w_1,w_2)$, and
  \begin{itemize}
  \item[(C4i)] if the vertex $z_2$ does not exist, then set $\alpha_3(X)
    := (z_1, *)$;
  \item[(C4ii)] if the vertex $z_1$ does not exist, then set $\alpha_3(X)
    := (*, z_2)$;
  \item[(C4iii)] if the vertices $z_1$ and $z_2$ exist, set $\alpha_3(X)
    := (z_1, z_2)$;
  \end{itemize}
\item[(C5)] otherwise ($|X^+| \ge 2$ and $X_C \ne \varnothing$), and
  \begin{itemize}
  \item[(C5i)] if, for any vertex $y \in C$, the ball $B_{r_y^*}(y)$
    realizes $X$, then set $\alpha_1(X) := (u^+,v^+)$,
    $\alpha_2(X) := (w,*)$, and $\alpha_3(X) := (s,*)$, where
    $s \in X^+$ is such that $d(w',s) = r_{w'}^*$;
  \item[(C5ii)] otherwise, if $s=t$ and the edge $xy$
    belongs to the clockwise $(s',z')$-path of $C$, then set
    $\alpha_2(X) := (*,w)$ and $\alpha_3(X) := (s,z)$;
  \item[(C5iii)] otherwise, if $s=t$ and the edge
    $xy$ belongs to the counterclockwise $(s',z')$-path of $C$, then
    set $\alpha_2(X) := (w,*)$ and $\alpha_3(X) := (s,z)$;
  \item[(C5iv)] otherwise ($s\neq t$), if the edge $xy$
    belongs to the clockwise $(s',t')$-path of $C$, then set
    $\alpha_2(X) := (*,w)$ and $\alpha_3(X) := (s,t)$;
  \item[(C5v)] otherwise ($s\neq t$), if the edge $xy$
    belongs to the counterclockwise $(s',t')$-path of $C$, then set
    $\alpha_2(X) := (w,*)$ and $\alpha_3(X) := (s,t)$.
  \end{itemize}
\end{itemize}
\subsection*{The reconstructor  $\beta(X)$}
Let $Y$ be a vector on six coordinates grouped into three pairs $Y_1$, $Y_2$,
and $Y_3$.
If $Y_1 = (y_1,y_2)$, then, for any vertex $t$ of $G$, we denote by $t'$ its gate
in $C(y_1,y_2)$. For any vertex $y$ of $C(y_1,y_2)$, we also set $r_y :=
\max\{d(y,y_1), d(y,y_2)\}$.
The reconstructor $\beta$ takes $Y$ and returns a ball $B_r(y)$ of $G$ defined
in the following way:
\begin{enumerate}
\item[(R1)] if $Y = ((*,*),(*,*),(*,*))$, then $\beta(Y)$ is any ball
  that contains the vertex set of $G$;
\item[(R2)] if $Y = ((*,*),(*,*),(y_5,*))$, then $\beta(Y)$ is the
  empty set;
\item[(R3)] if $Y = ((y_1,*),(*,*),(y_5,*))$, then $\beta(Y)$ is the ball
  $B_0(y_1)$;
  \item[(R4)] if $Y_1 = (y_1,y_2)$ and $Y_2 = (y_3,y_4)$, then let $u_0=y_3'$ if $y_3' \in  \{y_1,y_2\}$ or $y_3'$ is a cut vertex of $C(y_1,y_2)$, and otherwise, let $u_0$
  be the cut vertex of $C(y_3')$ between $y_3'$ and $y_2$. Similarly, let
  $v_0=y_4'$ if $y_4'\in \{y_1,y_2\}$ or $y_4'$ is a cut vertex of $C(y_1,y_2)$, and otherwise, let $v_0$ be the cut vertex of $C(y_4')$ between $y_4'$ and $y_1$.
  Then, $\beta(Y)$ is any ball $B_{r_y}(y)$  centered at $y \in
  C(u_0,v_0)$ such that $B_{r_y}(y)$ contains no vertex of $Y_3$.
\item[(R5i)] if $Y = ((y_1,y_2),(y_3,*),(y_5,*))$, then $\beta(Y)$ is
  the ball $B_r(y_3')$ of radius $r = d(y_3',y_5)$;
\item[(R5ii)] if $Y = ((y_1,y_2),(*,y_4),(y_5,y_6))$ and
  $(y_5,y_6) \in X^+ \times X^-$, let $xy$ be the edge of the
  $(y_5',y_6')$-path in the clockwise traversal of the cycle $C(y_4')$
  such that $|d'(x,y_6) - d'(x,y_5)| \in \{1,2\}$ and $y$ is closer to
  $y_6$ than $x$ is.
  Let $\beta(Y)$ be the ball
  $B_r(x)$, where $r = d(y,y_5)$ if $d'(y,y_5) = d(y,y_5) + 1$, and
  $r = d(y,y_5) - 1$ otherwise;
\item[(R5iii)] if $Y = ((y_1,y_2),(y_3,*),(y_5,y_6))$ and
  $(y_5,y_6) \in X^+ \times X^-$, let $xy$ be the edge of the
  $(y_5',y_6')$-path in the counterclockwise traversal of $C(y_3')$ such
  that $|d'(x,y_6) - d'(x,y_5)| \in \{1,2\}$ and $y$ is closer to
  $y_6$ than $x$ is.
  Let $\beta(Y)$ be the ball $B_r(x)$, where
  $r = d(y,y_5)$ if $d'(y,y_5) = d(y,y_5) + 1$, and $r = d(y,y_5) - 1$
  otherwise;
\item[(R5iv)] if $Y = ((y_1,y_2),(*,y_4),(y_5,y_6))$ and
  $(y_5,y_6) \in X^+ \times X^+$, let $xy$ be the edge of the
  $(y_5',y_6')$-path in the clockwise traversal of the cycle $C(y_4')$
  such that $|d'(x,y_6) - d'(x,y_5)| \in \{1,2\}$ and $y$ is closer to
  $y_6$ than $x$ is.
  Let $\beta(Y)$ be the ball $B_r(x)$, where $r = d(x,y_6)$;
\item[(R5v)] if $Y = ((y_1,y_2),(y_3,*),(y_5,y_6))$ and
  $(y_5,y_6) \in X^+ \times X^+$, let $xy$ be the edge of the
  $(y_5',y_6')$-path in the counterclockwise traversal of the cycle
  $C(y_3')$ such that $|d'(x,y_6) - d'(x,y_5)| \in \{1,2\}$ and $y$ is
  closer to $y_6$ than $x$ is.
  Let $\beta(Y)$ be the ball $B_r(x)$, where $r = d(x,y_6)$;
\end{enumerate}


\begin{proposition} \label{LSCS-cacti-extra-information}
For any tree of cycles $G$, the pair $(\alpha,\beta)$ of vectors defines a proper labeled sample compression scheme of size 6 for $\B(G)$.
\end{proposition}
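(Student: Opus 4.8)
The plan is to verify, case by case, that for every realizable sample $X$ for $\B(G)$, the ball $\beta(\alpha(X))$ realizes $X$, matching the compressor cases (C1)--(C5) with the corresponding reconstructor cases (R1)--(R5). The degenerate cases are quick: in (C1)/(R1), $X^-=\varnothing$ so any ball containing $V(G)$ trivially realizes $X$; in (C2)/(R2), $X^+=\varnothing$ so the empty ball works; in (C3)/(R3), $X^+=\{u\}$ and the single-vertex ball $B_0(u)$ avoids $X^-$ because $X$ is realizable (any ball realizing $X$ contains $u$ and misses $X^-$, and $B_0(u)$ is contained in it). The real content is in (C4) and (C5), and I would handle these using the structural lemmas already established.

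For case (C4) (here $X_C=\varnothing$), I would argue as follows. First, I need to check that the reconstructor in (R4) correctly recovers the cut vertices $u_0, v_0$ from $w_1=y_3$ and $w_2=y_4$ and their gates; this follows from the definitions, since $u_0$ was defined precisely as the cut vertex of $C(u^+,v^+)$ farthest from $u^*$ whose ``downstream'' side contains all gates of $X_u$, and $w_1$ was chosen to witness this. Then, by Lemma~\ref{lem:ball_containing_u+v+}, for every $y\in C(u_0,v_0)$ we have $X^+\subseteq B_{r_y}(y)$ and $r_y=r_y^*$, so the positive part is automatically realized by any such ball. For the negative part, I would invoke Lemma~\ref{lem:ball_not_containing_z1z2}: if $B_{r_y^*}(y)$ avoids $\{z_1,z_2\}$ (which are exactly $y_5, y_6$ in $Y_3$), then it avoids all of $X^-$. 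It remains to see that a center $y\in C(u_0,v_0)$ with $B_{r_y}(y)\cap\{z_1,z_2\}=\varnothing$ actually exists: the center $x$ of the original realizing ball $B_r(x)$ (which lies in $C(u^+,v^+)$ by Lemma~\ref{lem-xinC}) has gate in $C(u_0,v_0)$ by the definition of $u_0,v_0$ together with the fact that $B_r(x)$ contains $X_u^-\cup X_v^-$ only at distance $>r$\dots more carefully, one shows the gate $\bar x$ of $x$ in $C(u^+,v^+)$ lies in $C(u_0,v_0)$ and that $B_{r_{\bar x}}(\bar x)$ realizes $X$ using Lemma~\ref{lem-xinC}, hence in particular avoids $z_1,z_2$. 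So (R4) always finds such a ball, and it realizes $X$.

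For case (C5) (here $X_C\ne\varnothing$, so $C$ is a cycle), I would split along the subcases. In (C5i)/(R5i), for every $y\in C$ the ball $B_{r_y^*}(y)$ realizes $X$; the vertex $w$ lies with gate $w'\in C\setminus\{u^*,v^*\}$, the reconstructor recovers $w'$ as $y_3'$, and $B_{r_{w'}^*}(w') = B_{d(w',s)}(w')$ is exactly such a ball, so (R5i) returns a ball realizing $X$. In the remaining subcases (C5ii)--(C5v)/(R5ii)--(R5v), one has the adjacent edge $xy$ of $C$ with $B_{r_x^*}(x)$ realizing $X$ but $B_{r_y^*}(y)$ not, and the job is to show the reconstructor recovers exactly this edge and the radius $r_x^*$. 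Here the key tools are Lemma~\ref{lem:r_y^*_and_r_x^*}, which pins down the four possible distance patterns of $s,t,z$ relative to $x,y$, and Lemma~\ref{lem:decoding_recovering_xy}, which shows that along the appropriate path $P$ (the $(s,z)$-path when $s=t$, the $(s,t)$-path when $s\ne t$) the unique edge $x_0y_0$ satisfying Lemma~\ref{lem:aux_lemma_P} is precisely $xy$, and simultaneously gives a formula for $r_x^*$ in terms of distances readable from $Y$. The coordinate $\alpha_2(X)$ records whether $xy$ lies on the clockwise or counterclockwise arc, which tells the reconstructor which traversal of $C(y_3')$ resp.\ $C(y_4')$ to use so that it picks $xy$ and not the ``mirror'' edge on the other arc. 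Matching (R5ii)/(R5iii) to the $s=z$-coordinate (i.e.\ the $(s,z)$-path, cases (1)--(3) of Lemma~\ref{lem:r_y^*_and_r_x^*}) and (R5iv)/(R5v) to the $s\ne t$-coordinate (case (4)), and using the radius formulas in Lemma~\ref{lem:decoding_recovering_xy}, one concludes that $\beta(Y)=B_{r_x^*}(x)$, which realizes $X$ by the choice of $x$.

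The main obstacle, and where I expect the proof to require the most care, is the bookkeeping in case (C5): verifying that the reconstructor, armed only with the six recorded vertices, correctly disambiguates the edge $xy$ on the cycle $C$ and the integer $r_x^*$. Several subtle points must be checked: that the gate $w'$ of $w$ in $C(u^+,v^+)$ genuinely identifies the relevant cycle $C = C(w')$ (which uses $X_C\ne\varnothing$ and that $w$ was chosen with gate in $C\setminus\{u^*,v^*\}$); that the clockwise/counterclockwise flag in $\alpha_2(X)$ suffices to orient the search so the ``unique edge'' of Lemma~\ref{lem:aux_lemma_P} applied on the path $P\subseteq G$ (not just on $C$) lands on $xy$ — here one must be careful that attaching the pendant $(s',s)$- and $(z,z')$-legs to the arc $P'$ does not create a second qualifying edge, which is exactly the content of Lemma~\ref{lem:decoding_recovering_xy}; and that in each of (R5ii)--(R5v) the recovered radius matches $r_x^*$, i.e.\ that the reconstructor correctly decides whether $d'(y,y_5)=d(y,y_5)+1$ or not. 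Modulo these verifications, which are entirely driven by Lemmas~\ref{lem:r_y^*_and_r_x^*}, \ref{lem:aux_lemma_P}, and \ref{lem:decoding_recovering_xy}, the proposition follows, so the proof is essentially an organized case analysis rather than a source of new ideas.
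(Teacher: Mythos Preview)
Your proposal is correct and follows essentially the same approach as the paper: a case-by-case verification matching (C$k$) with (R$k$), invoking Lemmas~\ref{lem:ball_containing_u+v+} and~\ref{lem:ball_not_containing_z1z2} for case~(C4), and Lemmas~\ref{lem:r_y^*_and_r_x^*}, \ref{lem:aux_lemma_P}, \ref{lem:decoding_recovering_xy} for case~(C5). The only minor remark is that in (C4) you speak of the gate $\bar x$ of $x$ in $C(u^+,v^+)$, but by Lemma~\ref{lem-xinC} one has already taken $x\in C(u^+,v^+)$, so $\bar x=x$; the relevant observation is simply that $x\in C(u_0,v_0)$ (since $C\subseteq C(u_0,v_0)$ by construction) and $B_{r_x}(x)\subseteq B_r(x)$ avoids $\{z_1,z_2\}$, which witnesses that (R4) returns a well-defined ball.
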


\begin{proof}
 Let $X$ be a realizable sample for $\B$.
    Let $Y=\alpha (X)$ and $B_r(x^*)=\beta(Y)$. We will prove case by case that
    the ball $B_r(x^*)$ realizes the sample $X$, {\it i.e.}, that $X^+ \subseteq
    B_r(x^*)$ and $X^-
    \cap B_r(x^*)=\varnothing$. One can easily see that the cases (R$k$) and
    their subcases in the definition of $\beta$ correspond to the cases (C$k$)
    and their subcases in the definition of $\alpha$: namely, the vector $Y$ in
    Case (R$k$) has the same specified coordinates as the vector $\alpha(X)$ in
    Case (C$k$).

    In Case (R1), we have $Y = ((*,*),(*,*),(*,*))$. Since $Y=\alpha(X)$, this
    implies that $|X^-|=0$, which corresponds to Case (C1). Consequently, the
    ball covering $G$ is compatible with $X$. In Case (R2), $Y =
    ((*,*),(*,*),(y_5,*))$. Since $|Y^+| = 0$ and $|Y^-|\ne 0$, this implies that
    $|X^+|=0$ and $|X^-| \ne 0$, which corresponds to Case (C2). In this case,
    the empty set is compatible with $X$.
    In Case (R3), $Y = ((y_1,*),(*,*),(y_5,*))$. The fact that $|Y_1|=1$ implies that $|X^+| = 1$.
    Thus, the ball $B_0(y_1)$ is compatible with $X$.

    Consider now Case (R4). In this case, $Y_1 = (y_1,y_2)$ and $Y_2 =
    (y_3,y_4)$.
    Since $Y=\alpha(X)$, this implies that $X$ satisfies the conditions of Case
    (C4), {\it i.e.},  $|X^+|\ge 2$ and $X_C=\varnothing$.
    Therefore, $Y_1 = (y_1,y_2)=(u^+,v^+)=\alpha_1(X), Y_2 =
    (y_3,y_4)=(w_1,w_2)=\alpha_2(X)$, and $Y_3$ (containing one or two
    vertices)
    coincides with $\alpha_3(X)$ (containing one or two vertices $z_1,z_2$ as
    in subcases (C4i)--(C4iii)). The ball $B_{r_y}(y)$ returned by  Case (R4)
    is centered at $y \in C(u_0,v_0)$, contains $Y_1$, and is disjoint from $Y_3$. Since the target ball $B_r(x)$ has its center on $C(u_0,v_0)$
    and is compatible with $X\supseteq Y_1\cup Y_3$, the ball $B_{r_y}(y)$ is well-defined. By Lemma~\ref{lem:ball_containing_u+v+}, $B_{r_y}(y)$ contains $X^+$.
    By Lemma~\ref{lem:ball_not_containing_z1z2}, $B_{r_y}(y)$ is disjoint from
    $X^-$. Consequently, $B_{r_y}(y)$ is compatible with $X$.

    In Case (R5), we have $Y_2 = (*,y_4)$ or $Y_2 = (y_3,*)$. This
    distinguishes the Case (R5) from the Cases (R1)--(R4). If $Y_3 = (y_5,*)$,
    then we are in Case (R5i), which corresponds to Case (C5i). In this
    case, from the definition of $y_3'$ and $y_5$, we conclude that $r_{y_3'}^*
    = d(y_3',y_5)$, and thus, the ball $\beta(Y) = B_{r_{y_3'}^*}(y_3')$ realizes
    $X$.
    Each subcase (R5ii)--(R5v) corresponds to the respective subcase
    (C5ii)--(C5v), and its analysis is based on Lemma~\ref{lem:decoding_recovering_xy}. If $Y_3 = (y_5,y_6) \in X^+ \times
    X^+$, then we are in Case (R5iv) or (R5v), and also in Case (4) of Lemma~\ref{lem:decoding_recovering_xy}.
    Therefore, $\beta(Y) = B_r(x)$ with $r = d(x,y_6)$ realizes $X$.
    If $Y_3 = (y_5,y_6) \in X^+ \times X^-$, then we are in Case (R5ii) or
    (R5iii), and in Cases (1)--(3) of Lemma~\ref{lem:decoding_recovering_xy}. If
    $d'(y,y_5) = d(y,y_5) + 1$, then we are in Case (3) of Lemma~\ref{lem:decoding_recovering_xy}. Therefore, by Lemma~\ref{lem:decoding_recovering_xy}, $r_x^* = d(y,y_5) =: r$, and the
    ball $\beta(Y) = B_r(x)$ realizes $X$. Otherwise, $d'(y,y_5) = d(y,y_5)$,
    and we are in Case (1) or (2) of Lemma~\ref{lem:decoding_recovering_xy}. In
    these cases, if we set $r = d(y,y_5) -
    1$, then the ball $B_r(x)$ realizes $X$.
\end{proof}

\begin{remark}\label{remark}
    The most technically involved case of the previous result is the case $X_C
    \ne \varnothing$. In fact, this case corresponds to proper labeled sample compression schemes in
    \emph{spiders}, {\it i.e.},
    in  graphs consisting of a single cycle $C$ and paths of different lengths
    emanating from this cycle. Due to this case,  $\alpha(X)$ in our result is not a signed map
    but a signed vector of size $6$. Thus, in this case, we need extra
    information compared to the initial definition of proper labeled sample
    compression schemes. The VC-dimension of the family of balls in a spider and
    in a tree of cycles is $3$. We wonder \emph{whether the family of balls in
    spiders admits a proper labeled sample compression scheme without any information that is of (a)
    size $3$ or (b) constant size.}
\end{remark}

\section{Cube-free median graphs}\label{sec:cube-free}

The \emph{dimension} $\dim(G)$ of a median graph $G$ is the largest dimension
of a hypercube of $G$. A \emph{cube-free median graph} is a median graph of dimension $2$, {\it i.e.},
a median graph not containing 3-cubes as isometric subgraphs. Median graphs constitute the most important class of graphs in metric graph theory.
They are also important in geometric group theory (as the 1-skeletons of CAT(0) cube complexes) and in concurrency (as the domains of event structures).
For references about median graphs, see~\cite{BaCh_survey}. For cube-free median graphs,
see \cite{BaChEp,ChHa,ChLaRa,ChMa}. We use the fact that intervals of median graphs are convex, and thus, gated~\cite{Mu80}. We describe a
proper LSCS of size 22 for balls of cube-free median graphs. The following example shows that the balls of cube-free median graphs have VC-dimension at least $4$.

\begin{example}\label{ex:cube-free_med_VCdim}
The balls of cube-free median graphs have VC-dimension at least $4$. Let $G$ be the cube-free median graph in Figure~\ref{fig:cube-free_med_VCdim}. We show that the set $Y\subset V(G)$ of $4$ vertices with labels $1,2,3,4$ can be shattered by $\B(G)$. It is trivial to see there are balls whose intersection with $Y$ are precisely the empty set, each subset of size $1$ of $Y$, and $Y$. For $\{1,2\}$, $\{1,3\}$, $\{1,4\}$, $\{2,3\}$, $\{2,4\}$, and $\{3,4\}$, respectively, take the balls $B_2(c)$, $B_3(b)$, $B_3(a)$, $B_1(g)$, $B_2(d)$, and $B_2(e)$, respectively. For $\{1,2,3\}$, $\{1,2,4\}$, $\{1,3,4\}$, and $\{2,3,4\}$, respectively, take the balls $B_3(f)$, $B_3(c)$, $B_4(b)$, and $B_2(g)$, respectively.
\end{example}

\begin{figure}[htb]
\includegraphics[scale=0.75]{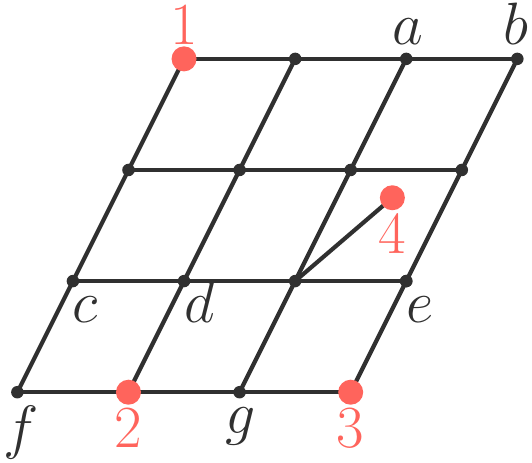}
\caption{\label{fig:cube-free_med_VCdim} Illustration of the graph $G$ in Example~\ref{ex:cube-free_med_VCdim} that is used to show that the balls of cube-free median graphs have VC-dimension at least $4$. The set of $4$ vertices in red with labels $1,2,3,4$ are shattered by $\B(G)$.}
\end{figure}

Let $G$ be a cube-free median graph. Let $X$ be a realizable sample for $\B(G)$, and $\{u^+,v^+\}$ a diametral
pair of $X^+$. The next lemma shows that the center of a ball realizing $X$
can always be found in $I(u^+,v^+)$ (this result does not hold for all median graphs):
\begin{lemma}\label{lem_xinC_cube-free} 
If $x'$ is the gate of $x$
    in the interval $I(u^+,v^+)$, and $r' = r - d(x,x')$, then $X$ is a
    realizable sample for $B_{r'}(x')$, {\it i.e.}, $X^+\subseteq B_{r'}(x')$ and
    $X^-\cap B_{r'}(x')=\varnothing$.
\end{lemma}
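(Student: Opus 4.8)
The plan is to mimic the proof of Lemma~\ref{lem-xinC} for trees of cycles, using the gate property and the convexity of intervals in median graphs. Let $B:=B_r(x)$, $B':=B_{r'}(x')$ with $x'$ the gate of $x$ in $I(u^+,v^+)$ and $r'=r-d(x,x')$. Since $x'$ is the gate of $x$, for every $y\in I(u^+,v^+)$ we have $d(x,y)=d(x,x')+d(x',y)$, so $y\in B'$ iff $d(x',y)\le r'$ iff $d(x,y)\le r$, and more generally $B'\subseteq B$. The inclusion $B'\subseteq B$ immediately gives $X^-\cap B'\subseteq X^-\cap B=\varnothing$, so the only thing to prove is $X^+\subseteq B'$.

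First I would handle the ``easy'' direction exactly as in the tree-of-cycles case: suppose for contradiction that some $z\in X^+$ satisfies $z\notin B'$, i.e. $d(x',z)>r'$. If $x'\in I(z,u^+)$ (the case $x'\in I(z,v^+)$ being symmetric), then $d(u^+,z)=d(u^+,x')+d(x',z)>d(u^+,x')+r'\ge$ something large; more carefully, since $u^+,v^+\in I(u^+,v^+)$ and $x'$ is their common gate we get $d(u^+,v^+)=d(u^+,x')+d(x',v^+)\le d(u^+,x')+r'<d(u^+,x')+d(x',z)=d(u^+,z)$, contradicting that $\{u^+,v^+\}$ is a diametral pair of $X^+$ (here I use $d(x',v^+)\le r'$, which holds because $v^+\in X^+\subseteq B$ and $v^+\in I(u^+,v^+)$, hence $d(x,v^+)=d(x,x')+d(x',v^+)\le r$). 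So we may assume $x'\notin I(z,u^+)\cup I(z,v^+)$; in particular $x'\notin\{u^+,v^+\}$.

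The main obstacle — and where the cube-free (dimension $\le 2$) hypothesis must enter, since the statement fails for general median graphs — is to derive a contradiction in this remaining case, where $z$ ``sees'' both $u^+$ and $v^+$ without passing through $x'$. I would argue about the median $m:=\med(z,u^+,v^+)$. Since $x'$ is the gate of $x$ (equivalently of the whole half containing $x$) in $I(u^+,v^+)$ and $m\in I(u^+,v^+)$, we have $d(x,z)=d(x,x')+d(x',m)+\varepsilon$ type relations; the goal is to show $d(x,z)=d(x,x')+d(x',z)>r'+d(x,x')=r$, contradicting $z\in X^+\subseteq B_r(x)$. To get $d(x,x')+d(x',z)=d(x,z)$ one wants $x'\in I(x,z)$, i.e. that the gate property of $x'$ in $I(u^+,v^+)$ extends to $z$. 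This is where I expect to invoke cube-freeness: in a cube-free median graph, intervals are convex and gated, and one can use the structure of the convex hull of $\{x,u^+,v^+,z\}$ (or a careful analysis of the $\Theta$-classes/hyperplanes separating $x$ from $I(u^+,v^+)$) to show that any such hyperplane also separates $x$ from $z$ — otherwise one builds a $3$-cube. Concretely, I would let $H$ be a hyperplane separating $x$ from $I(u^+,v^+)$; then $H$ separates $x$ from $u^+$ and from $v^+$; if $H$ did not separate $x$ from $z$, then $z$ lies on the $x$-side together with $x$, and combining with the hypotheses $x'\notin I(z,u^+)\cup I(z,v^+)$ one finds two hyperplanes crossing in a way that forces a third, yielding a $3$-cube, contradiction. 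Hence every hyperplane separating $x$ from $I(u^+,v^+)$ separates $x$ from $z$, which gives $d(x,z)\ge d(x,x')+d(x',z)>r$, the desired contradiction. I would present the hyperplane/$\Theta$-class argument as the crux, citing convexity of intervals \cite{Mu80} and the dimension-$2$ assumption, and keep the diametral-pair estimates as routine.
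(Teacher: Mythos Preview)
Your proposal is correct, and the hyperplane argument can be made precise: from $x'\notin I(z,u^+)$ you get a hyperplane $H_1$ with $z,u^+$ on one side and $x',v^+,x$ on the other (the positions of $v^+$ and $x$ follow from $x'\in I(u^+,v^+)$ and $x'\in I(x,u^+)$), symmetrically a hyperplane $H_2$ with $z,v^+$ versus $x',u^+,x$, and from $x'\notin I(x,z)$ a hyperplane $H$ with $x,z$ versus $x',u^+,v^+$; checking the four quadrants for each pair shows $H,H_1,H_2$ pairwise cross, yielding a $3$-cube. However, this is a genuinely different route from the paper's. The paper never invokes hyperplanes: instead it introduces the gate $z'$ of $z$ in $I(u^+,v^+)$, first disposes of the case $z'\in I(x,z)$ (equivalently $x'\in I(x,z)$) directly, then splits on whether $x'$ and $z'$ lie on a common $(u^+,v^+)$-geodesic---if so, a diametral-pair computation gives the contradiction; if not, it exhibits three distinct neighbours $p,q,s$ of $z'$ inside the interval $I(z',x)$ (one towards each of $u^+$, $v^+$, and the median of $x,z,z'$), which forces a $3$-cube locally at $z'$. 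Your global hyperplane argument is more conceptual and reusable (it is essentially the standard ``three pairwise-crossing hyperplanes $\Rightarrow$ $3$-cube'' principle for CAT(0) cube complexes), while the paper's argument is more elementary in that it uses only medians, gates, and neighbours, requiring no $\Theta$-class machinery.
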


\begin{proof}
    Let $B':=B_{r'}(x')$. From the definitions of $x'$ and $r'$, it immediately
    follows that $B'\subseteq B_r(x)=B$, implying that $X^-\cap
    B'=\varnothing$. Notice also that $d(x',u^+) \le r'$ and $d(x',v^+) \le r'$.
    To prove the inclusion $X^+\subseteq B'$, suppose by way of contradiction
    that there exists $z \in X^+\setminus B'$.
    Let $z'$ be the gate of $z$ in $I(u^+,v^+)$.
    First, suppose that $z' \in I(x,z)$. Since $x' \in I(x,z')$, the vertices
    $z'$ and $x'$ belong to a common shortest path between $z$ and $x$.
    Consequently,
    $$d(x,z) = d(x,x') + d(x',z') + d(z',z).$$
    Since $d(x,z) \le r$, we conclude that $d(x',z) \le r - d(x,x') = r'$.
    The same conclusion holds if $x' \in I(x,z)$.
    This implies that the median of $x$, $z$, and $z'$ is a vertex $z'' \ne z'$.

    Now, suppose that the vertices $x'$ and $z'$ belong to a common shortest
    path between $u^+$ and $v^+$. Say $z' \in I(u^+,x')$ and $x' \in I(z',v^+)$.
    Since $d(x',u^+) \le r'$ and $d(x',z) > r'$, we deduce that $d(x',u^+) <
    d(x',z)$. Since $z' \in I(x',u^+) \cap I(x',z)$, we deduce that $d(z,z') >
    d(u^+,z')$. Since $z' \in I(v^+,u^+) \cap I(v^+,z)$, we conclude that $d(v^+,z) >
    d(v^+,u^+)$, contradicting the  choice of the pair $\{u^+,v^+\}$ as a diametral
    pair of $X^+$.

    Finally, suppose that $x'$ and $z'$ do not belong to a common shortest path
    between $u^+$ and $v^+$. This is equivalent to the assertion that the median
    $u'$ of the triplet $\{u^+,x',z'\}$ is different from $x'$ and $z'$, and that
    the median $v'$ of the triplet $\{v^+,x',z'\}$ is different from $x'$ and
    $z'$.
    Notice that $u'$ and $v'$ belong to a common shortest path between $u^+$ and
    $v^+$. Let $p$ be a neighbor of $z'$ in $I(z',u')$ and let $q$ be a neighbor
    of
    $z'$ in $I(z',v')$. If $p = q$, then this vertex is the median of
    $(z',u',v')$ and of $(z',u^+,v^+)$, contrary to the assumption that $z' \in
    I(u^+,v^+)$. It follows that $p \ne q$. Let $s$ be a neighbor of $z'$ in
    $I(z',z'')$. Since $z'$ is the gate of $z$ in $I(u^+,v^+)$, $s$ does not belong
    to $I(u^+,v^+)$, and is thus distinct from $p$ and $q$.
    From the definitions of the  vertices $x'$, $z'$, and $z''$, we conclude that
    the vertices $p$, $q$, and $s$ belong to the interval $I(z',x)$ and therefore belong to a 3-cube of $G$. This
    contradicts the fact that $G$ is a cube-free median graph. 
\end{proof}

%
%
By \cite{ChMa},  $I(u^+,v^+)$ of a cube-free median graph has an isometric embedding in
the square grid $\ZZ^2$. We denote by $(z_a,z_b)$ the coordinates in $\ZZ^2$ of a vertex $z\in I(u,v)$.  We consider
isometric embeddings of $I(u,v)$ in $\ZZ^2$ for which $u=(0,0)$ and $v=(v_a,v_b)$ with $v_a\ge 0$ and $v_b\ge 0$. 
We fix 
a canonical isometric embedding, which can be used
both by the compressor and the reconstructor. 
Finally, we use the same notation for the vertices and their images under the
embedding, and we denote by $\bI$ the interval $I(u^+, v^+)$ embedded in
$\ZZ^2$. As usual, for a vertex $z \in V$, we denote by $z'$ its gate in the interval $I(u^+,v^+)$.
\subsection*{The compressor $\alpha(X)$}
The compressor $\alpha(X)$ is a vector with $22$ coordinates grouped
into four parts $\alpha(X):=(\alpha_1(X),\alpha_2(X),\alpha_3(X),\alpha_4(X))$.
The part $\alpha_1(X) \subseteq X^+$ consists of a diametral pair $(u^+,
v^+)$ of $X^+$. The part $\alpha_2(X) \subseteq X$ has size 4, and is
used to specify a region $\bR \subseteq \bI=I(u^+, v^+)$ such that the gates in
$I(u^+, v^+)$ of all the vertices of $X$ are located outside or on the boundary of
$\bR$. Moreover, $\bR$ contains the center $x$ of the target ball
$B_r(x)$.
The parts $\alpha_3(X) \subseteq X^+$ and $\alpha_4(X)\subseteq X^-$ each have size 8 and are used
to locate the center 
and the radius 
of a ball $B_{r''}(y)$ realizing $X$. Now, we formally define $\alpha_i(X)$, $i = 1,...,4$. 
Let $X_1 := \{ w \in X : w'_b \ge x_b \}$, $X_2 := \{ w \in X : w'_a \ge x_a
\}$, $X_3 := \{ w \in X : w'_b \le x_b \}$, and $X_4 := \{ w \in X : w'_a \le
x_a \}$. Since $I(u^+,v^+)$ is gated, $X=\cup_{i=1}^4 X_i$.
Denote by $X'_i$, $i=1,...,4$, the gates of the vertices of $X_i$ in
$I(u^+, v^+)$. Set $\alpha_2(X):= (w_1, w_2, w_3, w_4) \in X^4$, where:
\begin{itemize}
    \item $w_1$ is a vertex of $X_1$ whose gate $w_1'$ has the smallest
    ordinate among the vertices of $X_1'$;
    \item $w_2$ is a vertex of $X_2$ whose gate $w_2'$ has the smallest
    abscissa among the vertices of $X_2'$;
    \item $w_3$ is a vertex of $X_3$ whose gate $w_3'$ has the largest
    ordinate among the vertices of $X_3'$;
    \item $w_4$ is a vertex of $X_4$ whose gate $w_4'$ has the largest
    abscissa among the vertices of $X_4'$;
\end{itemize}
%
%
%
%
%
\begin{figure}[htb]
\begin{minipage}{0.495\linewidth}
        \centering
        \includegraphics[width=0.538\linewidth]{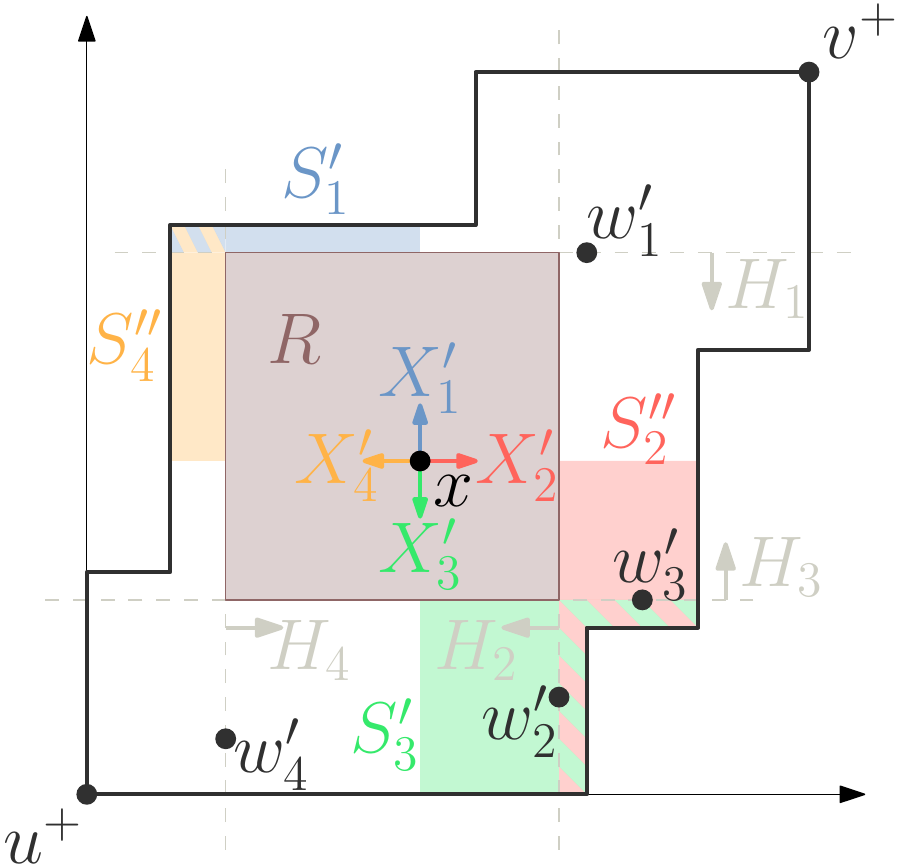}
    \end{minipage}
    \begin{minipage}{0.495\linewidth}
        \centering
        \includegraphics[width=0.66\linewidth]{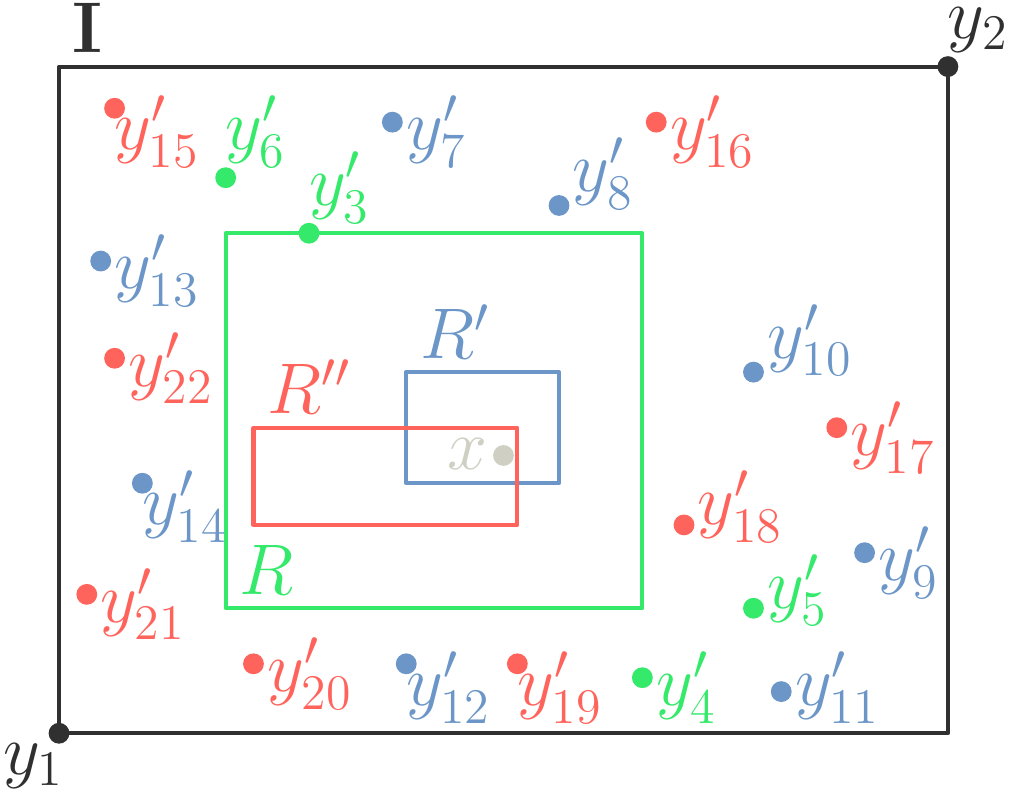}
    \end{minipage}
    \caption{
        \label{fig:LSCS_cube-free_median}
        On the left, the region $\bR$ and the halfstrips $\bS_1'(x)$,
        $\bS_2''(x)$, $\bS_3'(x)$, and $\bS_4''(x)$. 
        On the right, the regions $\bR$, $\bR'$, and $\bR''$ computed from
        $\alpha(X)$.
        Steps 1-4 of the reconstruction correspond to the
        black, green, blue, and red parts of the figure.
        The target center $x$ is given in gray. 
         }
\end{figure}
For a vertex $w=(w_a, w_b)\in \ZZ^2$, we consider the four coordinate
halfplanes $\bH_{\le w_a} := \{t : t_a \le w_a\}$, $\bH_{\ge w_a}$,
$\bH_{\le w_b}$, 
and $\bH_{\ge w_b}$. 
Let $\bR$ be the set of vertices of $\bI$ that belong to the intersection of
the halfplanes $\bH_1:=\bH_{\le w_{1b}}$, $\bH_2:=\bH_{\le w_{2a}}$,
$\bH_3:=\bH_{\ge w_{3b}}$, and $\bH_4:=\bH_{\ge w_{4a}}$. If a vertex $w_i$
does not exist, then the corresponding halfplane $\bH_i$ is not defined.
From the definition, 
the inside of $\bR$ does not
contain gates of vertices of $X$.
We denote by $\bS_i$, $i = 1,...,4$, the intersection of $\bI$ with the closure
of the complementary halfspace of $\bH_i$. We call $\bS_i$, $i=1,...4$, a
\emph{strip of $\bI$}. Consequently, the interval $\bI$ is covered by the region $\bR$, two
\emph{horizontal}  strips $\bS_1$ and $\bS_3$, and two \emph{vertical} strips
$\bS_2$ and $\bS_4$.
Using this notation, we can redefine $X_i$ as the sets of all the vertices of $X$ whose gate in $\bI$
belongs to the strip  $\bS_i$. Consequently, $X_i' \subseteq \bS_i$.
Furthermore, for a vertex $z \in \ZZ^2$, each strip $\bS_i$ is partitioned into two strips
$\bS_i'(z)$ and $\bS_i''(z)$ by the vertical or horizontal line passing via
$z$. The labeling of the strips is done in the clockwise order around $z$, 
see Fig.~\ref{fig:LSCS_cube-free_median} (left). Let $\alpha_3(X):=(s_1, t_1,
s_2, t_2, s_3, t_3, s_4, t_4)$, where
\begin{itemize}
    \item $s_1$ is a vertex of $X^+$ furthest from $x$, whose gate $s_1'$
    belongs to $\bS_1'(x)$, and $t_1$ is a vertex of $X^+$ such that its gate $t_1'$
    belongs to $\bS_1''(x)$ and the abscissa of $t_1'$ is closest to $x_a$;
    \item $s_2$ is a vertex of $X^+$ furthest from $x$, whose gate $s_2'$
    belongs to $\bS_2''(x)$, and $t_2$ is a vertex of $X^+$ such that its gate $t_2'$
    belongs to $\bS_2'(x)$ and the ordinate of $t_2'$ is closest to $x_b$;
    \item $s_3$ is a vertex of $X^+$ furthest from $x$, whose gate $s_3'$
    belongs to $\bS_3'(x)$, and $t_3$ is a vertex of $X^+$ such that its gate $t_3'$
    belongs to $\bS_3''(x)$ and the abscissa of $t_3'$ is closest to $x_a$;
    \item $s_4$ is a vertex of $X^+$ furthest from $x$, whose gate $s_4'$
    belongs to $\bS_4''(x)$, and $t_4$ is a vertex of $X^+$ such that its gate $t_4'$
    belongs to $\bS_4'(x)$ and the ordinate of $t_4'$ is closest to $x_b$.
\end{itemize}
Let $\alpha_4(X):=(p_1, q_1, p_2, q_2, p_3, q_3, p_4, q_4)$, where $p_i$ is a vertex of $X^-$ closest to $x$, whose gate $p_i'$ belongs to 
$\bS_i'(x)$, and $q_i$ is a vertex of $X^-$ closest to $x$, whose gate $q_i'$
belongs to 
$\bS_i''(x)$. 
If any of the vertices of the four groups is not defined, then its corresponding
coordinate in $\alpha(X)$ is set to $*$.
\subsection*{The reconstructor $\beta(Y)$}
Let $Y$ be a vector of $22$ coordinates corresponding to a realizable sample and grouped into four parts $Y_1 := (y_1,
y_2)$, $Y_2 := (y_3, y_4, y_5, y_6)$, $Y_3 := (y_7, y_8, y_9, y_{10}, y_{11},
y_{12}, y_{13}, y_{14})$, and
$Y_4 := (y_{15}, y_{16}, y_{17}, y_{18}, y_{19}, y_{20}, y_{21}, y_{22})$.
The reconstructor $\beta(Y)$ returns a ball $B_{r''}(y)$ by performing the
following steps (see Fig.~\ref{fig:LSCS_cube-free_median} (right)):
\begin{enumerate}
    \item Using $Y_1$, 
    canonically isometrically embed
    $I(y_1,y_2)$ into $\ZZ^2$ as $\bI$. 
    \item Using $Y_2$, 
    compute the gates $y_i'$ of $y_i$
    in $\bI$ and compute the region
    $\bR$ 
    as the intersection of the halfplanes
    $\bH_{\le y_{1b}}$, $\bH_{\le y_{2a}}$, $\bH_{\ge y_{3b}}$, and $\bH_{\ge
    y_{4a}}$ with $\bI$.
   \item Using $Y_3$, 
    compute the set
    $\bR'\subseteq \bR$ of all $y =(y_a, y_b) \in \bR$ such that the
    gates $y_7', y_8', y_9', y_{10}', y_{11}', y_{12}', y_{13}',y_{14}'$
    belong to $\bS_1'(y)$, $\bS_1''(y)$, $\bS_2''(y)$, $\bS_2'(y)$,
    $\bS_3'(y)$, $\bS_3''(y)$, $\bS_4''(y)$, $\bS_4'(y)$, resp.
    For each $y \in \bR'$, let $r_y'$ be the \emph{smallest} radius such that $Y_1
    \cup Y_3 \subseteq B_{r_y'}(y)$.
    \item Using $Y_4$, 
    compute the
    region $\bR'' \subseteq \bR$ consisting of all the vertices $y \in \bR$ such
    that the gates $y_{15}', y_{16}', \ldots, y_{21}', y_{22}'$ belong to the
    strips $\bS_1'(y)$, $\bS_1''(y)$, $\ldots$, $\bS_4'(y)$, $\bS_4''(y)$,
    respectively.
    For each $y \in \bR''$, let $r_y''$ be the \emph{largest} radius such that
    $B_{r_y''}(y) \cap Y_4 = \varnothing$.
    \item Let $\bR_0 := \{y \in \bR' \cap \bR'' : r_y'' \ge r_y' \}$ and  return as $\beta(Y)$
    any ball $B_{r_y''}(y)$ with $y \in \bR_0$.
\end{enumerate}
%
\begin{proposition}
    \label{prop:LSCS_cube-free_median}
    For any cube-free median graph $G$, the pair $(\alpha, \beta)$ of vectors defines a proper labeled sample compression scheme of size $22$ for $\mathcal B = \mathcal B(G)$.
\end{proposition}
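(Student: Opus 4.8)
The plan is to verify, case by case, that the ball $B_{r_y''}(y)$ returned by $\beta$ for $y \in \bR_0$ is consistent with $X$, and that $\bR_0$ is always non-empty. The key structural fact driving everything is Lemma~\ref{lem_xinC_cube-free}: there is a ball $B_r(x)$ realizing $X$ whose center $x$ lies in $\bI = I(u^+,v^+)$; we fix such a ball and use its center $x$ as the "target" against which the compressor's choices are made. The proof splits according to the trivial cases first: if $X^- = \varnothing$ the ball covering $G$ works; if $X^+ = \varnothing$ the empty set works; if $|X^+| = 1$ then $\alpha$ records that single vertex and $B_0$ of it works. So assume $|X^+| \ge 2$ and $X^- \ne \varnothing$, so that $u^+,v^+$ and the embedding $\bI \hookrightarrow \ZZ^2$ are well-defined and $\alpha(X)$ has the full structure $(\alpha_1,\alpha_2,\alpha_3,\alpha_4)$.

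First I would establish that the target center $x$ lies in all four reconstructed regions, i.e.\ $x \in \bR \cap \bR' \cap \bR''$, and hence witnesses non-emptiness after the radius comparison. For $x \in \bR$: by the choice of $w_1,w_2,w_3,w_4$ as the vertices of $X$ whose gates are extremal in each strip, and since $\bI$ is gated so that every vertex of $X$ has its gate in one of the four strips $\bS_i$, the halfplanes $\bH_i$ passing through $w_i'$ bound a region containing $x$ and with no gate of $X$ in its interior; this needs a short argument using the coordinatewise definition of the $X_i$ and the fact that the gate map into an isometrically embedded interval of a cube-free median graph is coordinatewise monotone (a consequence of gatedness plus the $\ZZ^2$-embedding from~\cite{ChMa}). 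For $x \in \bR'$: by construction, the gates $s_i', t_i'$ lie in the prescribed sub-strips $\bS_i'(x), \bS_i''(x)$, which is exactly the membership condition defining $\bR'$; similarly $x \in \bR''$ using the $p_i, q_i$. Then since $B_r(x)$ realizes $X$, we have $X^+ \subseteq B_r(x)$, so $r \ge r_x'$ (as $r_x'$ is the \emph{smallest} radius with $Y_1 \cup Y_3 \subseteq B_{r_x'}(x)$ and $Y_1 \cup Y_3 \subseteq X^+$), and $B_r(x) \cap X^- = \varnothing$, so $r \le r_x''$; combining, $r_x'' \ge r \ge r_x'$, so $x \in \bR_0$ and $\bR_0 \ne \varnothing$.

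Next I would show consistency: for arbitrary $y \in \bR_0$, the ball $B_{r_y''}(y)$ contains $X^+$ and misses $X^-$. Disjointness from $X^-$ is almost immediate: $r_y''$ is by definition the largest radius with $B_{r_y''}(y) \cap Y_4 = \varnothing$, and one must upgrade this from $Y_4 = \alpha_4(X)$ (a bounded selection of vertices of $X^-$) to all of $X^-$; this is the step where the partition of each strip into $\bS_i'(y), \bS_i''(y)$ and the choice of $p_i, q_i$ as the \emph{closest} vertices of $X^-$ to $x$ in each sub-strip does the work — any $z \in X^-$ lies in some sub-strip, its gate is "beyond" the corresponding $p_i'$ or $q_i'$ relative to $y$ along that strip, and a monotonicity/separation argument (the gate of $z$ in $\bI$ separates $y$ from $z$, giving an additive distance formula as in Lemmas~\ref{lem:ball_not_containing_z1z2} and~\ref{lem:ball_containing_u+v+}) shows $d(y,z) \ge d(y, p_i) > r_y''$ would be violated, so in fact $z \notin B_{r_y''}(y)$. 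The inclusion $X^+ \subseteq B_{r_y''}(y)$ uses $r_y'' \ge r_y'$ together with the dual statement that $r_y'$ already dominates $d(y,w)$ for every $w \in X^+$: here the $s_i$ (furthest positive vertex in $\bS_i'(x)$) and $t_i$ (positive vertex with gate closest to the line through $x$ in $\bS_i''(x)$) are chosen so that, for $y$ in the right region, $\max_{w \in X^+} d(y,w)$ is attained (up to the required comparison) at one of the eight vertices $s_i, t_i$, again by a coordinatewise separation argument exploiting that $\bI$ embeds in $\ZZ^2$.

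The main obstacle I expect is precisely the claim that eight positive vertices $s_1,t_1,\dots,s_4,t_4$ and eight negative vertices $p_1,q_1,\dots,p_4,q_4$ suffice to control, for \emph{every} candidate center $y \in \bR$, both $\max_{w\in X^+} d(y,w)$ and $\min_{z\in X^-} d(y,z)$ well enough that the radius chosen at $y$ works — i.e.\ that the "$s$ furthest, $t$ closest-to-the-dividing-line" heuristic within each of the eight sub-strips genuinely captures the extremal distances. This is a two-dimensional analogue of the one-dimensional circular-interval arguments used for cycles and trees of cycles (Lemmas~\ref{intersection-ball-sphere}, \ref{lem:r_y^*_and_r_x^*}), but now the geometry is that of $\ZZ^2$: the subtlety is that when the center moves from $x$ to another $y \in \bR$, the partition of a strip into its two sub-strips shifts, yet one must argue that $s_i$ or $t_i$ (chosen with respect to $x$) still lies in the sub-strip of $y$ that matters, or else that the other selected vertex takes over. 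I would handle this by a careful analysis of how the dividing line through $y$ relates to the one through $x$ within each strip, using that $y$ and $x$ both lie in $\bR$ (so they are "between" the same bounding halfplanes), and reducing each sub-strip comparison to an essentially one-dimensional distance computation via the additive formula $d(y,w) = d(y, w') + d(w', w)$ valid because $w'$ is the gate of $w$ in the gated set $\bI$. Once that monotonicity bookkeeping is in place, the consistency of $B_{r_y''}(y)$ follows formally, completing the proof.
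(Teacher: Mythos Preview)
Your proposal is correct and follows essentially the same route as the paper. The paper packages the consistency step into two claims --- for any $y \in \bR'$ one has $X^+ \subseteq B_{r_y'}(y)$, and for any $y \in \bR''$ one has $X^- \cap B_{r_y''}(y) = \varnothing$ --- each proved by the case analysis you anticipate on whether the gate $z'$ of $z \in X^{\pm}$ lands in the same sub-strip relative to $y$ as relative to $x$, using the additive gate formula $d(y,z) = d(y,z') + d(z',z)$. One small sharpening: in the $X^+$ claim, the case $z' \in \bS_i''(y)$ is not controlled by any $s_j$ or $t_j$ but directly by the diametral pair (comparing $z$ against $v^+$ or $u^+$ and invoking that $\{u^+,v^+\}$ is diametral), so it is the ten vertices of $Y_1 \cup Y_3$, not just the eight of $Y_3$, that together witness $r_y' \ge \max_{w \in X^+} d(y,w)$.
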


\begin{proof}
    Let $X$ be a realizable sample for $\mathcal B$, and let $B_r(x)$ be a ball
    realizing $X$. Let $Y := \alpha(X)$.
    We suppose that $|X^+| \ge 2$, as otherwise we are trivially done. By the
    definition of the maps $\alpha$ and $\beta$, we deduce that $x$ belongs
    to the regions $\bR'$ and $\bR''$ defined in Steps 3 and 4. Also, $r_x' \le
    r \le r_x''$, and therefore, $x \in \bR_0$. This implies that $\bR_0 \ne
    \varnothing$ and the map $\beta(Y)$ is well-defined.
    Denote by $\bS_1$, $\bS_2$, $\bS_3$, and $\bS_4$ the strips defined by the
    gates $y_3'$, $y_4'$, $y_5'$, and $y_6'$ of the vertices of $Y_2$.
    Since $Y_2 = \alpha_2(X)$, all the vertices of $X$ have their
    gates in the union $\bigcup_{i=1}^4 \bS_i$ of the four strips.
    Now, suppose that $\beta(Y) = B_{r_y''}(y)$.

    \begin{claim} \label{claimX^+}
        For any $y \in \bR'$,  $X^+ \subseteq B_{r'_y}(y)$.
    \end{claim}

    \begin{proof}
        Pick any $z \in X^+$.  Assume, by symmetry, that its gate $z'$ in
        $\bI$ belongs to the strip $\bS_1$.

        \smallskip\noindent
        \textbf{Case 1.} $z' \in \bS_1''(y)$.

        In this case, $d(u^+,z) = d(u^+,z') + d(z',z)$ and $d(u^+,v^+) = d(u^+,z') +
        d(z',v^+)$. Also, $d(y,z) = d(y,z') + d(z',z)$ and $d(y,v^+) = d(y,z') +
        d(z',v^+)$. If $z \notin B_{r_y'}(y)$, then since  $v^+ \in B_{r_y'}(y)$, we
        conclude that $d(z',z) > d(z',v^+)$, yielding $d(u^+,z) > d(u^+,v^+)$, a
        contradiction with the choice of $\{u^+,v^+\}$ as a diametral pair of $X^+$.

        \smallskip\noindent
        \textbf{Case 2.} $z' \in \bS_1'(y)$.

        \smallskip\noindent
        \textbf{Subcase 2.1.} $z' \in \bS_1''(x)$.

        In this case, the vertex $y_8 \in \bS_1''(y)$ exists and $y_{8a}' \le
        z_a'$. Since $z \in \bS_1''(x)$, we have $y_a \ge z_a'$. On the other
        hand, by the construction of $\bR'$, $y_a \le y_{8a}'$. Consequently,
        $y_a \le y_{8a}' \le z_a' \le y_a$, showing that $y_a = y_{8a}' = z_a'$.
        Consequently,  $z' \in \bS_1''(y)$, which is the condition of Case 1.

        \smallskip\noindent
        \textbf{Subcase 2.2.} $z' \in \bS_1'(x)$.

        In this case, the vertex $y_7$ exists. By its definition, $y_7$ belongs
        to $\bS_1'(x)$, and by the construction of $\bR'$, $y_7$ belongs to
        $\bS_1'(y)$. If $y_a \le x_a$, then let $t$ be the vertex on the upper
        side of $\bR$ such that $t_a = y_a$. Analogously, if $y_a \ge x_a$, then
        let $t$ be the vertex on the upper side of $\bR$ such that $t_a = x_a$.
        Notice that, in both cases, we have $t \in I(x,z) \cap I(x,y_7)$ and $t \in
        I(y,z) \cap I(y,y_7)$. From the definition of the vertex $y_7$, we have
        $d(x,y_7) \ge d(x,z)$, yielding $d(t,y_7) \ge d(t,z)$. As a result, we
        obtain that $z \in B_{r_y'}(y)$.
    \end{proof}

    \begin{claim} \label{claimX^-}
        For any $y \in \bR''$, $X^- \cap B_{r''_y}(y) = \varnothing$.
    \end{claim}

    \begin{proof}
        Let $z$ be any vertex of $X^-$. Suppose, by symmetry, that
        its gate $z'$ belongs to $\bS_1'(x)$. This implies that the vertex
        $y_{15}$ exists. By the definition of $y_{15}$ and the construction of
        $\bR''$, we have $y_{15} \in \bS_1'(x) \cap \bS_1'(y)$. We distinguish
        two cases.

        \smallskip\noindent
        \textbf{Case 1.} $z' \in \bS_1'(y)$.

        If $y_a \le x_a$, then let $t$ be the vertex on the upper side of $\bR$
        such that $t_a = y_a$. Analogously, if $y_a \ge x_a$, then let $t$ be
        the vertex on the upper side of $\bR$ such that $t_a = x_a$. In both
        cases we have $t \in I(y,z) \cap I(y,y_{15})$ and $t \in
        I(x,z) \cap I(x,y_{15})$.
        Since $y_{15}$ is the vertex of $X^-$ closest to $x$ whose gate is in
        $\bS_1'(x)$, we must have $d(x,y_{15}) \le d(x,z)$. This implies that $d(t, y_{15})
        \le d(t,z)$. Consequently, $d(y,y_{15}) \le d(y,z)$, and thus, $z \notin
        B_{r_y''}(y)$.

        \smallskip\noindent
        \textbf{Case 2.} $z' \in \bS_1''(y)$.

        In this case, the vertices $z'$ and $y_{15}'$ are to the left of $x$, $y_{15}'$ is to
        the left of $y$, and $z'$ is to the right of $y$. Let $t$ be the vertex
        on the upper side of $\bR$ such that $t_a = z_a'$. Then, $t \in
        I(x,y_{15}) \cap I(x,z)$. From the choice of $y_{15}$ and this
        inclusion, we deduce that $d(t,y_{15}) \le d(t,z)$. Let $s$ be the
        vertex on the upper side of $\bR$ such that $s_a = y_a$. Notice that
        $t \in I(s,z)$ and $s \in I(t,y_{15})$. Since $d(t,y_{15}) \le
        d(t,z)$, from the previous inclusion, we obtain $d(s,y_{15}) \le d(t,
        y_{15}) \le d(t,z) \le d(s,z)$. Since $s \in I(y,y_{15}) \cap I(y,z)$,
        we conclude that $d(y,y_{15}) \le d(y,z)$.
    \end{proof}
Since $r''_y\ge r'_y$ for any vertex $y\in \bR_0=\bR'\cap \bR''$, by Claims~\ref{claimX^+} and
\ref{claimX^-}, we deduce that the ball $B_{r''}(y)$ realizes the sample $X$, concluding the proof.
\end{proof}

\section{Interval Graphs}\label{sec:interval}

\emph{Interval graphs} are the intersection graphs of sets of line segments. Interval graphs are classical in graph theory.
For any interval graph $G=(V,E)$, we construct proper LSCS 
of size 4 for $\B(G)$ and $\B_r(G)$.
We consider a representation of $G$ by a set of segments $J_v, v\in V$ of ${\mathbb R}$
with pairwise distinct ends. For any $u\in V$, we denote by $J_u=[s_u,e_u]$ its segment, where $s_u$ is the start of $J_u$, and $e_u$ is the end of $J_u$, {\it i.e.}, $s_u\leq e_u$.
We use the following property of  interval graphs:

\begin{lemma}\label{lem:intervalinclusion}
If $u,v\in B_r(x)$, $s_u,s_z<s_v$, and $e_u< e_v,e_z$, then $z\in B_r(x)$.
\end{lemma}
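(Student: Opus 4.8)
The plan is to reduce the statement to a one-dimensional monotonicity property of the BFS balls of $G$ read off from the interval representation: a vertex whose segment lies strictly to the right of $J_x$ belongs to $B_r(x)$ if and only if its left endpoint is small enough, and symmetrically on the left.

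First I would record an auxiliary fact. For every $k\ge 0$, the subgraph induced by $B_k(x)$ is connected (this holds in every graph), so the union $U_k:=\bigcup_{w\in B_k(x)}J_w$ of the corresponding segments is a single interval of $\mathbb R$. Writing $R_k:=\max\{e_w: w\in B_k(x)\}$ and $L_k:=\min\{s_w: w\in B_k(x)\}$, we have $U_k=[L_k,R_k]$, $R_0=e_x$, $L_0=s_x$, and $R_k$ is non-decreasing while $L_k$ is non-increasing in $k$.

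The key step is the following claim: if $s_w>e_x$ and $k\ge 1$, then $d(x,w)\le k$ if and only if $s_w\le R_{k-1}$; symmetrically, if $e_w<s_x$ and $k\ge 1$, then $d(x,w)\le k$ if and only if $e_w\ge L_{k-1}$. For the ``only if'' direction, note that $s_w>e_x$ forces $w\ne x$ and $J_w\cap J_x=\varnothing$, so on a shortest $(x,w)$-path $x=w_0,\dots,w_j=w$ we have $j=d(x,w)\ge 2$; since $w_{j-1}\sim w$, we get $s_w\le e_{w_{j-1}}\le R_{j-1}\le R_{k-1}$. For the ``if'' direction, from $L_{k-1}\le s_x\le e_x<s_w\le R_{k-1}$ we obtain $s_w\in U_{k-1}$, so $s_w\in J_c$ for some $c\in B_{k-1}(x)$; then $s_w\in J_c\cap J_w$, hence $c=w$ or $c\sim w$, so $d(x,w)\le(k-1)+1=k$. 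I expect this claim — and in particular the use of connectivity of $B_{k-1}(x)$ to guarantee that $U_{k-1}$ is an interval covering $s_w$ — to be the only delicate point; everything else is bookkeeping.

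Finally, I would conclude by a short case analysis on the position of $J_z$. Note first that $r\ge 1$, since $r=0$ would force $u=v=x$, contradicting $s_u<s_v$. If $J_z\cap J_x\ne\varnothing$, then $d(x,z)\le 1\le r$. Otherwise $J_z$ is disjoint from $J_x$, so either $s_z>e_x$ or $e_z<s_x$. If $s_z>e_x$, then $s_v>s_z>e_x$, so the claim applied to $v$ with $k=r$ gives $s_v\le R_{r-1}$, hence $s_z<s_v\le R_{r-1}$ and the claim applied to $z$ gives $d(x,z)\le r$. If $e_z<s_x$, then $e_u<e_z<s_x$, so the symmetric claim applied to $u$ gives $e_u\ge L_{r-1}$, hence $e_z>e_u\ge L_{r-1}$ and the symmetric claim applied to $z$ gives $d(x,z)\le r$. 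In all cases $z\in B_r(x)$, as desired.
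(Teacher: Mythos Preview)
Your proof is correct. The auxiliary claim is sound: connectivity of $G[B_{k-1}(x)]$ indeed forces $U_{k-1}$ to be a single interval, and from there the ``if'' and ``only if'' directions are clean; the final trichotomy on the position of $J_z$ relative to $J_x$ is exhaustive and each branch goes through.

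The route, however, is genuinely different from the paper's. The paper argues directly: it takes the union $S$ of the two shortest paths from $x$ to $u$ and from $x$ to $v$ with the endpoints $u,v$ removed, notes that $S\subseteq B_{r-1}(x)$ is connected so $J_S$ is an interval meeting both $J_u$ and $J_v$, and then shows by a two-case analysis (according to whether $J_u$ and $J_v$ intersect) that $J_z$ must meet $J_S$, hence meets some $J_c$ with $d(x,c)\le r-1$. Your argument replaces this ad-hoc path construction with a structural characterization of $B_r(x)$ in terms of the reaches $R_{r-1}$ and $L_{r-1}$, and then exploits monotonicity. What this buys you is a reusable lemma (the biconditional for $d(x,w)\le k$) that makes the final step almost mechanical; the price is a slightly longer write-up. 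Both proofs ultimately hinge on the same phenomenon, namely that a connected subfamily of intervals inside $B_{r-1}(x)$ has an interval union, but you apply it to all of $B_{r-1}(x)$ while the paper applies it to a carefully chosen subset.
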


\begin{proof}
Since $s_z<s_v$ and $e_u< e_z$, if $J_u$ and $J_v$ intersect, then $J_z$ covers the segment $[s_v,e_u]$, and otherwise, $J_z$ intersects $[e_u, s_v]$. Let $P$ be a path obtained from a shortest $(x,u)$-path of $G$  by removing $u$, and $Q$ be a path obtained from a shortest $(x,v)$-path by removing $v$.
The union $J_S$ of all segments of $S:=P\cup \{x\} \cup Q$ intersects $J_u$ and $J_v$. If $J_u$ and $J_v$ intersect, then $J_z$ covers $[s_v,e_u]$, and thus, intersects $J_S$. Otherwise, $J_S$ covers $[e_u, s_v]$, and $J_z$ intersects $[e_u,s_v]$. In both cases, $J_z$ and $J_S$ intersect, whence a segment of $S$ intersects $J_z$. Since all segments of $S$ are at distance at most $r-1$ from $x$, $z\in B_r(x)$.
\end{proof}

Let $X$ be a realizable sample for $\B(G)$. A \emph{farthest pair} of $X^+$ is a pair $\{ u^+,v^+\}$ such that $u^+$ is the vertex in $X^+$ whose segment $J_{u^+}$ ends farthest to the left, and $v^+$ is the vertex in $X^+$ whose segment $J_{v^+}$ begins farthest to the right, {\it i.e.},
for any $w\in X^+$, we have $e_{u^+}<e_{w}$ and $s_{w}<s_{v^+}$. If $u^+\neq v^+$, then $[e_{u^+},s_{v^+}]\cap J_w\neq \varnothing$ for any $w\in X^+$.  If
$u^+=v^+$, then  $J_{u^+}\subseteq J_w$ for any $w\in X^+$. 
A vertex $p^-$ of $X^-$ is a \emph{left-bounder} if there is a ball $B_r(x)$ realizing $X$ such that $e_{p^-}<s_x$ and, for all $p\in X^-$ with $e_p<s_x$, it holds that $e_p\le e_{p^-}$. Analogously, a vertex $q^-$ of $X^-$ is a \emph{right-bounder} if there is a ball $B_r(x)$ realizing $X$ such that $e_x<s_{q^-}$ and, for all $q\in X^-$ with $e_x<s_q$, it holds that $s_{q^-}\le s_q$. If $p^-$ is a left-bounder  and $q^-$ is a right-bounder, then $\{p^-,q^-\}$ is a \emph{bounding pair} of $X^-$. The farthest pair $\{u^+,v^+\}$ of $X^+$ and the bounding pair $\{p^-,q^-\}$ of $X^-$ have the following properties:


\begin{lemma}\label{interval_graph_plus}
If $u^+,v^+\in B_r(x)$ and $r>0$, then $X^+\subseteq B_r(x)$.
\end{lemma}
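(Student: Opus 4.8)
The plan is to deduce the statement from Lemma~\ref{lem:intervalinclusion}, treating the degenerate case $u^+=v^+$ separately; the hypothesis $r>0$ is used only there. Fix an arbitrary $w\in X^+$. If $w\in\{u^+,v^+\}$ there is nothing to prove, so assume $w\notin\{u^+,v^+\}$.

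\emph{Case $u^+\ne v^+$.} By the definition of the farthest pair of $X^+$, together with the assumption that all $2|V|$ segment endpoints are pairwise distinct, every vertex of $X^+$ other than $u^+$ has its right end strictly larger than $e_{u^+}$, and every vertex of $X^+$ other than $v^+$ has its left end strictly smaller than $s_{v^+}$. Applying this to $u^+$, $v^+$, and $w$ yields $e_{u^+}<e_{v^+}$, $e_{u^+}<e_w$, $s_{u^+}<s_{v^+}$, and $s_w<s_{v^+}$. Since $u^+,v^+\in B_r(x)$ by hypothesis, the triple $(u^+,v^+,w)$, in the roles of $(u,v,z)$ of Lemma~\ref{lem:intervalinclusion}, satisfies every hypothesis of that lemma, so $w\in B_r(x)$.

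\emph{Case $u^+=v^+$.} Here the farthest pair condition forces $s_w\le s_{u^+}\le e_{u^+}\le e_w$ for every $w\in X^+$, i.e. $J_{u^+}\subseteq J_w$ (as already noted after the definition of farthest pair). Let $y$ be the neighbour of $u^+$ on a shortest $(x,u^+)$-path, with the convention $y:=u^+$ when $x=u^+$; since $d(x,u^+)\le r$ and $r>0$, in both situations $d(x,y)\le r-1$. Because $J_y$ meets $J_{u^+}\subseteq J_w$, either $y=w$ or $y$ is adjacent to $w$, hence $d(x,w)\le d(x,y)+1\le r$, i.e. $w\in B_r(x)$.

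In both cases $w\in B_r(x)$, and since $w\in X^+$ was arbitrary, $X^+\subseteq B_r(x)$. The only point requiring care is the bookkeeping of the strict endpoint inequalities, so that the hypotheses of Lemma~\ref{lem:intervalinclusion} are met verbatim; the assumption $r>0$ is genuinely needed only in the second case, since for $u^+=v^+$ and $r=0$ the ball $B_0(x)=\{x\}$ need not contain all of $X^+$.
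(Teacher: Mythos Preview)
Your proof is correct and follows essentially the same approach as the paper: apply Lemma~\ref{lem:intervalinclusion} in the case $u^+\ne v^+$, and use the inclusion $J_{u^+}\subseteq J_w$ to handle the degenerate case $u^+=v^+$. Your treatment of the second case is slightly more explicit (constructing the vertex $y$ on a shortest $(x,u^+)$-path), whereas the paper simply observes that any segment meeting $J_{u^+}$ also meets $J_w$, but the argument is the same.
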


\begin{proof} Pick any $w\in X^+\setminus \{u^+,v^+\}$. By the definition of $u^+$ and $v^+$, we have $s_w<s_{v^+}$ and $e_{u^+}<e_w$. If $u^+\neq v^+$, then $s_{u^+}<s_{v^+}$ and $e_{u^+}<e_{v^+}$, and so, $s_{u^+},s_w<s_{v^+}$ and $e_{u^+}<e_{v^+},e_w$. By Lemma~\ref{lem:intervalinclusion}, $w\in B_r(x)$. Now, let $u^+=v^+$.
Then, $J_{u^+}\subset J_w$, and thus, any segment intersecting $J_{u^+}$ also intersects $J_w$. Consequently, $w$ is included in any ball of $G$ of radius $r>0$ containing $u^+$, and,
in particular, $w\in B_r(x)$.  
\end{proof}

\begin{lemma}\label{interval_graph_minus}
If $e_{p^-}<s_x$ and $p^-\notin B_r(x)$, then, for all $z\in X^-$ with $e_z<e_{p^-}$, $z\notin B_r(x)$. Also, if $e_x<s_{q^-}$ and $q^-\notin B_r(x)$, then, for all $w\in X^-$ with $s_{q^-}<s_w$, $w\notin B_r(x)$.
\end{lemma}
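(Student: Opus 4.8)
The plan is to obtain both halves of the lemma as contrapositives of Lemma~\ref{lem:intervalinclusion}, using only the trivial fact that the center $x$ always belongs to $B_r(x)$ together with the endpoint inequalities of the fixed segment representation. No new geometric idea is needed; the whole argument is a matter of correctly matching up the variable roles.

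\textbf{First assertion.} Suppose $e_{p^-}<s_x$ and $p^-\notin B_r(x)$, and assume for contradiction that some $z\in X^-$ with $e_z<e_{p^-}$ satisfies $z\in B_r(x)$. I would apply Lemma~\ref{lem:intervalinclusion} with its vertex ``$u$'' taken to be $z$, its vertex ``$v$'' taken to be $x$, and its interior vertex ``$z$'' taken to be $p^-$. The hypotheses hold: from $e_z<e_{p^-}<s_x\le e_x$ we get both $e_z<e_x$ and $e_z<e_{p^-}$, which is the condition $e_u<e_v,e_z$; and from $s_z\le e_z<e_{p^-}<s_x$ and $s_{p^-}\le e_{p^-}<s_x$ we get $s_z<s_x$ and $s_{p^-}<s_x$, which is the condition $s_u,s_z<s_v$. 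Lemma~\ref{lem:intervalinclusion} then yields $p^-\in B_r(x)$, contradicting the hypothesis; hence no such $z$ can lie in $B_r(x)$.

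\textbf{Second assertion.} I would first record the ``mirror image'' of Lemma~\ref{lem:intervalinclusion} obtained by reflecting the representation through the origin of ${\mathbb R}$ (this sends $[s_a,e_a]$ to $[-e_a,-s_a]$ and leaves $G$, its distances, and its balls unchanged): if $u,v\in B_r(x)$, $e_u,e_z>e_v$, and $s_v,s_z<s_u$, then $z\in B_r(x)$. Now suppose $e_x<s_{q^-}$ and $q^-\notin B_r(x)$, and assume for contradiction that some $w\in X^-$ with $s_{q^-}<s_w$ satisfies $w\in B_r(x)$. Apply the mirrored statement with ``$u$''$:=w$, ``$v$''$:=x$, and interior vertex ``$z$''$:=q^-$: from $s_x\le e_x<s_{q^-}<s_w$ we get $s_x<s_w$ and $s_{q^-}<s_w$, and from $e_w\ge s_w>s_{q^-}>e_x$ and $e_{q^-}\ge s_{q^-}>e_x$ we get $e_w>e_x$ and $e_{q^-}>e_x$. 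Thus $q^-\in B_r(x)$, a contradiction, so no such $w$ lies in $B_r(x)$.

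\textbf{Expected difficulty.} There is essentially no conceptual obstacle; the delicate points are purely organizational: (i) keeping straight which vertices of the lemma being proved play the roles of $u$, $v$, and the interior vertex in Lemma~\ref{lem:intervalinclusion}, and (ii) writing out the reflection carefully enough that the asymmetric-looking hypotheses of Lemma~\ref{lem:intervalinclusion} line up correctly in the second assertion. The degenerate case $r=0$ requires no separate treatment, as the contradiction arguments above apply verbatim (for instance $e_z<e_x$ already forces $z\ne x$).
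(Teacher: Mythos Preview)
Your proof is correct and follows the same contrapositive strategy as the paper, applying Lemma~\ref{lem:intervalinclusion} with $z$ (respectively $w$) and $x$ as the two vertices known to lie in $B_r(x)$. The only cosmetic difference is in the second assertion: you pass through a reflected version of Lemma~\ref{lem:intervalinclusion}, whereas the paper simply applies the lemma directly with the role assignment $u:=x$, $v:=w$, and interior vertex $q^-$ (the hypotheses $s_x,s_{q^-}<s_w$ and $e_x<e_w,e_{q^-}$ follow from $s_x\le e_x<s_{q^-}<s_w\le e_w$ and $e_x<s_{q^-}\le e_{q^-}$), so the reflection step is unnecessary.
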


\begin{proof}
For the first statement, towards a contradiction, suppose that $e_{p^-}<s_x$ and $p^-\notin B_r(x)$, but there exists $z\in X^-$ such that $e_z<e_{p^-}$ and $z\in B_r(x)$. Then, $s_z,s_{p^-}<s_x$ since $s_z\le e_z<e_{p^-}<s_x$, and $e_z<e_x,e_{p^-}$ as $e_z<e_{p^-}<s_x\leq e_x$. By Lemma~\ref{lem:intervalinclusion}, $p^-\in B_r(x)$, a contradiction. 
For the second statement, suppose by way of contradiction that $e_x<s_{q^-}$ and $q^-\notin B_r(x)$, but there exists $w\in X^-$ such that $s_{q^-}<s_w$ and $w\in B_r(x)$. Then, $s_x,s_{q^-}<s_w$ since $s_x\leq e_x<s_{q^-}<s_w$, and $e_x<e_w,e_{q^-}$ as $e_x<s_{q^-}<s_w\leq e_w$. By Lemma~\ref{lem:intervalinclusion}, $q^-\in B_r(x)$, a contradiction.
\end{proof}

\subsection*{The compressor $\alpha(X)$} The compressor $\alpha(X)$ of $X$ is a vector with four coordinates grouped into two pairs: $\alpha(X):=(\alpha_1(X),\alpha_2(X))$. The pair $\alpha_1(X)$ is a farthest pair $\{u^+,v^+\}$ of $X^+$ and the pair $\alpha_2(X)$ is a bounding pair $\{p^-,q^-\}$ of $X^-$. We use the symbol $*$ to indicate that the respective coordinate of $\alpha(X)$ is empty. We define $\alpha(X)$ as follows:
\begin{enumerate}
\item[(C1)] if $X^+=\varnothing$, then set $\alpha_1(X) = \alpha_2(X) :=(*,*)$;
\item[(C2)] if $X^+=\{x\}$, then set $\alpha_1(X):=(x,*)$ and $\alpha_2(X) := (*,*)$;
\item[(C3)] if $|X^+|\geq 2$, then set $\alpha_1(X):=(u^+,v^+)$ if $u^+\neq v^+$ and $\alpha_1(X):=(*,v^+)$ if $u^+=v^+$; 
	\begin{enumerate}
    \item[(C3i)] if $X^-=\varnothing$, then set $\alpha_2(X):=(*,*)$;
    \item[(C3ii)] if there exists a bounding pair of $X^-$, then set $\alpha_2(X):=(p^-,q^-)$;
    \item[(C3iii)] if there exists a left-bounder, but not a right-bounder of $X^-$, then set $\alpha_2(X):=(p^-,*)$;
    \item[(C3iv)] if there exists a right-bounder, but not a left-bounder of $X^-$, then set $\alpha_2(X):=(*,q^-)$.
    \end{enumerate}
\end{enumerate}

\subsection*{The reconstructor $\beta(Y)$} The reconstructor $\beta$ takes any signed vector $Y$ on four coordinates grouped into two pairs $Y_1$ and $Y_2$ from $\Ima(\alpha)$, and returns a ball $\beta(Y)$ defined as follows:
\begin{enumerate}
\item[(R1)] if $Y_1 = Y_2 = (*,*) $, then $\beta(Y)$ is the empty ball;
\item[(R2)] if $Y_1 =(y_1,*)$ and $Y_2 = (*,*)$, then $\beta(Y)$ is the ball of radius 0 centered at $y_1$;
\item[(R3)] if $Y_1=(y_1,y_2)$ or $Y_1=(*,y_2)$, then $\beta(Y)$ is any ball $B_r(x)$ of radius $r\ge 1$
containing $Y_1$, not intersecting $Y_2$, and such that:
\begin{enumerate}
\item[(R3i)] if $Y_2=(*,*)$, then no condition; 
\item[(R3ii)] if $Y_2=(y_3,y_4)$, then 
    $e_{y_3}<s_x$ and $e_x<s_{y_4}$;
\item[(R3iii)] if $Y_2=(y_3,*)$, then 
$e_{y_3}<s_x$;
\item[(R3iv)] if $Y_2=(*,y_4)$, then 
$e_x<s_{y_4}$.
\end{enumerate}
\end{enumerate}


Now, let $X$ be a realizable sample for $\B_r(G)$. If $|X^+|\geq 2$ or $r\ge 1$, then we define $\alpha$ and $\beta$ as above, since, in these cases, we do not specify the radius of the ball realizing $X$ in $\alpha$, nor the radius of the ball returned by $\beta$. So, we can exhibit a proper LSCS of size $4$ for $\B_r(G)$ if we can deal with the case $|X^+|\leq 1$. 
The only difference is that if $|X^+|\le 1$, then we set $\alpha_2(X)$ as in Case (C$3$), but we set $\alpha_1(X):=(*,*)$ when $X^+=\varnothing$, and $\alpha_1(X):=(*,x)$ when $X^+=\{x\}$. Now, let $r=0$. If $|X^+|=0$ and there is a ball $B_0(y)$ such that $y\notin  X^-$ and $e_y<e_z$ for any $z\in V, z\ne y$, then $\alpha(X):=((*,*),(*,*))$. Otherwise, if $|X^+|=0$, there is a ball $B_0(y)$ such that $y\notin X^-$, $w'\in X^-$, $e_{w'}<e_{y}$, and, for all $w\in V$ with $e_w<e_y$, we have $e_w\leq e_{w'}$. In this case, $\alpha(X):=((*,*),(w',*))$. If $X^+=\{x\}$, set $\alpha(X):=((x,*),(*,*))$. 
Given any signed vector $Y$ on four coordinates, $\beta$ returns a ball $\beta(Y)$ defined as follows.
If $Y=((*,*),(*,*))$, then $\beta(Y)$ is the ball $B_0(x)$ such that $e_x< e_z$ for any $z\in V\setminus \{ x\}$. If $Y=((*,*),(y_3,*))$, then $\beta(Y)$ is the ball $B_0(x)$ such that $e_{y_3}<e_x$, and, for all $w\in V$ with $e_w<e_x$, it holds that $e_w\leq e_{y_3}$. Lastly, if $Y=((x,*),(*,*))$, then $\beta(Y)$ is the ball $B_0(x)$. 

\begin{proposition} \label{LSCS-interval-graphs}
    For any interval graph $G=(V,E)$, the pair $(\alpha,\beta)$ of vectors defines
    a proper labeled sample compression scheme of size 4 for $\B(G)$ and $\B_r(G)$.
\end{proposition}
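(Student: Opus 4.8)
The plan is to verify the two defining properties of $(\alpha,\beta)$: that $|\underline{\alpha}(X)|\le 4$ for every realizable sample $X$, and that $\beta(\alpha(X))$ is a ball of $G$ consistent with $X$. The bound on the support is immediate, since $\alpha(X)$ is in every case a vector on four coordinates (and in the $\mathcal B_r(G)$-variant the support still never exceeds four). So the real content is consistency, which I would prove by running through the cases in the definition of $\alpha$, using the obvious observation that case $(\mathrm{C}k)$ of the compressor triggers exactly case $(\mathrm{R}k)$ of the reconstructor (the specified coordinates of $Y=\alpha(X)$ match the pattern of $(\mathrm{R}k)$). Cases $(\mathrm{C}1)$ and $(\mathrm{C}2)$ are immediate: if $X^+=\varnothing$ then $\beta$ returns the empty ball, and if $X^+=\{x\}$ then $\beta$ returns $B_0(x)=\{x\}$, which is consistent because $x\notin X^-$.

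The substance is case $(\mathrm{C}3)$, that is $|X^+|\ge 2$, where $\alpha_1(X)$ records a farthest pair $\{u^+,v^+\}$ and $\alpha_2(X)$ records the retained bounders. In every subcase $(\mathrm{R}3\mathrm{i})$–$(\mathrm{R}3\mathrm{iv})$ the returned ball $B_r(x)$ has $r\ge 1$ and contains $Y_1\subseteq\{u^+,v^+\}$, so by Lemma~\ref{interval_graph_plus} it contains $X^+$; moreover such a ball exists because the ball realizing $X$ that witnesses $p^-$ being a left-bounder (resp.\ $q^-$ being a right-bounder) already has the prescribed endpoint relations with the retained vertices (and in $(\mathrm{C}3\mathrm{i})$ any radius-$\ge1$ ball realizing $X$ works). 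It remains to check $X^-\cap B_r(x)=\varnothing$. If $\beta$ returns $B_r(x)$ with $e_{p^-}<s_x$, then $p^-\notin B_r(x)$, so by the first part of Lemma~\ref{interval_graph_minus} no $z\in X^-$ with $e_z<e_{p^-}$ lies in $B_r(x)$; symmetrically, if $e_x<s_{q^-}$ then no $z\in X^-$ with $s_{q^-}<s_z$ lies in $B_r(x)$. Hence the whole proof reduces to the following structural claim: \textbf{Key Claim.} In case $(\mathrm{C}3)$, every $z\in X^-$ satisfies $e_z\le e_{p^-}$ or $s_z\ge s_{q^-}$ (where only the retained bounder(s) are referenced); concretely, in $(\mathrm{C}3\mathrm{ii})$ one of the two inequalities always holds, in $(\mathrm{C}3\mathrm{iii})$ always $e_z\le e_{p^-}$, and in $(\mathrm{C}3\mathrm{iv})$ always $s_z\ge s_{q^-}$.

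To prove the Key Claim in the $(\mathrm{C}3\mathrm{ii})$ situation, I would argue by contradiction: let $z\in X^-$ have $e_z>e_{p^-}$ and $s_z<s_{q^-}$, and take the balls $B'=B_{r'}(x')$ and $B''=B_{r''}(x'')$ realizing $X$ that witness, respectively, that $p^-$ is a left-bounder and $q^-$ a right-bounder; since $|X^+|\ge 2$, both radii are $\ge 1$. Describing each ball as the set of vertices whose segment meets the reach interval of its centre, the hypotheses $z\notin B'$, $e_z>e_{p^-}$ force $J_z$ to lie strictly to the right of $B'$'s reach interval, while $z\notin B''$, $s_z<s_{q^-}$ force $J_z$ strictly to the left of $B''$'s reach interval; thus these two intervals are separated with $J_z$ strictly between them. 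Since every $w\in X^+$ meets both intervals, we get $s_w<s_z$ and $e_z<e_w$, i.e.\ $J_z\subsetneq J_w$ for all $w\in X^+$. Consequently $\bigcap_{w\in X^+}J_w\supseteq J_z\ne\varnothing$, so $X^+\cup\{z\}$ is a clique and $z$ is adjacent to every vertex of $X^+$; taking now any ball $B_\rho(c)$ realizing $X$ (with $\rho\ge1$, every $w\in X^+$ at distance exactly $\rho$ from $c$, and $z$ at distance $\rho+1$) and doing the distance bookkeeping yields a contradiction with $z\notin B_\rho(c)$. Case $(\mathrm{C}3\mathrm{iii})$ is the same argument with only the left-bounder witness, using that no right-bounder exists to conclude that every $z\in X^-$ lies to the left of that witness's reach interval, hence $e_z\le e_{p^-}$; $(\mathrm{C}3\mathrm{iv})$ is symmetric. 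Granting the Key Claim, Lemmas~\ref{interval_graph_plus} and~\ref{interval_graph_minus} together show $\beta(\alpha(X))$ realizes $X$ in all cases, so $(\alpha,\beta)$ is a proper LSCS of size $4$ for $\mathcal B(G)$. For $\mathcal B_r(G)$: when $|X^+|\ge 2$ or $r\ge1$, neither $\alpha$ nor $\beta$ mentions the radius, and a radius-$r$ ball realizing $X$ (which exists since $X\in\downarrow\mathcal B_r(G)$) satisfies every syntactic constraint imposed on $\beta$'s output, so $\beta$ may return such a ball and properness is kept; the leftover case $r=0$, $|X^+|\le1$ is a finite check on the degenerate samples via the fixed ordering of $V$ by right endpoints, exactly as set up just before the proposition.

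The main obstacle I expect is the Key Claim, and specifically the last step of its proof: turning ``$J_z$ is nested strictly inside every segment of $X^+$'' into an actual contradiction. The deduction that $X^+\cup\{z\}$ is a clique and $z$ is universally adjacent to $X^+$ is easy, but excluding a ball that simultaneously contains the clique $X^+$ and omits the adjacent vertex $z$ requires the distance argument (all of $X^+$ on the sphere $S_\rho(c)$, with $z$ on $S_{\rho+1}(c)$), and this is the point where one has to be careful — in particular to make sure the edge case $p^-=q^-$ of a ``bounding pair'' does not actually occur under the stated hypotheses.
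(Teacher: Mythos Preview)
Your plan matches the paper's proof: case $(\mathrm{C}k)$ triggers $(\mathrm{R}k)$; $k=1,2$ are immediate; for $k=3$, Lemma~\ref{interval_graph_plus} handles $X^+$, and $X^-\cap\beta(\alpha(X))=\varnothing$ reduces to your Key Claim together with Lemma~\ref{interval_graph_minus}. The paper's proof stops there, saying only that this ``follows from the definition of the corresponding subcase of case (C3) and Lemma~\ref{interval_graph_minus}''.

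The gap you anticipated is real. From $J_z\subsetneq J_w$ for all $w\in X^+$ one \emph{cannot} conclude a contradiction with realizability: take $J_c=[0,1]$, $J_{u^+}=[0.5,10]$, $J_{v^+}=[0.7,12]$, $J_z=[5,6]$; then $B_1(c)$ contains $u^+,v^+$ and omits $z$ even though $J_z\subset J_{u^+}\cap J_{v^+}$, so no amount of ``distance bookkeeping'' closes your argument. The intended proof of the Key Claim is much shorter and avoids the nesting detour entirely: read the compressor as choosing the bounding pair $(p^-,q^-)$ relative to a \emph{single} realizing ball $B_\rho(c)$ (this is what the paper's ``definition of the corresponding subcase'' is pointing to). Since $\rho\ge1$, every $z\in X^-$ has $J_z\cap J_c=\varnothing$ (otherwise $z\sim c$ and $z\in B_\rho(c)$), so either $e_z<s_c$, whence $e_z\le e_{p^-}$ by the maximality defining $p^-$, or $s_z>e_c$, whence $s_z\ge s_{q^-}$. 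Subcases $(\mathrm{C}3\mathrm{iii})$--$(\mathrm{C}3\mathrm{iv})$ are the same trichotomy applied to the ball witnessing the sole bounder. If instead you let $p^-$ and $q^-$ be witnessed by two \emph{different} realizing balls (which the paper's wording literally allows), the Key Claim can actually fail and the reconstructor may return a ball meeting $X^-$; so the compressor must be understood as fixing a single witness ball, and once it is, the claim is a one-liner rather than the reach-interval argument you sketched.
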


\begin{proof}
Let $X$ be a realizable sample for $\B(G)$ (the case of $\B_r(G)$ is similar), $Y=\alpha (X)$, and $B_r(x)=\beta(Y)$. The cases (R$k$) and their subcases in the definition of $\beta$ correspond to the cases (C$k$) and their subcases in the definition of $\alpha$. The correctness is trivial if $k=1,2$. 
Now, let $k=3$. Since $Y_1$ always contains a farthest pair of $X^+$ and the returned ball $B_r(x)$ contains $Y_1$ and $r\ge 1$, by Lemma~\ref{interval_graph_plus}, $X^+\subseteq B_r(x)$. Furthermore, in Case (C$3$), any ball realizing $X$ must have a radius $r\geq 1$ since $|X^+|\geq 2$. 
Now, we prove that $X^-\cap B_r(x)=\varnothing$. This is trivial in subcase (R$3$i) since $X^-=\varnothing$.  In the remaining subcases of (R$3$),
$X^-\cap B_r(x)=\varnothing$ follows from the definition of the corresponding subcase of case (C$3$) and Lemma~\ref{interval_graph_minus}. 
\end{proof}

\section{Split graphs}\label{sec:split}

In this section, we study proper labeled sample compression schemes for split graphs. A \emph{split graph} is a graph $G=(V,E)$ whose set of vertices $V$ can be partitioned into a clique $S$ and an independent set $I$.
First, recall that the sample compression conjecture for ball of radius 1 of
split graphs is as hard as the general sample compression conjecture. Indeed,
it is well known that, to any concept class ${\mathcal C}$ of a set $S$, one can associate
a split graph $G$ having $S$ as a clique, and a vertex $v_C$ for each concept $C$ of $\mathcal C$.
Then, each ball $B_1(v_C)$ coincides with $C\cup \{ v_C\}$. One can easily check that the VC-dimension
of the family of balls $\{ B_1(v_C): C\in {\mathcal C}\}$ is the same as the VC-dimension of $\mathcal C$. A
sample compression scheme for  $\mathcal C$ corresponds to a sample compression scheme for $\{ B_1(v_C): C\in {\mathcal C}\}$, where
each sample $X$ has its support in $S$.

Let $G=(V,E)$ be a split graph and let
$S=\{w_1,\ldots,w_{\omega(G)}\}$ be the clique, 
where $\omega(G)$ is the clique number of $G$ (the number of vertices in a maximum clique
in $G$). For the family $\B(G)$, we construct proper labeled sample
compression schemes of size~$k=\max\{2,\omega(G)\}$ with the
assumption that $\alpha(X)$ is a vector on $k$ coordinates. Note that
the family of balls of a split graph has VC-dimension at most $k$.
We first define the compressor $\alpha$.
\subsection*{The compressor $\alpha(X)$} We use $*$ to indicate that the respective coordinate of $\alpha(X)$ is empty. For any realizable sample $X$ for $\B(G)$,

\begin{enumerate}
\item [(C1)] if $X^+=\varnothing$, then set $\alpha(X):=(*,\ldots,*)$;
\item [(C2)] otherwise, if $X^-=\varnothing$, then set
  $\alpha(X):=(u,*,\ldots,*)$, where $u$ is some vertex in $X^+$;
\item [(C3)] otherwise, if $X^+=\{u\}$, then set
  $\alpha(X):=(*,u,*,\ldots,*)$;
\item [(C4)] otherwise, if there exists $u\in X^+\cap I$ such that
  $B_1(u)\cap X=X^+$, then set $\alpha(X):=(u,v,*,\ldots,*)$, where
  $v\in X^+\cap N(u)$;
\item [(C5)] otherwise, $k=\omega(G)$, and, if there exists
  $u\in I\setminus X^+$ such that $B_1(u)\cap X=X^+$, then, for each
  $1\leq i\leq k$, set the $i^{th}$ coordinate of $\alpha(X)$ to $w_i$
  if $w_i\in X$, and $*$ otherwise;
\item [(C6)] otherwise, if there exists $u\in S$ such that
  $B_1(u)\cap X=X^+$, then, for each $w_i\in S$,
  \begin{enumerate}
  \item if there exists $y\in N(w_i)\cap X^-$ such that
    $y\notin \alpha(X)$, then set $y$ as the $i^{th}$ coordinate of
    $\alpha(X)$;
  \item otherwise, if there exists $z\in X^+\setminus N(w_i)$ and
    $z\notin \alpha(X)$, then set $z$ as the $i^{th}$ coordinate of
    $\alpha(X)$;
  \item otherwise, set $*$ as the $i^{th}$ coordinate of $\alpha(X)$.
  \end{enumerate}
\item [(C7)] otherwise, there exists $u\in I$ such that
  $B_2(u)\cap X=X^+$, and, for each $w_i\in S$,
  \begin{enumerate}
  \item if there exists $y\in N(w_i)\cap X^-$ such that
    $y\notin \alpha(X)$, then set $y$ as the $i^{th}$ coordinate of
    $\alpha(X)$;
  \item otherwise, if $w_i\in N(u)$ and there exists
    $z\in X^+\cap N(w_i)$ such that $z\in I$ and $z\notin \alpha(X)$,
    then set $z$ as the $i^{th}$ coordinate of $\alpha(X)$;
  \item otherwise, set $*$ as the $i^{th}$ coordinate of $\alpha(X)$.
  \end{enumerate}
\end{enumerate}

\subsection*{The reconstructor $\beta(Y)$} The reconstructor $\beta$ takes any sign vector $Y$ on $k$ coordinates and returns a ball defined in the following way:

\begin{enumerate}
\item [(R1)] if $Y=(*,\ldots,*)$, then $\beta(Y)$ is the empty ball;
\item [(R2)] if $Y=(u,*,\ldots,*)$ for some vertex $u\in Y^+$, then
  $\beta(Y)$ is any ball of $G$ covering $G$;
\item [(R3)] if $Y=(*,u,*,\ldots,*)$ for some vertex $u\in Y^+$, then
  $\beta(Y)=B_0(u)$;
\item [(R4)] if $Y=(u,v,*,\ldots,*)$ for two vertices $u,v\in Y^+$
  such that $u\in I$ and $v\in S$, then $\beta(Y)=B_1(u)$;
\item [(R5)] if there exist two vertices $x,y\in Y^+$ such that
  $x,y\in S$, then $\beta(Y)$ is any ball of radius~$1$ centered at a
  vertex $u\in I$ such that $B_1(u)\cap Y=Y^+$;
\item [(R6)] if there exists $x\in Y^-$ and, for all $y_j\in Y^+$ such
  that $y_j$ is the $j^{th}$ coordinate of $Y$, $y_j\notin N(w_j)$,
  then $\beta(Y)$ is any ball of radius~$1$ centered at a vertex
  $w_t\in S$ such that the $t^{th}$ coordinate of $Y$ is~$*$ and
  $B_1(w_t)\cap Y=Y^+$;
\item [(R7)] otherwise, $\beta(Y)$ is any ball $B$ of radius~$2$
  centered at a vertex $u\in I$ such that $B\cap Y^-=\varnothing$,
  and, for all $y_t\in Y^+$ such that $y_t$ is the $t^{th}$ coordinate
  of $Y$, $w_t\in N(u)$.
\end{enumerate}

\begin{proposition}
  For any split graph $G$, the pair $(\alpha,\beta)$ of vectors
  defines a proper labeled sample compression scheme of
  size~$\omega(G)$ for $\B(G)$.
\end{proposition}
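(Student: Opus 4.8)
The plan is to verify that for every realizable sample $X$ for $\B(G)$, the ball $B=\beta(\alpha(X))$ is consistent with $X$, i.e.\ $X^+\subseteq B$ and $X^-\cap B=\varnothing$. First I would observe that the seven cases (C1)--(C7) in the definition of $\alpha$ are exhaustive and mutually exclusive: (C1)--(C3) are the degenerate cases $|X^+|\le 1$ or $X^-=\varnothing$, and for $|X^+|\ge 2$ and $X^-\ne\varnothing$ the remaining cases are determined by the smallest radius of a ball realizing $X$ and whether its center lies in $I$ or $S$. Indeed, if $B_r(z)$ realizes $X$ with $z\in S$, then $B_1(w)$ realizes $X$ for any $w\in S$ with $X^+\subseteq N[w]$ (all such balls contain $S\cup\bigcup_{w_i\in S}N(w_i)$... more precisely $B_1(w)\cap X = X^+$ must be checked), and if $z\in I$ then since $\B(G)$ has the property that $B_r(z)=B_2(z)$ whenever $r\ge 2$ for $z\in I$ in a split graph (any vertex at distance $\ge 2$ from $z\in I$ is dominated), one of the cases (C4), (C5), (C6), (C7) applies. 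So I would first record this case-exhaustiveness lemma, also noting the key structural fact: in a split graph, $\diam(G)\le 3$, a ball of radius $\ge 2$ centered in $I$ equals $B_2$ of that vertex, and a ball centered in $S$ has radius $\le 2$ and in the relevant range can be taken of radius $1$.

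Next I would dispatch the easy cases. For (C1)/(R1): $X^+=\varnothing$ and the empty ball works. For (C2)/(R2): $X^-=\varnothing$, and any covering ball contains $X^+$ and misses $X^-=\varnothing$. For (C3)/(R3): $X^+=\{u\}$, and since $X$ is realizable there is a ball containing $u$ and avoiding $X^-$; but we return $B_0(u)$, which contains $X^+=\{u\}$; it avoids $X^-$ because $u\notin X^-$ (as $X$ is a sample, $\underline X$ is a set of distinctly-signed vertices). For (C4)/(R4): the compressor recorded $u\in X^+\cap I$ with $B_1(u)\cap X=X^+$ together with a neighbor $v\in X^+\cap N(u)$ to certify $u\in I$ (the reconstructor can tell $u\in I$, $v\in S$ from the graph's fixed partition, actually the extra coordinate $v$ pins down which of the two is the center); then $\beta(Y)=B_1(u)$ and $B_1(u)\cap X=X^+$ by construction. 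For (C5)/(R5): we recorded $S\cap X$; the reconstructor looks for \emph{some} $u\in I$ with $B_1(u)\cap Y=Y^+$, and such a $u$ exists (the original one), so $\beta(Y)$ realizes $X$ — here one must check that $B_1(u)\cap Y = Y^+$ already forces $B_1(u)\cap X=X^+$, which holds because $X^+\subseteq N(u)\subseteq N[u]=B_1(u)$ by (C5)'s hypothesis and because any $w_i\in X^-$ is among $Y$ so is excluded, while any $z\in X^-\setminus S\subseteq I$ is non-adjacent to $u\in I$ hence not in $B_1(u)$.

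The substantive cases are (C6)/(R6) and (C7)/(R7), and I expect (C7)/(R7) to be the main obstacle. In (C6) the target ball has radius $1$ and center in $S$: for each $w_i\in S$ the compressor stores a \emph{witness} coordinate — either a private $X^-$-neighbor $y$ of $w_i$ (showing $w_i$ cannot be the center, since then $y\in B_1(w_i)\cap X^-$), or a $z\in X^+\setminus N(w_i)$ (showing $w_i$ cannot be the center, since then $z\notin B_1(w_i)$), or $*$. The reconstructor (R6) picks any $w_t\in S$ whose coordinate is $*$ with $B_1(w_t)\cap Y=Y^+$; I need to argue such $w_t$ exists (the true center $u\in S$ has coordinate $*$: neither witness can have been assigned to it because $u$ realizes $X$) and that $B_1(w_t)\cap Y=Y^+$ upgrades to $B_1(w_t)\cap X=X^+$. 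The latter is the delicate point: the stored coordinates must collectively ``see'' every vertex of $X^+\cup X^-$, so that $B_1(w_t)$ agreeing with the sample on $Y$ forces agreement on all of $X$ — this is where the budget of $\omega(G)$ coordinates (one per clique vertex) is exactly used up, and I would prove it by showing that if $B_1(w_t)$ disagreed with $X$ at some $v\notin Y$, then $v$ would have been eligible to be stored as the coordinate for some $w_i$, and a counting/pigeonhole argument over $S$ yields a contradiction. Case (C7)/(R7) is analogous but with radius-$2$ balls centered in $I$: the center $u\in I$ is certified indirectly through which clique vertices $w_i\in N(u)$ carry a positive witness $z\in X^+\cap N(w_i)\cap I$, and the reconstructor (R7) picks any $u'\in I$ with $B_2(u')\cap Y^-=\varnothing$ and $w_t\in N(u')$ for each positive coordinate; the crux is again that the $\le\omega(G)$ stored vertices force $B_2(u')\cap X=X^+$, using that $B_2(u')$ in a split graph is determined by $N(u')\cap S$ (a vertex is in $B_2(u')$ iff it is in $I\cap N[u']$ or in $S\cap N(u')$ or adjacent to some $w\in S\cap N(u')$ that... ), so that the positive witnesses pinning down enough of $N(u')\cap S$ together with the negative witnesses excluding the bad $X^-$-vertices suffice. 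I would finish by noting that in all cases $|\alpha(X)|\le k=\max\{2,\omega(G)\}$ by inspection, and that $\beta$ always returns a genuine ball of $G$, so the scheme is proper and labeled of size $\omega(G)$ (the ``$2$'' only matters when $\omega(G)=1$, i.e.\ $G$ is edgeless, which is the trivial star/isolated-vertex case).
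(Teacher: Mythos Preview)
Your proposal and the paper's proof attack complementary halves of the argument. The paper devotes its proof almost entirely to showing that the seven encodings (C1)--(C7) produce vectors $Y$ that the reconstructor can tell apart, so that rule (R$i$) is triggered exactly when $\alpha$ was in case (C$i$); once that matching is established, the paper simply declares ``the correctness follows'' and stops. You do the reverse: you take the matching of (C$i$) with (R$i$) as given and attempt the case-by-case verification that $\beta(\alpha(X))$ realizes $X$, which the paper omits. Your opening ``exhaustive and mutually exclusive'' remark concerns the \emph{conditions on $X$} that define (C1)--(C7), which is a separate issue from whether the \emph{outputs $Y$} are distinguishable by $\beta$.

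Your main gap is therefore precisely the paper's content: you never argue that $\beta$, given only $Y$, correctly decides among (R5), (R6), (R7). The paper checks that (C5)'s vector contains at least two positive vertices in $S$ while (C6)'s and (C7)'s contain none (hence (R5) fires only for (C5)), and that (C7) always stores some positive coordinate $y_j\in N(w_j)$ while (C6) never does (separating (R6) from (R7)); this is the whole substance of the paper's proof and is absent from your outline. On your verification side, two smaller points: in (C6) the ``pigeonhole'' framing is off --- the argument is direct: if coordinate $t$ is $*$, then every candidate witness for $w_t$ already occupies an earlier coordinate and hence lies in $Y$, so the condition $B_1(w_t)\cap Y=Y^+$ disposes of it immediately. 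And in (C5) your sentence ``$z\in X^-\cap I$ is non-adjacent to $u\in I$ hence not in $B_1(u)$'' fails when $z=u$ itself; nothing in your argument (nor, for that matter, in the reconstructor's specification, which only sees $S\cap X$) rules out $\beta$ selecting a center $u\in X^-\cap I$.
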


\begin{proof}
  Let $X$ be a realizable sample for $\B(G)$. We will prove that the
  ball $\beta(Y)$ realizes the sample $X$, {\it i.e.}, that
  $\beta(Y)\cap X=X^+$. For each $1\leq x\leq 7$, the Case (R$x$) in
  the definition of $\beta$ corresponds to the Case (C$x$) in the
  definition of $\alpha$, and the correctness follows as long as no
  two cases (C$i$) and (C$j$) produce the same type of vector. First,
  note that, in (C$5$), $\alpha(X)$ contains at least two vertices of
  $X^+$ that are in $S$ since $|X^+|\geq 2$ and $B_1(u)$ realizes $X$
  for some vertex $u\in I$ such that $u\notin X^+$. Also, by
  definition, in (C$6$) and (C$7$), $\alpha(X)$ does not contain any
  vertex of $X^+$ that is in $S$, and contains at least one vertex of
  $X^-$ that is in $I$ since $|X^-|\geq 1$ and any ball realizing $X$
  contains all of $S$ in both of these cases. Lastly, in (C$7$), since
  we are not in any of the other cases, there must be at least one
  vertex in $X^+\cap I$, and thus, for some $1\leq j\leq k$, the
  $j^{th}$ coordinate of $\alpha(X)$ contains a vertex
  $y\in X^+\cap N(w_j)$, while this is never the case in
  (C$6$). Hence, no two cases produce the same vector, and the
  correctness follows.
\end{proof}

\section{Planar graphs}\label{sec:planar}

In this section, we study proper labeled sample compression schemes
for planar graphs. There seems to be an inherent difficulty when
dealing with planar graphs, and thus, we only give partial results in
this section. Let $G$ be a planar graph and fix a planar embedding of
$G$. For the family $\B_1(G)$, we construct proper labeled sample
compression schemes of size~$4$. Note that the family of balls of a
planar graph has VC-dimension at most~$4$ \cite{BouTh,ChEsVa}. For a
vertex $x$, we denote by $N(x)$ its open neighborhood, {\it i.e.},
$N(x)=B_1(x)\setminus \{ x\}$. We begin with the next simple lemma,
which is a direct consequence of the fact that a planar graph does not
admit $K_{3,3}$ as a minor. Let $X$ be a realizable sample. A
\emph{potential center} for $X$ is a vertex $v$ of $G$ such that
$X^+\subseteq N(v)$ and $v\notin X^+$.

\begin{lemma} \label{planar-potential-center} If $|X^+|\ge 3$ and, for any vertex $v\in X^+$, it holds that $N(v)\cap X^-\neq \varnothing$, then $X$ either has one or two potential centers.
\end{lemma}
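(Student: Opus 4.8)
The hypothesis says every vertex of $X^+$ has a private negative neighbour, so no ball $B_1(v)$ with $v\in X^+$ can realize $X$; hence any realizing ball must be centered at a potential center, and it suffices to bound the number of potential centers by $2$. I would argue by contradiction: suppose $v_1,v_2,v_3$ are three distinct potential centers. Since $|X^+|\ge 3$, pick three distinct vertices $a,b,c\in X^+$. By definition, each of $v_1,v_2,v_3$ is adjacent to each of $a,b,c$ and is itself not in $X^+$, so in particular $\{v_1,v_2,v_3\}\cap\{a,b,c\}=\varnothing$. Thus the six vertices $v_1,v_2,v_3,a,b,c$ are pairwise distinct, and the edges $v_ia$, $v_ib$, $v_ic$ ($i=1,2,3$) form a subgraph isomorphic to $K_{3,3}$ with parts $\{v_1,v_2,v_3\}$ and $\{a,b,c\}$. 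This contradicts the planarity of $G$, since $K_{3,3}$ is not planar (indeed not even a minor of a planar graph). Therefore there are at most two potential centers.

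It remains to note that $X$ has at least one potential center: if $X$ is realizable, then some ball $B$ realizes it; since every vertex of $X^+$ has a negative neighbour, $B$ is not a ball of radius $0$ (a single vertex of $X^+$ would then force $X^+$ to be that vertex, but that vertex has a negative neighbour inside $B$, contradiction) and not a ball $B_1(v)$ with $v\in X^+$; moreover $B$ cannot have radius $\ge 2$ containing three pairwise-non-adjacent—wait, more simply: among all realizing balls choose one of minimum radius; if this radius is $\ge 2$ we can shrink, so the radius is exactly $1$, say $B=B_1(v)$, and $v\notin X^+$ (else, as noted, $v$'s negative neighbour lies in $B_1(v)\cap X^-$ only if that neighbour is in $B$, which it is, a contradiction — here one uses $X^+\subseteq B_1(v)$ together with $v\in X^+$ to get the private negative neighbour of $v$ inside $B$). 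Hence $v$ is a potential center. Combining, $X$ has one or two potential centers.

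The only genuinely delicate point is the last paragraph: one must be careful that the minimum-radius realizing ball indeed has radius exactly $1$ and center outside $X^+$. For radius $0$: a radius-$0$ ball $\{v\}$ realizing $X$ forces $X^+\subseteq\{v\}$, contradicting $|X^+|\ge 3$. For radius $\ge 2$: if $B_\rho(v)$ realizes $X$ with $\rho\ge 2$, I would argue one can decrease $\rho$ — but this is not automatic in a general graph, so instead the clean route is: take any realizing ball $B_\rho(v)$; if $v\notin X^+$ and $X^+\subseteq N(v)$ we are done, and otherwise replace the analysis by simply observing that the statement we actually need downstream is only the upper bound of $2$, while the existence of a potential center follows from the hypothesis together with realizability in whatever way the subsequent development of the paper requires. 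Thus I expect the main (and only) obstacle to be phrasing the existence half cleanly; the upper bound is an immediate $K_{3,3}$-minor argument.
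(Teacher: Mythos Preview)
Your upper-bound argument via a $K_{3,3}$ subgraph is exactly the paper's proof, and it is complete as stated.

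The existence half, however, is where you go astray. The lemma is stated and used inside the section that treats the family $\B_1(G)$ of unit balls only; ``realizable'' here means realizable by some $B_1(u)$. So there is no need for a minimum-radius argument or for worrying about radii $\ge 2$: take any $u$ with $X^+\subseteq B_1(u)=N[u]$ and $X^-\cap N[u]=\varnothing$. If $u\in X^+$, the hypothesis gives a vertex of $X^-$ in $N(u)\subseteq B_1(u)$, contradicting $X^-\cap B_1(u)=\varnothing$; hence $u\notin X^+$, and then $X^+\subseteq N[u]\setminus\{u\}=N(u)$, so $u$ is a potential center. That is the whole existence argument in the paper. Your hedging (``the statement we actually need downstream is only the upper bound'') is unnecessary once the ambient concept class is identified correctly.
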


\begin{proof} Since $X$ is realizable, there exists $u\in V$ such that $u\notin X^+$, $X^+\subseteq N(u)$, and $X^-\cap N(u)=\varnothing$, and thus, $X$ has at least
one potential center (namely $u$). If $X$ has three or more potential centers, then since $|X^+|\ge 3$, we obtain a forbidden $K_{3,3}$ (see Fig.~\ref{fig:K33_planar_potential_centers} for an illustration).
\end{proof}

\begin{figure}[htb]
    \centering
    \includegraphics[width=0.28\linewidth]{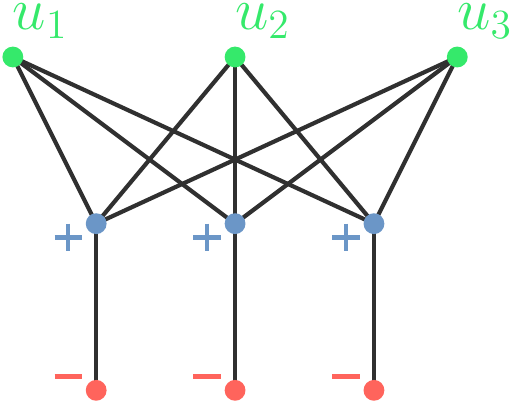}
    \caption{
        \label{fig:K33_planar_potential_centers}
        Illustration of the forbidden $K_{3,3}$ obtained in the proof of Lemma~\ref{planar-potential-center} when there are three potential centers $u_1$, $u_2$, and $u_3$.
    }
\end{figure}

We can now describe our proper labeled sample compression scheme of size~$4$. The compressor $\alpha(X)$ is a vector on four coordinates. We use the symbol $*$ to indicate that the respective coordinate of $\alpha(X)$ is empty. If $|X^+|=0$, then $\alpha(X):=(*,*,*,*)$, and $\beta((*,*,*,*))$ is the empty ball. If the center of a ball realizing $X$ is contained in $X^+$, say $u$, then $\alpha(X):=(u,*,*,*)$, and $\beta((u,*,*,*))$ is $B_1(u)$. Otherwise, if $|X^+|\ge 3$, then the compressor $\alpha(X)$ is defined as follows. First, let $\alpha(X)$ be equal to $(v_1,v_2,v_3,*)$, where $v_1,v_2,v_3$ are
three arbitrary vertices of $X^+$. If $X^-$ contains a vertex $z$ adjacent
to a potential center of $X$, then change the last coordinate of $\alpha(X)$ to $z$, {\it i.e.}, set $\alpha(X):=(v_1,v_2,v_3,z)$. If $Y$ is a signed vector with $|Y^+|=3$, then the reconstructor
$\beta(Y)$ is defined as follows. If $|Y^-|=0$, then $\beta(Y)$ is $B_1(u)$ for any vertex $u$ adjacent to the three vertices of $Y^+$. If $|Y^-|=1$, then
$\beta(Y)$ is $B_1(u)$ for any vertex $u$ adjacent to the three vertices of $Y^+$, and not adjacent to the unique vertex of $Y^-$.


Now, suppose that $|X^+|\le 2$ and that the center of a ball realizing $X$ is not contained in $X^+$. In this case, the number of potential centers is potentially unbounded,
and thus, we need a technique different from the one described above. We first
deal with the case $|X^+|=2$.
For any two vertices $u,v\in V$, let $\ell_{u,v}$ be the labeling
that assigns labels $0$ through $k=|N(u)\cap N(v)|-1$ to the vertices of
$N(u)\cap N(v)$. In particular, let $\ell_{u,v}$ have the property that, for
all $0\leq i < j\leq k$, if $w_i, w_j\in N(u)\cap N(v)$ are such that
$\ell_{u,v}(w_i) = i$ and $\ell_{u,v}(w_j) = j$, then, for all $i<y<j$, the
vertex $w_y$ with label $y$ according to $\ell_{u,v}$ is contained in the
interior region of the cycle $u,w_i,v,w_j,u$ (see Fig.~\ref{fig:l_uv_planar_graph} for an illustration).
Such a labeling always exists since the subgraph induced by the vertices $u,v$, and $w_1,\ldots,w_k$ is planar and contains $K_{2,k}$ as a subgraph.

\begin{figure}[htb]
    \centering
    \includegraphics[width=0.28\linewidth]{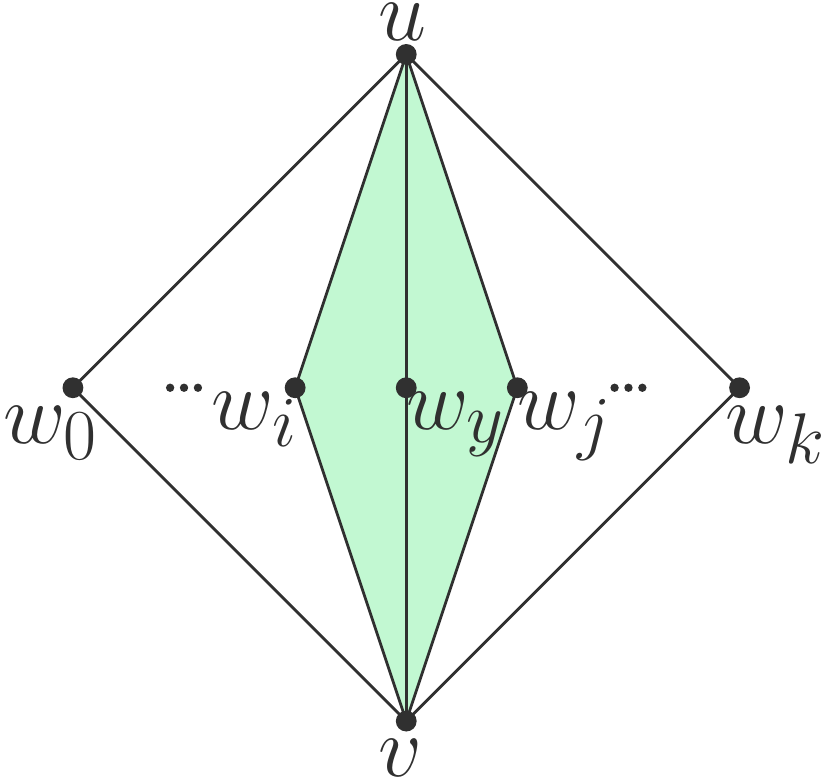}
    \caption{
        \label{fig:l_uv_planar_graph}
        Illustration of the labeling $\ell_{u,v}$ described in the case of planar graphs when $X^+=\{u,v\}$, $W=\{w_1,\dots,w_k\}$ is the set of the potential centers of $X$, where, for all $0\leq i\leq k$, $w_i$ has label $i$ according to $\ell_{u,v}$, and $i<y<j$.
    }
\end{figure}

Now, let $X$ be a realizable sample with $X^+=\{u,v\}$, and let $W=\{w_0,\ldots,w_k\}$ be the set of its potential centers, where, for all $0\leq i\leq k$, $w_i$ has label $i$ according to $\ell_{u,v}$. Since $X$ is realizable, $k\geq 0$, and in what follows, all indices are considered modulo~$k+1$. Then, as a consequence of the property of $\ell_{u,v}$ described in the paragraph above, we get the following:

\begin{lemma}\label{consecutive-centers}
For every vertex $x\in X^-$, one of the following holds:

\begin{enumerate}
\item $B_1(x)\cap W=\varnothing$;
\item $B_1(x)\cap W= \{w_j\}$, for some $0\leq j\leq k$;
\item $B_1(x)\cap W= \{w_j,w_{j+1}\}$, for some $0\leq j\leq k$;
\item $B_1(x)\cap W= \{w_j,w_{j+1},w_{j+2}\}$, for some $0\leq j\leq k$.
\end{enumerate}
\end{lemma}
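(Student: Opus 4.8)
The plan is to read the structure of $B_1(x)\cap W$ directly off the fixed planar embedding, using the nesting property that defines $\ell_{u,v}$. I would first introduce the subgraph $\Theta\subseteq G$ with vertex set $\{u,v\}\cup W$ and edge set $\{u w_i, v w_i : 0\le i\le k\}$: this is a generalized theta graph, the union of the $k+1$ internally disjoint ``spokes'' $u\,w_i\,v$, and I may assume $k\ge 1$ since for $k=0$ we have $|W|=1$ and the conclusion is one of (1) or (2). The fixed embedding of $G$ restricts to an embedding of $\Theta$, and the defining property of $\ell_{u,v}$ exactly says that $w_0,w_1,\dots,w_k$ occur in this radial (rotational) order around $\Theta$; equivalently, every face of the embedded $\Theta$ is bounded by a $4$-cycle $u\,w_i\,v\,w_{i+1}\,u$ for some $i$ with indices modulo $k+1$, and I name these faces $F_0,\dots,F_k$.

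Next I would split on the position of $x$ relative to $\Theta$. Since $x\in X^-$, $u,v\in X^+$, and $X^+\cap X^-=\varnothing$, we have $x\ne u,v$, so either $x$ lies in the interior of a unique face $F_i$ of $\Theta$ or $x=w_j$ for some $j$. If $x$ is interior to $F_i$, the key observation is that any $G$-neighbour of $x$ lying on $\Theta$ must lie on $\partial F_i$: an edge of $G$ leaving $x$ stays, near $x$, inside the open face $F_i$, and to reach a vertex outside $\overline{F_i}$ it would have to cross $\partial F_i$, which is built only from edges and vertices of $\Theta\subseteq G$ — it cannot cross an edge of $\Theta$ (planarity), nor pass through a third vertex of $\Theta$ (an edge meets vertices only at its ends). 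Hence $N_G(x)\cap W\subseteq\{w_i,w_{i+1}\}$, so $B_1(x)\cap W=N_G(x)\cap W$ is one of $\varnothing,\{w_i\},\{w_{i+1}\},\{w_i,w_{i+1}\}$, i.e.\ case (1), (2), or (3). If instead $x=w_j$, then $w_j$, being an interior vertex of the spoke $u\,w_j\,v$, is incident in $\Theta$ to exactly the two faces $F_{j-1}$ and $F_j$ (the two edges $w_j u$ and $w_j v$ split a small disc around $w_j$ into two sectors, one contained in $F_{j-1}$ and the other in $F_j$), so the same non-crossing argument gives $N_G(w_j)\cap W\subseteq\{w_{j-1},w_{j+1}\}$; since $w_j\in B_1(w_j)$ always, $B_1(w_j)\cap W=\{w_j\}\cup(N_G(w_j)\cap W)$ is a block of at most three cyclically consecutive centers, matching (2), (3), or (4) (after the reindexing $j\mapsto j-1$ in (4)).

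The one genuine obstacle is the topological step — formalising that a vertex lying in the interior of a face of $\Theta$ can only be adjacent to vertices on that face's boundary (and similarly for the two faces meeting at an interior spoke vertex). I would handle it exactly as sketched above, by recording that $\partial F_i$ consists solely of edges and vertices of $\Theta$, so no edge of $G$ drawn from an interior point of $F_i$ can escape $\overline{F_i}$ without crossing an edge of $G$. The remaining items — the equivalence between the nesting condition defining $\ell_{u,v}$ and the stated face structure of $\Theta$, and the modular-index bookkeeping in the degenerate small cases $k\le 2$ — are routine.
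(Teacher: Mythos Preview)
Your argument is correct and is precisely the reasoning the paper leaves implicit: the paper states the lemma as a direct consequence of the nesting property of $\ell_{u,v}$ without giving further details, and your theta-graph face analysis is the natural way to unpack that consequence. The case split on whether $x$ lies in the interior of a face $F_i$ or equals some $w_j$, together with the planarity argument that edges from $x$ cannot escape $\overline{F_i}$ (resp.\ $\overline{F_{j-1}}\cup\overline{F_j}$), is exactly right and yields the claimed bound of at most three cyclically consecutive centers.
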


The compressor $\alpha(X)$ is defined as follows. First, let $\alpha(X)=(u,v,*,*)$ and note that $\alpha^+(X):=X^+$. If $X^-$ does not contain a vertex that is
a potential center nor a vertex adjacent to a potential center, then change nothing, {\it i.e.}, $\alpha^-(X):=\varnothing$. Otherwise, there exists some
$0\leq s\leq k$ such that $B_1(w_s)\cap X=X^+$, and some vertex
$t\in X^-$ such that $w_{s-1}\in B_1(t)$, and so, change the third coordinate of $\alpha(X)$ to $t$, and note that
$\alpha^-(X)=(t)$. If $Y$ is a signed map with $Y^+=(u,v)$, then
the reconstructor $\beta(Y)$ is defined as follows. If $|Y^-|=0$, then
$\beta(Y)=B_1(w)$ for any vertex $w$ adjacent to both the vertices of
$Y^+$. If $Y^-=(t)$, then $\beta(Y)$ is $B_1(w_s)$, where the vertices
$w_s$ and $w_{s-1}$ are labeled $s$ and $s-1$, respectively, according
to $\ell_{u,v}$, and $w_{s-1}\in B_1(t)$ while $w_{s}\notin B_1(t)$.

Now, let us finish with the case $|X^+|=1$. For any vertex $u\in V$,
let $\ell_u$ be the labeling that assigns labels $0$ through $k=|N(u)|-1$
to the vertices of $N(u)$ according to the circular order they appear
in topologically around $u$ in the clockwise direction, and starting
at any vertex of $N(u)$. For a subset $S$ of vertices of $N(u)$, the
{\it predecessor} (\emph{successor}, resp.) of a vertex $x\in S$, is
the first vertex of $S$ that appears in the circular order
topologically around $u$ in the counterclockwise (clockwise, resp.)
direction.
Now, let $X$ be a realizable sample with $X^+=\{u\}$. Since
$B_1(u)$ does not realize $X$ (which means that
$N(u) \cap X^- \neq \varnothing$), any ball of radius $1$ realizing
$X$ is centered at a neighbor of $u$. Consider the set
$N(u) = \{w_0,\ldots,w_{k}\}$ of the neighbors of $u$, where, for all
$0\leq i\leq k$, $w_i$ has label $i$ according to $\ell_u$. Since $u$
has a neighbor in $X^-$, there exists some $0\leq s\leq k$ such that
$B_1(w_s)\cap X=X^+$, and some vertex $t\in X^-$ such that
$w_{s-1}\in B_1(t)$ (again, here and further, the indices are considered modulo $k+1$). Let
$W = \{w_s \in N(u) : w_s \notin B_1(t) \text{ and } w_{s-1} \in B_1(t)\}$.
Our labeled sample compression scheme will return a ball of
radius $1$ centered at a vertex of $W$. Note that, in the embedding of
$G$, if $u$ and $t$ have at least two common neighbors, there exist
$0 \leq p,q \leq k$ such that $w_p$ is the successor of $w_q$ in $N(u) \cap B_1(t)$, and all
the vertices of $N(u) \cap B_1(t)$
distinct from $u$, $t$, $w_p$, and $w_q$ are in the interior region of the cycle
$u,w_p,t,w_q,u$ (where $t$ is omitted if it coincides with $w_p$ or
$w_q$). Hence, if $p<q$, then, for every vertex $w_j\in (N(u) \cap (B_1(t))\setminus \{u,t,w_p,w_q\}$,
it holds that $p<j<q$. Otherwise, if $p>q$, then, for every vertex
$w_j\in (N(u) \cap B_1(t))\setminus \{u,t,w_p,w_q\}$, either $j>p$ or $j<q$.
If $N(u) \cap B_1(t) = \{w_i\}$, then we let $p= q= i$.
Note that, for any $w_r, w_{r'} \in N(u) \cap B_1(t)$ such that $w_r$ is the
predecessor of $w_{r'}$ in $N(u) \cap B_1(t)$, if $r<r'-1$, then
there is at most one vertex $w_s\in W$ such that $r<s<r'$.
Similarly, if $r>r'$ and $w_r$ is not the predecessor of $w_{r'}$ in $N(u)$, then
there is at most one vertex $w_s\in W$ such that $s>r$ or $s<r'$.

To define $\alpha(X)$, we consider the first element $w_s$ of $W$
appearing after $w_p$ (for the vertices of $N(u)$ in the circular
order topologically around $u$ in the clockwise direction) such that
$B_1(w_s) \cap X = X^+$. If $s-1\geq p$ and, for all $p\leq j\leq s-1$,
$w_j\in B_1(t)$, we let $\alpha^-(X) := (t)$ and
$\alpha^+(X) :=(u)$. Similarly, if $s-1<p$ and, for all $j\geq p$ and $i\leq s-1$,
$w_i,w_j\in B_1(t)$, we let $\alpha^-(X) := (t)$ and $\alpha^+(X) :=(u)$.
That is, in both of the previous cases, $w_s$ is the first element of $W$ appearing
after $w_p$. Note that if $u$ and $t$ have a unique common
neighbor, then we are in the first case by the definition of $t$
(their unique common neighbor is $w_{s-1}=w_p$). Otherwise, let $w_{s'}$ be the
predecessor of $w_s$ in $W$. Observe that $w_{s'}$ belongs to the
interior region of the cycle $u,w_{s'-1},t,w_{s-1},u$ (where $t$ is
omitted if it coincides with $w_{s'-1}$ or $w_{s-1}$) because
$w_{s'-1},w_{s-1}\in B_1(t)$ since $w_{s'},w_s\in W$. Since $B_1(w_{s'})$ does
not realize $X$, there exists $z \in B_1(w_{s'}) \cap X^-$. In this case,
we let $\alpha^+(X) :=(u)$ and $\alpha^-(X) := (t,z)$ (where we
remember the order between $t$ and $z$).

\begin{claim}\label{claimzapastropdevoisins}
  Either $B_1(z) \cap W = \{w_{s'}\}$, or $B_1(z) \cap W =
  \{w_{s'},w_{s''}\}$, where $w_{s''}$ is the predecessor of $w_{s'}$
  in $W$.
\end{claim}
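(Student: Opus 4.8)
The plan is to combine two cheap facts about $z$ with the planar structure of the common neighbourhood of $u$ and $t$. Since $z\in B_1(w_{s'})\cap X^-$ by the choice of $z$, and $w_{s'}\in W$, we already have $w_{s'}\in B_1(z)\cap W$, which gives one inclusion for free; and since $B_1(w_s)\cap X=X^+$ while $z\in X^-$, we have $z\notin B_1(w_s)$, i.e. $w_s\notin B_1(z)$, which will rule out the ``wrong'' neighbour. Now consider the subgraph $H$ of $G$ whose vertices are $u$, $t$, and the common neighbours of $u$ and $t$, and whose edges are those joining $u$ or $t$ to these common neighbours, plus the edge $ut$ if it is present (so $H$ is not induced). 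Using the structure recalled just before the claim, the members of $N(u)\cap B_1(t)$ occur along a contiguous arc in the rotation around $u$, so $H$ is a subdivision of a multigraph with the two vertices $u,t$ and $\ell\ge 2$ parallel $u$-$t$ edges; hence, in the fixed embedding, $H$ cuts the plane into $\ell$ cyclically ordered faces, each bounded by a cycle through $u$ and $t$, two consecutive faces sharing a single $u$-$t$ ``wall'' subpath and two non-consecutive faces sharing only $\{u,t\}$. Each vertex of $W$ lies in the interior of a unique face of $H$: a vertex $w_m\in W$ has $w_{m-1}\in N(u)\cap B_1(t)$ but $w_m\notin B_1(t)$, so $w_m\notin V(H)$ and $w_m$ sits strictly between two consecutive members of $N(u)\cap B_1(t)$ in the rotation around $u$; and by the ``at most one element of $W$ between two consecutive members of $N(u)\cap B_1(t)$'' observation recalled before the claim, each face of $H$ contains at most one vertex of $W$. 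Let $F$ be the face whose interior contains $w_{s'}$.

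Next I would take an arbitrary $w_m\in B_1(z)\cap W$ with $w_m\neq w_{s'}$ and show $w_m=w_{s''}$. The key planarity observation is that if an edge of $G$ has an endpoint in the interior of a face $F'$ of $H$, then its other endpoint lies in $\overline{F'}$, since the edge cannot cross an edge of $H$ nor pass through a vertex of $H$ other than at its own endpoints. Applied to the edge $zw_{s'}$ (or directly if $z=w_{s'}$), this gives $z\in\overline{F}$; applied to the edge $zw_m$ and the face $F_m$ containing $w_m$ in its interior, it gives $z\in\overline{F_m}$, and $F_m\neq F$ because $w_m\neq w_{s'}$ and each face carries at most one vertex of $W$. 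Thus $z\in\overline{F}\cap\overline{F_m}$. Since $z\in X^-$ and $X^+=\{u\}$ we have $z\neq u$, and since $z\in B_1(w_{s'})$ while $t\notin B_1(w_{s'})$ (as $w_{s'}\in W$) we have $z\neq t$; hence $F$ and $F_m$ are not non-consecutive faces, so they are consecutive, $z$ lies on their common wall, and, as $z\notin\{u,t\}$, that wall is not the edge $ut$, so $z$ is the middle vertex of a wall $u$-$z$-$t$ of $F$. A face of $H$ has (at most) two walls, one towards the preceding face and one towards the following face; correspondingly $F_m$ is the preceding face, whose unique vertex of $W$ is the predecessor $w_{s''}$ of $w_{s'}$ in $W$, or $F_m$ is the following face, whose unique vertex of $W$ is the successor $w_s$ of $w_{s'}$ in $W$. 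The second possibility gives $w_s=w_m\in B_1(z)$, contradicting $w_s\notin B_1(z)$; hence $w_m=w_{s''}$. Therefore $B_1(z)\cap W\subseteq\{w_{s'},w_{s''}\}$, and since $w_{s'}\in B_1(z)\cap W$, either $B_1(z)\cap W=\{w_{s'}\}$ or $B_1(z)\cap W=\{w_{s'},w_{s''}\}$, as claimed.

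The main obstacle, and where the write-up needs the most care, is the planarity bookkeeping behind the face picture: one must check that $H$ really is a subdivision of a bundle of parallel $u$-$t$ edges so that its faces have the stated incidence pattern, while handling the degenerate situations flagged in the text --- when $t$ itself lies in $N(u)$, so that a path $u$-$w_c$-$t$ through a common neighbour coexists with the wall $ut$, and when two consecutive members of $N(u)\cap B_1(t)$ are consecutive in the rotation around $u$, so that the face between them carries no vertex of $W$ --- and one must confirm that every vertex of $W$ lands in the \emph{interior} of a face of $H$ rather than on its boundary. Once that structure is in place, the combinatorial step identifying $w_m$ with $w_{s''}$, and excluding $w_m=w_s$ via $z\notin B_1(w_s)$, is short.
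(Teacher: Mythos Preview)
Your argument is correct and takes a genuinely different route from the paper. The paper proceeds by a direct case analysis on the position of $z$ relative to the cycle $u,w_{s'-1},t,w_{s-1}$: it checks separately the cases $z=w_{s'}$, $z$ in the open region $R$, $z=w_{s'-1}$ (with subcases according to whether $s'-1=p$), and $z=w_{s-1}$ (with subcases according to whether $s-1=q$), in each case exhibiting a concrete Jordan region that traps the relevant vertices of $W$. You instead build the auxiliary subgraph $H$ on $u$, $t$, and their common neighbours, observe that its faces are cyclically ordered and each carries at most one vertex of $W$, and then use a single planarity step: any $w_m\in B_1(z)\cap W$ with $w_m\neq w_{s'}$ forces $z$ onto a wall of $H$, so $F_m$ is one of the two faces adjacent to $F$, hence $w_m\in\{w_{s''},w_s\}$, and $w_m=w_s$ is excluded. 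Your approach is more conceptual and uniform; the paper's is more hands-on but avoids the topological setup.

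Two small points to tighten. First, the claim that the elements of $N(u)\cap B_1(t)$ form a contiguous arc in the rotation around $u$ is not generally true and is also unnecessary: $H$ is a subdivided multigraph of parallel $u$--$t$ edges in any case, and its faces correspond bijectively to the cyclic gaps in $\ell_u$ between consecutive elements of $N(u)\cap B_1(t)$, which is all you use. Second, the assertion that the face of $H$ following $F$ carries $w_s$ (and the preceding one carries $w_{s''}$) is correct but deserves a line: if $w_{r'}$ is the first element of $N(u)\cap B_1(t)$ after $w_{s'}$ in $\ell_u$, then the following face is the arc $(r',r'')$; the only candidate for a vertex of $W$ there is $w_{r'+1}$, and since no element of $W$ lies strictly between $w_{s'}$ and $w_s$, one gets $w_{r'+1}=w_s$ whenever that face is nonempty (and if it is empty, $F_m$ cannot be the following face, which is fine since $F_m$ is assumed to contain $w_m$). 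You flag the empty-face degeneracy in your closing paragraph; folding this one-line check into the main argument removes the only point where a reader might hesitate.
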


\begin{proof}
  If $z=w_{s'}$, then the claim follows since $w_{s'}$ belongs to the
  interior region $R$ of the cycle $u,w_{s'-1},t,w_{s-1}$, and $z \notin B_1(w_s)$.
  Hence, assume that $z$ is adjacent to $w_{s'}$. Then, either $z$ also
  belongs to $R$,
  or it coincides with one of the vertices $u$, $t$, $w_{s'-1}$ or
  $w_{s-1}$. Since $z \in X^-$ and $u \in X^+$, we have $z \neq
  u$. Since $w_{s'} \in B_1(z) \setminus B_1(t)$, we have $z \neq t$.

  If $z$ belongs to $R$, then $w_{s'}$ is the only neighbor of $z$ in
  $W$, since the only vertex of $W$ in the cycle
  $u,w_{s'-1},t,w_{s-1},u$ and its interior region is $w_{s'}$. If $z$
  coincides with $w_{s'-1}$ and $s'-1\neq p$, then let $w_r$ be the
  predecessor of $w_{s'-1}$ in $N(u) \cap B_1(t)$. Note that $r\neq q$
  since $s'-1\neq p$. Then, $z$ is in the interior region of the cycle
  $u,w_r,t,w_{s-1},u$. In this case, since the only vertices of $W$ in
  this cycle and its interior region are $w_{s'}$ and potentially
  $w_{s''}$ (if $r = s''-1$), we are done. Suppose now that
  $z = w_{s'-1} = w_p$. If $w_{s-1} = w_q$, then $W=\{w_{s'},w_s\}$
  and $B_1(z)\cap W=\{w_{s'}\}$. If $z = w_{s'-1} = w_p$ and $w_{s-1} \neq w_q$,
  then $z$ is not adjacent to any vertex lying in
  the interior region of the cycle $u,w_{s-1},t,w_q,u$.  The only
  vertices of $W$ that are not lying in this region are $w_{s'}$ and
  $w_{q+1}$ if it is in $W$. Since $w_{q+1}$ is the predecessor of
  $w_{s'}$ in $W$ in this case, we are done.

  Suppose now that $z$ coincides with $w_{s-1}$. If $s-1 \neq q$, then
  $z$ belongs to the interior region of the cycle
  $u, w_{s'-1},t,w_{r'},u$, where $w_{r'}$ is the successor of
  $w_{s-1}$ in $N(u) \cap B_1(t)$. Since the only vertices of $W$
  lying in this region are $w_{s'}$ and $w_s$, and since
  $z \notin B_1(w_s)$, $w_{s'}$ is the unique neighbor of $z$ in $W$.
  Finally, suppose that $z = w_{s-1} = w_q$. If $w_{s'-1} = w_p$, then
  as before, $W=\{w_{s'},w_s\}$ and $B_1(z)\cap W=\{w_{s'}\}$. If
  $z = w_{s-1} = w_q$ and $w_{s'-1} \neq w_p$, then $z$ is not
  adjacent to any vertex lying in the interior region of the cycle
  $u,w_p,t,w_{s'-1},u$. Since $w_{s'}$ and $w_{s}$ are the only
  vertices of $W$ that are not lying in this region, and since
  $z \notin B_1(w_s)$, we are done.
\end{proof}

We now define the reconstructor $\beta(Y)$ when $Y^+ = (u)$ and
$|Y^-| \geq 1$. If $|Y^-| = 1$, we let $Y^- = (t)$, and if
$|Y^-| = 2$, we let $Y^- = (t,z)$. We define $w_0,\dots,w_k$
(in particular, $w_p$ and $w_q$) and $W$ as before.
If $Y^- = (t)$, let $w_s$ be the first element of $W$
appearing after $w_p$ (for the vertices of $N(u)$ in the circular
order topologically around $u$ in the clockwise direction), and let
$\beta(Y) = B_1(w_s)$. Otherwise, $Y^-= (t,z)$ and $|B_1(z)\cap W|\in \{1,2\}$
by Claim~\ref{claimzapastropdevoisins}. If $|B_1(z)\cap W|=1$, then
let $w'$ be the only vertex in $B_1(z)\cap W$. Otherwise,
$|B_1(z)\cap W|=2$ and one of the two vertices is a predecessor of the other in $W$,
say $B_1(z)\cap W=\{w',w''\}$, and let $w''$ be the predecessor of $w'$ in $W$.
In both cases, $\beta(Y) = B_1(w)$, where $w$ is the successor of $w'$ in $W$.

\begin{proposition}
For any planar graph $G$, the pair $(\alpha,\beta)$ of vectors defines a proper labeled sample compression scheme of size 4 for $\B_1(G)$.
\end{proposition}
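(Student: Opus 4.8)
The plan is to verify, in correspondence with the cases defining the compressor $\alpha$ and the reconstructor $\beta$, that $\beta(\alpha(X))$ is a ball of $\B_1(G)$ — or the empty set if $X^+=\varnothing$, as permitted by the conventions fixed in Section~\ref{sec:definitions} — that realizes $X$, i.e.\ $X^+\subseteq\beta(\alpha(X))$ and $X^-\cap\beta(\alpha(X))=\varnothing$; and that $|\underline{\alpha}(X)|\le 4$. The size bound is immediate by inspection, the largest support ($=4$) occurring when $|X^+|\ge 3$ and $\alpha(X)$ records three vertices of $X^+$ together with one vertex of $X^-$. One also checks that the cases of $\beta$ are mutually exclusive and that each matches exactly one case of $\alpha$: from $|Y^+|$ and $|Y^-|$ alone the reconstructor knows which branch it is in (e.g.\ $|Y^+|\ge 3$ is the $|X^+|\ge 3$ branch, $|Y^+|=2$ the $|X^+|=2$ branch, $|Y^+|=1$ with $|Y^-|\ge 1$ the $|X^+|=1$ branch, and $|Y^+|\le 1$ with $|Y^-|=0$ the trivial branches), so $\beta$ is well defined on $\Ima(\alpha)$.

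The cases $X^+=\varnothing$ and ``a ball realizing $X$ is centred in $X^+$'' are immediate. For $|X^+|\ge 3$, the three fixed vertices $v_1,v_2,v_3\in X^+$ have at most two common neighbours, since three distinct common neighbours would form, together with $v_1,v_2,v_3$, a $K_{3,3}$ subgraph, which is impossible in a planar graph; hence the vertex $u$ chosen by $\beta$ (a common neighbour of $v_1,v_2,v_3$, additionally non-adjacent to the recorded $z$ when $X^-$ meets a potential center) ranges over a set of at most two vertices. One of them, $u^{\ast}$, is a potential center with $B_1(u^{\ast})$ realizing $X$; I would then show, again by ruling out $K_{3,3}$ together with Lemma~\ref{planar-potential-center}, that the other common neighbour (if any) is also a potential center, and that recording a vertex $z\in X^-$ adjacent to a potential center eliminates the ``bad'' common neighbour. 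Thus every admissible choice of $u$ gives a ball realizing $X$, and when no vertex of $X^-$ is adjacent to a potential center the bare vector $(v_1,v_2,v_3,*)$ already suffices.

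For $|X^+|=2$, write $X^+=\{u,v\}$ and let $W=\{w_0,\dots,w_k\}$ be the potential centers cyclically ordered via $\ell_{u,v}$. Here I would use Lemma~\ref{consecutive-centers}: each $x\in X^-$ covers at most three \emph{consecutive} elements of $W$, so for the recorded vertex $t\in X^-$ the set $B_1(t)\cap W$ is a single non-empty proper sub-arc of $W$ (proper because $B_1(w_s)\cap X^-=\varnothing$ for the good center $w_s$ picked by $\alpha$, hence $w_s\notin B_1(t)$). Such an arc has a unique trailing element $w_{s-1}$, and $\beta$ recovers exactly its successor $w_s$, so $B_1(w_s)$ realizes $X$; if $X^-$ meets neither $W$ nor a neighbour of $W$, then $(u,v,*,*)$ suffices. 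The case $|X^+|=1$ is the most delicate, and I expect it to be the main obstacle: the candidate centers lie in $N(u)$, cyclically ordered by $\ell_u$, and one must reconstruct the center that the compressor intended from only $u$ and one or two vertices of $X^-$ while respecting all the cyclic-order constraints. The argument is the same in spirit — walk clockwise from $w_p$ through the set $W$ of boundary neighbours — but it relies on the finer structure controlled by planarity, in particular on Claim~\ref{claimzapastropdevoisins}, which pins $B_1(z)\cap W$ down to one or two consecutive boundary centers and thereby lets $\beta$, given $(u;t,z)$, recover the intended center unambiguously. Collecting these verifications gives that $(\alpha,\beta)$ is a proper LSCS of size $4$ for $\B_1(G)$.
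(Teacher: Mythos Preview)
Your proposal is correct and takes essentially the same approach as the paper: a case-by-case verification that $\beta(\alpha(X))$ realizes $X$, appealing to Lemma~\ref{planar-potential-center} for the $|X^+|\ge 3$ branch, to Lemma~\ref{consecutive-centers} for the $|X^+|=2$ branch, and to Claim~\ref{claimzapastropdevoisins} for the delicate $|X^+|=1$ branch. Your write-up is more explicit than the paper's terse proof (in particular you spell out why the branches of $\beta$ are distinguishable from $|Y^+|$ and $|Y^-|$, and you unpack the arc argument for $|X^+|=2$), but the structure, the key lemmas invoked, and the logical flow are the same.
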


\begin{proof}
  Let $X$ be a realizable sample for $\B_1(G)$. We prove that the ball
  $\beta(Y)$ realizes the sample $X$, {\it i.e.}, that
  $\beta(Y)\cap X=X^+$. When $Y=(*,*,*,*)$, then $|X^+|=0$, and so, the
  empty set returned is compatible with $X$. When $|Y^+|=1$ and $|Y^-|=0$, this
  implies that the ball of radius $1$ centered at the unique vertex of
  $Y^+$ is compatible with $X$, and so, $\beta(Y)$ is compatible with
  $X$ in this case. When $|Y^+|=3$, the correctness follows from
  Lemma~\ref{planar-potential-center}, and when $|Y^+|=2$, the
  correctness follows from Lemma~\ref{consecutive-centers}. When
  $Y^+= (u)$ and $Y^-= (t)$ (or $Y^- = (t,z)$), then we define
  $w_0,\dots,w_k$ (in particular, $w_p$ and $w_q$) and $W$ as before.
  If $Y^-=(t)$, the ball of radius one
  centered at the first vertex in $W$ appearing after $w_p$ (for the vertices
  of $N(u)$ in the circular order topologically around $u$ in the clockwise direction)
  realizes $X$, and it is the ball returned by
  $\beta(Y)$.  If $Y^- = (t,z)$, then by the definition of $\alpha$,
  there exists $w \in W$ such that $B_1(w)\cap X = X^+$ and $z$ is
  in the ball of radius~$1$ centered at the predecessor of $w$ in $W$. Since
  $\beta(Y)$ returns $B_1(w)$ in this case, we are done.
\end{proof}

\section{Hyperbolic graphs}\label{sec:hyperbolic}

A \emph{$(\rho, \mu)$-approximate proper labeled sample compression scheme of
size $k$} for the family of balls $\B(G)$ of a graph $G$ 
compresses any realizable sample $X$ to a subsample $\alpha(X)$ of support of size $k$, such that
$\beta(\alpha(X))$ is a ball $B_{r}(x)$ such that $X^+ \subseteq B_{r+\rho}(x)$
and $X^- \cap B_{r-\mu}(x) = \varnothing$. Let $(V,d)$ be a metric space and
$w\in V$.
Let $\delta\ge 0$. A metric space $(X,d)$ is \emph{$\delta$-hyperbolic}
\cite{Gr}
if, for any four points $u,v,x,y$ of $X$, the two larger of the
sums $d(u,v)+d(x,y)$, $d(u,x)+d(v,y)$, and $d(u,y)+d(v,x)$, differ by at most
$2\delta \geq 0$. Hyperbolic metric spaces and graphs play an important role in geometric group theory,
geometry of negatively curved spaces, and have become of interest in network science.
Next, we show that $\delta$-hyperbolic graphs admit a ($2\delta,
3\delta)$-approximate labeled sample compression scheme of size $2$.

An interval $I(u,v)$ of a graph 
is \emph{$\nu$-thin} if $d(x,y) \le \nu$ for any two points $x,y \in I(u,v)$ with
$d(u,x) = d(u,y)$ and $d(v,x) = d(v,y)$. 
Intervals of $\delta$-hyperbolic 
graphs are $2\delta$-thin. A metric space  $(X,d)$ is {\it injective} if, whenever $X$ is
isometric to a subspace $Z$ of a metric space $(Y,d')$, 
there is a map $f:Y\rightarrow Z$ such that
$f(z)=z$ for any $z\in Z$ and $d'(f(x),f(y)) \le d'(x,y)$ for any $x,y\in Y$.
By a construction of Isbell~\cite{Is} (rediscovered by Dress~\cite{Dr}), any metric space $(V,d)$ has an {\it injective hull} $E(V)$, {\it i.e.}, the smallest injective metric space into which $(V,d)$ isometrically embeds.
Lang~\cite{La} proved that the injective hull of a $\delta$-hyperbolic space is
$\delta$-hyperbolic. It was shown in~\cite{Dr} that the injective hull $T :=
T(u,v,y,w)$ of a metric space on 4
points $u,v,y,w$ is a rectangle $R := R(u',v',y',w')$ with four attached tips
$uu', vv',yy',ww'$ (one or several
tips may reduce to a single point or $R$ may reduce to a
segment or a single point). The smallest side of $R$ is exactly the
hyperbolicity of the quadruplet $u,v,y,w$.

Let $X$ be a realizable sample of $\B(G)$ and $\{u^+,v^+\}$ be a diametral pair of $X^+$.
Let  $B_{r^*}(y)$ be a ball of smallest radius such that $X^+ \subseteq B_{r^*}(y)$ and $X^- \cap B_{r^*}(y) = \varnothing$.
We set $\alpha(X) := \varnothing$ if $X^+ = \varnothing$, $\alpha(X) := X^+$ if
$|X^+| = 1$, and $\alpha(X) := \{u^+, v^+\}$ if $|X^+| \ge 2$. Given a subset $Y$ of size at
most 2, the reconstructor returns $\beta(Y) = \varnothing$ if $Y = \varnothing$, $\beta(Y) = B_0(y)$
if $Y = \{y\}$, and $\beta(Y) = B_{d(y_1,y_2)/2}(x)$ if $Y = \{y_1,y_2\}$,
where $x$ is the middle of a $(y_1,y_2)$-geodesic.




\begin{proposition}
    \label{prop:LSCS_delta-hyperbolic}
    For any $\delta$-hyperbolic graph $G = (V,E)$,  the pair $(\alpha, \beta)$
    defines a $(2\delta, 3\delta)$-approximate proper labeled sample
    compression scheme of size $2$ for $\B(G)$.
\end{proposition}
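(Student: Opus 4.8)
The plan is to verify the two approximate containment conditions separately, tracking the geometry through the injective hull of a well-chosen quadruple of points. Write $Y = \alpha(X)$ and $B_r(x) = \beta(Y)$. The degenerate cases $X^+ = \varnothing$ and $|X^+| = 1$ are immediate from the definitions of $\alpha$ and $\beta$, so assume $|X^+| \ge 2$, let $\{u^+,v^+\}$ be the chosen diametral pair of $X^+$, and let $B_{r^*}(y^*)$ be a smallest ball realizing $X$. Then $Y = \{u^+,v^+\}$, $x$ is the midpoint of a $(u^+,v^+)$-geodesic, and $r = d(u^+,v^+)/2$. Note first that $r = d(u^+,v^+)/2 \le r^*$, since $B_{r^*}(y^*)$ contains both $u^+$ and $v^+$, so $d(u^+,v^+) \le 2r^*$.

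First I would prove that $X^- \cap B_{r - 3\delta}(x) = \varnothing$. Suppose some $z \in X^-$ lies in $B_{r-3\delta}(x)$, so $d(x,z) \le r - 3\delta$. Consider the injective hull $T := T(u^+, v^+, z, y^*)$ of these four points, which by Dress~\cite{Dr} is a rectangle $R$ with four attached tips, and which is $\delta$-hyperbolic by Lang~\cite{La}; the smallest side of $R$ equals the hyperbolicity constant of the quadruple, hence is at most $\delta$. Because $x$ is a midpoint of a $(u^+,v^+)$-geodesic and intervals of $\delta$-hyperbolic graphs are $2\delta$-thin, $x$ is within $2\delta$ of the corresponding midpoint of the $(u^+,v^+)$-geodesic inside $T$. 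Inside $T$, using that $d_T(u^+,v^+) = d(u^+,v^+) = 2r$ and that $R$ is thin, I would estimate $d(y^*, z)$ from below: since $z \in X^-$ we have $d(y^*,z) > r^*$, and since $u^+,v^+ \in B_{r^*}(y^*)$ the point $y^*$ is at distance at least $r = d(u^+,v^+)/2$ from (a point near) $x$. Combining $d(x,z) \le r - 3\delta$ with the $2\delta$-thinness estimates and the fact that $y^*$'s tip attaches to $R$ within distance $\delta$ of the geodesic, I expect to derive $d(y^*,z) \le r^* $, contradicting $z \in X^-$. The bookkeeping of which side of $R$ each tip attaches to, and of the accumulated $\delta$-errors, is where the argument must be done carefully.

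Next I would prove that $X^+ \subseteq B_{r + 2\delta}(x)$. Take any $w \in X^+$ and consider the injective hull $T' := T(u^+, v^+, w, x)$; again it is a rectangle $R'$ with tips, $\delta$-hyperbolic, with smallest side at most $\delta$. Since $\{u^+,v^+\}$ is a diametral pair of $X^+$, we have $d(u^+, w) \le d(u^+,v^+)$ and $d(v^+, w) \le d(u^+,v^+)$. Working inside $T'$, the point $x$ being (within $2\delta$ of) the midpoint of the $(u^+,v^+)$-geodesic means $d_{T'}(x, u^+)$ and $d_{T'}(x, v^+)$ are both close to $r$. The worst case for $d(x,w)$ is when $w$'s tip attaches to $R'$ far from the midpoint; but the diametral constraint $d(u^+,w), d(v^+,w) \le 2r$ forces the attaching point of $w$ to lie within the rectangle in a way that keeps it within $r$ of the midpoint up to the thinness error. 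I would conclude $d(x,w) \le r + 2\delta$, using the $2\delta$-thinness of the interval $I(u^+,v^+)$ one more time to pass from the midpoint in $T'$ to the actual $x$.

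The main obstacle I anticipate is the second step's bound being exactly $2\delta$ rather than some larger multiple: one must set up the four-point rectangle so that the diametral hypothesis $d(u^+,w),d(v^+,w)\le d(u^+,v^+)$ pins down the projection of $w$ onto the $(u^+,v^+)$-geodesic tightly, and then the only slack is the single $2\delta$-thinness comparison between the midpoint inside the injective hull and the actual midpoint $x$ in $G$. Getting the constants $2\delta$ and $3\delta$ (and not, say, $4\delta$) will require choosing the auxiliary quadruples so that each estimate uses the thinness bound at most once or twice, and reusing the inequality $d(u^+,v^+) \le 2r^*$ to absorb errors in the negative-side argument.
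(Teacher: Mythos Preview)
Your plan is on the right track, and for the $X^-$ half it matches the paper's strategy: the injective hull of $\{u^+,v^+,y^*,z\}$ is exactly what the paper uses, and the ``bookkeeping'' you defer is in fact a three-case analysis on which pair of tips attaches to opposite corners of the rectangle $R$. This case split is the actual substance of the argument and cannot be finessed; in each configuration one compares $d(w',w)$ to $d(v',v^+)$ (using $w\notin B_{r^*}(y^*)$ and $v^+\in B_{r^*}(y^*)$), uses that the short side of $R$ has length at most $\delta$, and then spends one $2\delta$ of interval thinness to pass from a midpoint $x'$ on the $(u^+,v^+)$-geodesic inside $T$ to the actual midpoint $x$ in $G$. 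Without committing to this case analysis your sketch does not yet yield the constant~$3\delta$.

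For $X^+$, however, you are working much harder than necessary. The paper does not touch an injective hull here: since $d(x,u^+)=d(x,v^+)=r$, $d(u^+,v^+)=2r$, and $d(u^+,w),d(v^+,w)\le 2r$, the four-point condition applied directly to $u^+,v^+,w,x$ gives that the sums $d(u^+,w)+d(x,v^+)$ and $d(v^+,w)+d(x,u^+)$ are each at most $3r$, while $d(u^+,v^+)+d(x,w)=2r+d(x,w)$; hence $d(x,w)\le r+2\delta$ in two lines. Your injective-hull route would also work (it is equivalent, since the four-point inequality encodes the rectangle), but introducing $T'$ and attaching points is unnecessary overhead. There is also a small confusion in your sketch: because $d(u^+,x)+d(x,v^+)=d(u^+,v^+)$, the point $x$ already lies on a $(u^+,v^+)$-geodesic inside $T'$, so no separate thinness step is needed to ``pass from the midpoint in $T'$ to the actual $x$''.
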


\begin{proof}
    We first show that $X^+ \subseteq B_{r + 2\delta}(x)$, where $r =
    d(u^+,v^+)/2$ and $x$ is a middle of a $(u^+,v^+)$-geodesic. Pick any $w\in X^+$.
    Since $u^+,v^+$ is a diametral pair of $X^+$, $d(u^+,w)\le 2r$ and $d(v^+,w)\le 2r$. We also have $d(u^+,v^+)=2r$ and $d(x,u^+)=d(x,v^+)=r$.
    Thus, the three distance sums have the form $d(u^+,w)+d(x,v^+)\le 3r$, $d(v^+,w)+d(x,u^+)\le 3r,$ and $d(u^+,v^+)+d(x,w)=2r+d(x,w)$. By the definition of
    $\delta$-hyperbolicity, we conclude that either $d(x,w)\le r$ (if $d(u^+,v^+)+d(x,w)$ is at most $3r$)  or $d(x,w)\le r+2\delta$
    (if $d(u^+,v^+)+d(x,w)$ is the largest sum).
    Hence, $w\in B_{r+2\delta}(x)$.
We now show that $X^- \cap B_{r - 3\delta}(x) = \varnothing$. 
    Pick $w \in X^-$ and consider the injective hull $T$ of the
    points $\{u^+, v^+, y, w\}$. 
    $T$ is a rectangle $R$ with four tips (see Fig.
    \ref{fig:LSCS_delta-hyperbolic}) and is a subspace of the
    injective hull $E(V)$. Since $w \in X^-$, $w \notin B_{r^*}(y)$. Since $u^+, v^+ \in
    B_{r^*}(y)$, we deduce that $d(y,w) > d(y,u^+)$ and $d(y,w) > d(y,v^+)$.
    Let $x'$ be a point of $I(u^+,v^+) \cap T$ such that $d(u^+, x') = d(u^+,
    x) = r$ and $d(v^+, x') = d(v^+, x) = r$. Since the injective hull $T$ is
    $\delta$-hyperbolic, its intervals are $2\delta$-thin, and thus, $d(x,x') \le
    2\delta$.


    \smallskip\noindent
    \textbf{Case 1.} $u^+$, $v^+$, $y$, and $w$ are as
    in Fig.~\ref{fig:LSCS_delta-hyperbolic}(1).
    First, suppose that $x'$ belongs to the tip between $u^+$ and $u'$ or
    to the segment between $u'$ and $v'$.
    Since $y'$ and $w'$ belong to a common geodesic from $y$ to $w$ and from
    $y$ to $v^+$, and since $v^+ \in B_{r^*}(y)$ and $w \notin B_{r^*}(y)$, we
    deduce that $d(w,w') > d(w',v^+) \ge d(v',v^+)$.
    Consequently, $d(v',w) > d(v',v^+)$.
    If $x'$ is located on the tip between $u^+$ and $u'$ or on the segment
    between $u'$ and $v'$, then, since $r = d(x',v^+) = d(x',v') + d(v',v^+)$
    and
    $d(x',w) = d(x',v') + d(v',w)$, we obtain that $w \notin
    B_{r}(x')$.
    Since $d(x,x') \le 2\delta$, $w \notin B_{r - 2\delta}(x)$.
    If $x'$ belongs to the tip between $v'$ and $v^+$, then $r = d(x',v^+) \le
    d(v',v^+) \le d(v',w)$, 
    whence $w \notin
    B_{r}(x')$ and $w \notin B_{r - 2\delta}(x)$.

    \smallskip\noindent
    \textbf{Case 2.} $u^+$ and $v^+$, and $y$ and $w$ are opposite in $T$
    as in Fig.~\ref{fig:LSCS_delta-hyperbolic}(2).
    Consider $x'$ to be on the boundary of $T$ containing the vertices
    $u'$, $w'$, and $v'$.
    Since $v^+ \in B_{r^*}(y)$ and $w \notin B_{r^*}(y)$, then
    $d(v',w') + d(w',w) > d(v',v^+)$. Note also that $d(v',w') \le \delta$,
    and thus, $d(w,v') > d(v',v^+) - \delta$.
    Independently of the location of $x'$ on the boundary of $T$,
    $w\notin B_{r - \delta}(x')$.
    Thus, $w \notin B_{r - 3\delta}(x)$.

    \smallskip\noindent
    \textbf{Case 3.} $u^+$, $v^+$, $y$, and $w$ are as in
    Fig.~\ref{fig:LSCS_delta-hyperbolic}(3).
     Since $w'$ belongs to a geodesic between $y$ and $w$ and between
    $y$ and $v^+$, and $w \notin B_{r^*}(y), v^+ \in B_{r^*}(y)$, we
    deduce that $d(w',w) > d(w',v') + d(v',v^+) \ge d(v',v^+)$.
    Independently of the location of $x'$ on the boundary of $T$, we obtain that $w \notin
    B_{r - 2\delta}(x)$.
\end{proof}
\begin{figure}
    \centering
    \includegraphics[width=0.751\linewidth]{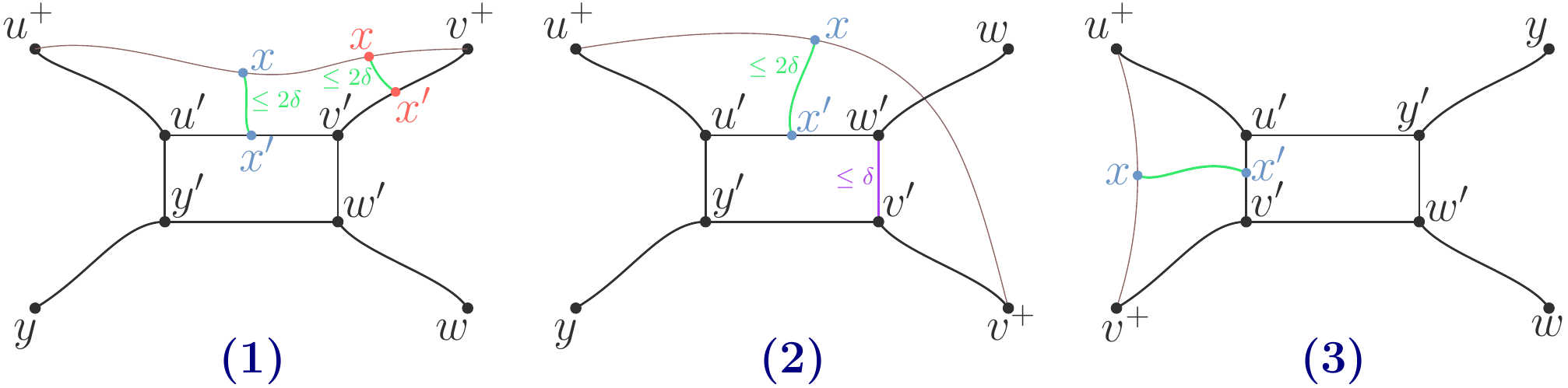}
\caption{
        \label{fig:LSCS_delta-hyperbolic}
         Cases 1-3 of Proposition \ref{prop:LSCS_delta-hyperbolic}. 
    }
\end{figure}
\section{Perspectives}
A first direction of interest would be to investigate the optimality of the sizes of our proper labeled sample compression schemes for the family of all balls in trees of cycles, interval graphs, and cube-free median graphs. Surely there must exist proper labeled sample compression schemes for cube-free median graphs of size smaller than $22$, but we wonder if such schemes of smaller size also exist for trees of cycles and interval graphs.
Another direction is to design proper sample compression schemes for balls of radius $r$ in trees of cycles or cube-free median graphs. Designing sample compression schemes of size $O(d)$ for balls in general median graphs $G$ of dimension $d$
is also open, as well as whether the VC-dimension of $\B(G)$ is  $O(d)$ or not. 
For general median graphs, it no longer holds that the interval between a diametral pair of $X^+$ contains a center of a ball
realizing 
$X$. However, one can show that $X^+$ contains $2d$ vertices whose convex hull
contains such a center. This convex hull can be $d$-dimensional and it is unclear how to encode the center in this region.
%

Other open questions are to design proper sample compression schemes of constant size for balls of planar graphs and of size $O(\omega(G))$ for balls of a chordal graph $G$. We showed that the former is possible for balls of radius~1 in Section~\ref{sec:planar}, and that the latter is possible for split graphs in Section~\ref{sec:split}.
Finding proper sample compression schemes of
size $O(\omega(G))$ for $\B(G)$  is also interesting for other classes of graphs from metric graph theory:
bridged graphs (generalizing chordal graphs) 
and Helly graphs; for their definitions and characterizations,
see 
\cite{BaCh_survey}.

\section*{Acknowledgements}
We are grateful to the anonymous referee for their careful reading and useful comments. This work has been supported by the ANR project DISTANCIA (ANR-17-CE40-0015), the European Research Council (ERC) consolidator grant No.~725978 SYSTEMATICGRAPH, and the Austrian Science Foundation (FWF, project Y1329).

\bibliographystyle{amsalpha}

\end{document}